\let\diag=\relax
\newtheorem{aside}[theorem]{Aside}
\let\oldthebibliography=\thebibliography
\let\oldendthebibliography=\endthebibliography
\renewcommand{\cite}{\citep}
\let\thebibliography=\oldthebibliography
\let\endthebibliography=\oldendthebibliography
\renewcommand*{\backref}[1]{}%
\renewcommand*{\backrefalt}[4]{%
  \ifcase #1 %
    No citations.%
  \or
    Cited on page #2.%
  \else
    Cited on pages #2.%
  \fi
}%
\newcommand{\mPbar}{\mat{\bar{\mP}}}
\newcommand{\mPsub}{\mat{\bar{\mP}}}
\newcommand{\vfhat}{\vec{\hat{f}}}
\title{PageRank Beyond the Web}
\author{David F.~Gleich}
\begin{document}

\maketitle

\begin{abstract}
 Google's PageRank method was developed to evaluate
 the importance of web-pages via their link structure.
 The mathematics of PageRank, however, are entirely
 general and apply to any graph or network in any 
 domain. Thus, PageRank is now regularly used 
 in bibliometrics, social and information network
 analysis, and for link prediction and recommendation.
 It's even used for systems analysis of road networks,
 as well as biology, 
 chemistry, neuroscience, and physics. We'll see the 
 mathematics and ideas that unite these diverse applications.
\end{abstract}

\begin{keywords} 
PageRank, Markov chain
\end{keywords}

\pagestyle{myheadings}
\thispagestyle{plain}
\markboth{D.~F.~Gleich}{PAGERANK BEYOND THE WEB}

\section{Google's PageRank}
Google created PageRank to address a problem they encountered with their search engine for the world wide web~\cite{brin1998-anatomy,page1999-pagerank}. Given a search query from a user, they could immediately find an immense set of web pages that contained virtually the exact same words as the user entered. Yet, they wanted to incorporate a measure of a page's importance into these results to distinguish highly recognizable and relevant pages from those that were less well known. To do this, Google designed a system of scores called PageRank that used the link structure of the web to determine which pages are important. While there are many derivations of the PageRank equations~\cite{langville2006-book,pan2004-cross-modal-discovery,higham2005-pinball}, we will derive it based on a hypothetical random web surfer. Upon visiting a page on the web, our random surfer tosses a coin. If it comes up heads, the surfer randomly clicks a link on the current page and transitions to the new page. If it comes up tails, the surfer \emph{teleports} to a -- possibly random -- page independent of the current page's identity. Pages where the random surfer is more likely to appear based on the web's structure are more important in a PageRank sense. 

More generally, we can consider random surfer models on a graph with an arbitrary set of nodes, instead of pages, and transition probabilities, instead of randomly clicked links. The teleporting step is designed to model an external influence on the importance of each node and can be far more nuanced than a simple random choice. Teleporting \emph{is} the essential distinguishing feature of the PageRank random walk that had not appeared in the literature before~\cite{Vigna-2009-spectral}. It ensures that the resulting importance scores always exist and are unique. It also makes the PageRank importance scores easy to compute. 

These features: simplicity, generality, guaranteed existence, uniqueness, and fast computation are the reasons that PageRank is used in applications far beyond its origins in Google's web-search. (Although, the success that Google achieved no doubt contributed to additional interest in PageRank!)  In biology, for instance, new microarray experiments churn out thousands of genes relevant to a particular experimental condition. Models such as GeneRank~\cite{morrison2005-generank} deploy the exact same motivation as Google, and almost identical mathematics in order to assist biologists in finding and ordering genes related to a microarray experiment or related to a disease. Throughout our review, we will see applications of PageRank to biology, chemistry, ecology, neuroscience, physics, sports, and computer systems. 

Two uses underlie the majority of PageRank applications. In the first, PageRank is used as a network centrality measure~\cite{koschutzki2005-centrality}. A network centrality score yields the importance of each node in light of the entire graph structure. And the goal is to use PageRank to help understand the graph better by focusing on what PageRank reveals as important. It is often compared or contrasted with a host of other centrality or graph theoretic measures. 
These applications tend to use global, near-uniform teleportation behaviors.

In the second type of use, PageRank is used to illuminate a region of a large graph around a target set of interest; for this reason, we call the second use a localized measure. It is also called personalized PageRank based on PageRank's origins in the web. Consider a random surfer in a large graph that periodically teleports back to a single start node. If the teleportation is sufficiently frequent, the surfer will never move far from the start node, but the frequency with which the surfer visits nodes before teleporting reveals interesting properties of this localized region of the network.  Because of this power, teleportation behaviors are much more varied for these localized applications.  

\section{The mathematics of PageRank} \label{sec:math}
There are many slight variations on the PageRank problem, yet there is a core definition that applies to the almost all of them. It arises from a generalization of the random surfer idea.  Pages where the random surfer is likely to appear have large values in the stationary distribution of a Markov chain that, with probability $\alpha$, randomly transitions according to the link structure of the web, and with probability $1-\alpha$ teleports according to a \emph{teleportation distribution vector} $\vv$, where $\vv$ is usually a uniform distribution over all pages. In the generalization, we replace the notion of ``transitioning according to the link structure of the web'' with ``transitioning according to a stochastic matrix $\mP$.'' This simple change divorces the mathematics of PageRank from the web and forms the basis for the applications we discuss. Thus, it abstracts the random surfer model from the introduction in a relatively seamless way. Furthermore, the vector $\vv$ is a critical modeling tool that distinguishes between the two typical uses of PageRank. For centrality uses, $\vv$ will resemble a uniform distribution over all possibilities; for localized uses, $\vv$ will focus the attention of the random surfer on a region of the graph.

Before stating the definition formally, let us fix some notation. Matrices and vectors are written in bold, Roman letters ($\mA, \vx$), scalars are Greek or indexed, unbold Roman $(\alpha, A_{i,j})$. The vector $\ve$ is the column vector of all ones, and all vectors are column vectors. 
 
Let $P_{i,j}$ be the probability of transitioning from page $j$ to page $i$. (Or more generally, from ``thing $j$'' to ``thing $i$''.) The stationary distribution of the PageRank Markov chain is called the PageRank vector $\vx$. It is the solution of the eigenvalue problem: 
\begin{equation} \label{eq:pr-eigen}
(\alpha \mP + (1-\alpha) \vv \ve^T) \vx = \vx. 
\end{equation}
Many take this eigensystem as the definition of PageRank~\cite{langville2006-book}. We prefer the following definition instead: 
\begin{definition}[The PageRank Problem] \label{def:pagerank} Let $\mP$ be a column-stochastic matrix where all entries are non-negative and the sum of entries in each column is 1. Let $\vv$ be a column stochastic vector $(\ve^T \vv = 1)$, and let $0 < \alpha < 1$ be the teleportation parameter. Then the PageRank problem is to find the solution of the linear system 
\begin{equation} \label{eq:pr}
(\mI - \alpha \mP) \vx = (1-\alpha) \vv,
\end{equation}
where the solution $\vx$ is called the PageRank vector. 
\end{definition}

The eigenvector and linear system formulations are equivalent if we seek an eigenvector $\vx$ of \eqref{eq:pr-eigen} with $\vx \ge 0$ and $\ve^T \vx = 1$, in which case: 
\[ \vx = \alpha \mP \vx + (1-\alpha) \vv \ve^T \vx = \alpha \mP \vx + (1-\alpha) \vv \quad \Leftrightarrow \quad (\mI - \alpha \mP) \vx = (1-\alpha) \vv. \]
We prefer the linear system because of the following reasons. In the linear system setup, the existence and uniqueness of the solution is immediate: the matrix $\eye - \alpha \mP$ is a diagonally dominant M-matrix. The solution $\vx$ is non-negative for the same reason.  Also, there is only one possible normalization of the solution: $\vx \ge 0$ and $\ve^T \vx = 1$. Anecdotally, we note that, among the strategies to solve PageRank problems, those based on the linear system setup are both more straightforward and more effective than those based on the eigensystem approach. And in closing, \citet{page1999-pagerank} describe an iteration more akin to a linear system than an eigenvector.

Computing the PageRank vector $\vx$ is simple. The humble iteration
\[ \vx\itn{k+1} = \alpha \mP \vx\itn{k} + (1-\alpha) \vv \qquad \text{ where } \qquad \vx\itn{0} = \vv \text{ or } \vx\itn{0} = 0 \]
is equivalent both to the power method on \eqref{eq:pr-eigen} and the Richardson method on \eqref{eq:pr}, and more importantly, it has excellent convergence properties when $\alpha$ is not too close to 1. To see this fact, note that the true solution $\vx = \alpha \mP \vx + (1-\alpha) \vv$ and consider the error after a single iteration: 
\[ \vx - \vx\itn{k+1} = \underbrace{[\alpha \mP \vx + (1-\alpha) \vv]}_{\text{the true solution $\vx$}}  - \underbrace{[\alpha \mP \vx\itn{k} + (1-\alpha) \vv]}_{\text{the updated iterate $\vx\itn{k+1}$}} = \alpha \mP (\vx - \vx\itn{k}). \]
Thus, the following theorem characterizes the error after $k$ iterations from two different starting conditions:
\begin{theorem} \label{thm:pagerank-error}
Let $\alpha, \mP, \vv$ be the data for a PageRank problem to compute a PageRank vector $\vx$. Then the error after $k$ iterations of the update $\vx\itn{k+1} = \alpha \mP \vx\itn{k} + (1-\alpha) \vv$ is: 
\begin{compactenum} 
\item if $\vx\itn{0} = \vv$,  then $\normof[1]{\vx - \vx\itn{k}} \le \normof[1]{\vx - \vv} \alpha^k \le 2 \alpha^k$; or 
\item if $\vx\itn{0} = 0$, then the error vector $\vx - \vx\itn{k} \ge 0$ for all $k$ and $\normof[1]{\vx - \vx\itn{k}} = \ve^T (\vx - \vx\itn{k}) = \alpha^k$.
\end{compactenum}
\end{theorem}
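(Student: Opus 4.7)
The starting point is the one-step error recursion derived immediately above the theorem, $\vx - \vx\itn{k+1} = \alpha \mP (\vx - \vx\itn{k})$. Iterating this $k$ times gives the closed form $\vx - \vx\itn{k} = (\alpha \mP)^k (\vx - \vx\itn{0})$, and the two parts of the theorem follow by plugging in the two initializations and taking the appropriate norm. The only structural fact I need about $\mP$ is that it is column-stochastic, which has two consequences I will use repeatedly: first, $\|\mP \vy\|_1 \le \|\vy\|_1$ for every $\vy$ (hence $\|\mP\|_1 = 1$); second, $\ve^T \mP = \ve^T$, so $\ve^T \mP^k = \ve^T$ for every $k$.

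For part (i), substituting $\vx\itn{0} = \vv$ into the closed form and taking the 1-norm gives $\normof[1]{\vx - \vx\itn{k}} = \normof[1]{(\alpha \mP)^k(\vx-\vv)} \le \alpha^k \normof[1]{\vx-\vv}$ by induction on $k$ using the column-stochastic 1-norm contraction. The bound by $2\alpha^k$ then follows from the triangle inequality together with the fact, noted in the paragraph preceding the theorem, that the PageRank solution satisfies $\vx \ge 0$ and $\ve^T \vx = 1$; combined with the hypothesis that $\vv$ is likewise a probability vector, we obtain $\normof[1]{\vx-\vv} \le \normof[1]{\vx} + \normof[1]{\vv} = 2$.

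For part (ii), substituting $\vx\itn{0} = 0$ gives the cleaner identity $\vx - \vx\itn{k} = (\alpha \mP)^k \vx$. Because $\mP$ has non-negative entries, $\alpha > 0$, and $\vx \ge 0$, every iterate of this product is coordinatewise non-negative, establishing the sign claim. Non-negativity lets me replace the 1-norm by the sum of entries, so $\normof[1]{\vx - \vx\itn{k}} = \ve^T(\alpha \mP)^k \vx = \alpha^k (\ve^T \mP^k)\vx = \alpha^k \ve^T \vx = \alpha^k$, where the third equality uses $\ve^T \mP^k = \ve^T$ and the last uses $\ve^T \vx = 1$.

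There is no real obstacle; the only point worth flagging is why part (ii) yields an exact equality while part (i) only gives an inequality. The reason is that when the initial error has mixed signs (as $\vx - \vv$ generally does), the sub-multiplicativity of the 1-norm under $\mP$ is typically strict due to cancellation, whereas when the initial error is a non-negative vector the 1-norm is preserved exactly by any column-stochastic matrix. This is precisely the advantage of starting from $\vx\itn{0} = 0$ that makes the equality in (ii) possible.
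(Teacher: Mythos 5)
Your proof is correct and follows exactly the route the paper intends: it iterates the one-step error recursion $\vx - \vx\itn{k+1} = \alpha \mP (\vx - \vx\itn{k})$ derived just before the theorem, then uses column-stochasticity of $\mP$ for the $1$-norm contraction in part (i) and non-negativity plus $\ve^T \mP = \ve^T$ for the exact identity in part (ii). The paper leaves these details implicit, so your write-up simply fills in the argument the author had in mind; your closing remark on why (ii) is an equality while (i) is only a bound is a correct and worthwhile addition.
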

Common values of $\alpha$ range between $0.1$ and $0.99$; hence, in the worst case, this method needs at most $3656$ iterations to converge to a global $1$-norm error of $2^{-52} \approx 10^{-16}$ (because $\alpha^{3656} \le 2^{-53}$ to account for the possible factor of $2$ if starting from $\vx\itn{0} = \vv$). For the majority of applications we will see, the matrix $\mP$ is sparse with fewer than $10,000,000$ non-zeros; and thus, these solutions can be computed efficiently on a modern laptop computer. 

\begin{aside}
Although this theorem seems to suggest that $\vx\itn{0} = 0$ is a superior choice, practical experience suggests that starting with $\vx\itn{0} = \vv$ results in a faster method. This may be confirmed by using a computable bound on the error based on the residual. Let $\vr\itn{k} = (1-\alpha) \vv - (\mI - \alpha \mP) \vx\itn{k} = \vx\itn{k+1} - \vx\itn{k}$ be the residual after $k$ iterations. We can use  $\normof[1]{\vx - \vx\itn{k}} = \normof[1]{(\mI - \alpha \mP)^{-1} \vr\itn{k}} \le \frac{1}{1-\alpha} \normof[1]{\vr\itn{k}}$ in order to check for early convergence. 
\end{aside}

This setup for PageRank, where the choice of $\mP$, $\vv$, and $\alpha$ vary by application, applies broadly as the subsequent sections show. However, in many descriptions, authors are not always careful to describe their contributions in terms of a column stochastic matrix $\mP$ and distribution vector $\vv$. Rather, they use the following pseudo-PageRank system instead:
\begin{definition}[The pseudo-PageRank problem] Let $\mPbar$ be a column sub-stochastic matrix where $\bar{P}_{i,j} \ge 0$ and $\ve^T \mPsub \le \ve^T$ element-wise. Let $\vf$ be a non-negative vector, and let $0 < \alpha < 1$ be a teleportation parameter. Then the pseudo-PageRank problem is to find the solution of the linear system
\begin{equation} \label{eq:pseudo-pr}
(\mI - \alpha \mPbar) \vy = \vf
\end{equation}
where the solution $\vy$ is called the pseudo-PageRank vector.
\end{definition}

Again, the pseudo-PageRank vector always exists and is unique because $\mI- \alpha \mPbar$ is also a diagonally dominant M-matrix. \Citet{boldi2007-traps} was the first to formalize this definition and distinction between PageRank and pseudo-PageRank, although they used the term PseudoRank and the normalization $(\mI - \alpha \mPbar) \vy = (1-\alpha) \vf$; some advantages of this alternative form are discussed in Section~\ref{sec:limit}. The two problems are equivalent in the following formal sense (which has an intuitive understanding explained in Section~\ref{sec:standard-rw}, Strongly Preferential PageRank): 
\begin{theorem} \label{thm:pagerank-equiv}
Let $\vy$ be the solution of a pseudo-PageRank system with $\alpha $, $\mPbar$ and $\vf$. Let $\vv = \vf/(\ve^T \vf)$. Then if $\vy$ is renormalized to sum to $1$, that is $\vx = \vy / (\ve^T \vy)$, then $\vx$ is the solution of a PageRank system with $\alpha$, $\mP = \mPbar + \vv \vc^T$, and $\vv$, where $\vc^T = \ve^T - \ve^T \mPsub \ge 0$ is a  correction vector to make $\mPsub$ stochastic. 
\end{theorem}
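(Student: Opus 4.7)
The plan is to verify two things: first, that the proposed $\mP = \mPbar + \vv \vc^T$ is a legitimate column-stochastic matrix so that the PageRank problem on the right-hand side of the theorem is well-posed; and second, that the renormalization $\vx = \vy/(\ve^T \vy)$ actually satisfies $(\mI - \alpha \mP)\vx = (1-\alpha)\vv$. The positivity $\ve^T \vy > 0$ needed for the normalization is immediate: the M-matrix $\mI - \alpha \mPbar$ has a nonnegative inverse, so $\vy \ge 0$, and $\vy$ cannot be zero unless $\vf$ is (in which case the hypothesis $\vv = \vf/(\ve^T \vf)$ would be vacuous).

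For the first step I would observe that $\vc \ge 0$ by the sub-stochasticity of $\mPbar$ and $\vv \ge 0$ with $\ve^T \vv = 1$, so $\mP \ge 0$ and $\ve^T \mP = \ve^T \mPbar + (\ve^T \vv)\vc^T = \ve^T \mPbar + \vc^T = \ve^T$, which is exactly column-stochasticity.

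For the second step, I would substitute the definition of $\mP$ directly into $(\mI - \alpha\mP)\vy$, giving
\[
(\mI - \alpha\mP)\vy \;=\; (\mI - \alpha\mPbar)\vy - \alpha(\vc^T \vy)\,\vv \;=\; \vf - \alpha(\vc^T \vy)\,\vv,
\]
using the pseudo-PageRank equation in the last step. Since $\vf = (\ve^T \vf)\vv$, the entire right-hand side is a scalar multiple of $\vv$; dividing by $\ve^T \vy$ then shows $(\mI - \alpha \mP)\vx$ is some explicit scalar times $\vv$. The whole proof therefore reduces to identifying that scalar as $1-\alpha$.

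The only real work, and the step I expect to be the mild obstacle, is pinning down $\vc^T \vy$ in terms of the other quantities. I would obtain it by applying $\ve^T$ to the pseudo-PageRank equation to get $\ve^T \vy - \alpha\,\ve^T\mPbar\vy = \ve^T \vf$, then using $\vc^T \vy = \ve^T \vy - \ve^T \mPbar \vy$ to eliminate $\ve^T \mPbar \vy$. Substituting the resulting expression for $\alpha(\vc^T \vy)$ back into the displayed equation and dividing by $\ve^T \vy$ produces $(1-\alpha)\vv$ on the right after straightforward cancellation. Because $\ve^T \vx = 1$ by construction and $\vx \ge 0$, the vector $\vx$ is exactly the PageRank vector associated with $\alpha$, $\mP$, and $\vv$, completing the equivalence.
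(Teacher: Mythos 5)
Your proposal is correct, and the algebra checks out: applying $\ve^T$ to the pseudo-PageRank equation gives $\alpha \vc^T \vy = (\alpha-1)\ve^T\vy + \ve^T\vf$, so indeed $(\mI-\alpha\mP)\vy = \vf - \alpha(\vc^T\vy)\vv = (1-\alpha)(\ve^T\vy)\vv$, and dividing by $\ve^T\vy$ finishes the job. Your argument uses the same rank-one manipulation as the paper's proof but runs it in the opposite direction. The paper starts from the PageRank solution $\vx$, expands $\vx = \alpha\mPbar\vx + \alpha\vv\vc^T\vx + (1-\alpha)\vv$ to conclude $(\mI-\alpha\mPbar)\vx = \gamma\vf$ for a scalar $\gamma$ it never evaluates, invokes uniqueness of the pseudo-PageRank solution to get $\vx = \gamma\vy$, and then uses the automatic normalization $\ve^T\vx = 1$ to pin down $\gamma = 1/(\ve^T\vy)$. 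You instead start from $\vy$ and verify directly that $\vy/(\ve^T\vy)$ satisfies the PageRank equation, which obliges you to compute $\vc^T\vy$ explicitly --- the step you correctly identified as the only real work. Your route costs one extra line of algebra but is slightly more self-contained, since it never needs the fact that PageRank solutions sum to one (itself a one-line consequence of $\ve^T\mP=\ve^T$), and it has the minor virtue of producing the constant $(1-\alpha)$ by direct cancellation rather than by appeal to uniqueness. The stochasticity check for $\mP = \mPbar + \vv\vc^T$ and the positivity of $\ve^T\vy$ are handled exactly as the paper does (the latter the paper leaves implicit).
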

\begin{proof}
First note that $\alpha, \mP$, and  $\vv$ is a valid PageRank problem. This is because $\vf$ is non-negative and thus $\vv$ is column stochastic by definition, and also $\mP$ is column stochastic because  $\vc \ge 0$ (hence $\mP \ge 0$) and $\ve^T \mP = \ve^T \mPsub + \vc^T = \ve^T$.  Next, note that the solution of the PageRank problem for $\vx$ satisfies: 
\[ \vx = \alpha \mPsub \vx + \alpha \vv \vc^T \vx + (1-\alpha) \vv = \alpha \mPsub \vx + \gamma \vf \qquad \text{ where } \qquad \gamma = \frac{\alpha \vc^T \vx + (1-\alpha) }{ \ve^T \vf }. \]
Hence $(\mI - \alpha \mPsub) \vx = \gamma \vf$ and so $\vx = \gamma \vy$. But, we know that $\ve^T \vx = 1$ because $\vx$ is a solution of a PageRank problem, and the theorem follows. 
\end{proof}

The importance of this theorem is it shows that underlying any pseudo-PageRank system is a true PageRank system in the sense of Definition~\ref{def:pagerank}. The difference is entirely in terms of the normalization of the solution -- which was demonstrated by \citet{DelCorso-2004-sparse,berkhin2005-survey,delcorso2005-sparse-system}.  The result of Theorem~\ref{thm:pagerank-error} also applies to solving the pseudo-PageRank system, albeit with the following revisions:
\begin{theorem}
Let $\alpha, \mPsub, \vf$ be the data for a pseudo-PageRank problem to compute a pseudo-PageRank vector $\vy$. Then the error after $k$ iterations of the update $\vy\itn{k+1} = \alpha \mPsub \vy\itn{k} + \vf$ is: 
\begin{compactenum} 
\item if $\vy\itn{0} = \frac{1}{1-\alpha} \vf$, then  $\normof[1]{\vy - \vy\itn{k}} \le \normof[1]{\vy - \vf} \alpha^k \le \frac{2\ve^T \vf }{1-\alpha}    \alpha^k$; or 
\item if $\vy\itn{0} = 0$, then the error vector $\vy - \vy\itn{k} \ge 0$ for all $k$ and $\normof[1]{\vy - \vy\itn{k}} = \ve^T (\vy - \vy\itn{k}) \le \alpha^k$.
\end{compactenum}
\end{theorem}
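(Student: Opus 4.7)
The plan is to mirror the proof of Theorem~\ref{thm:pagerank-error}, now working with the sub-stochastic matrix $\mPsub$ and the unnormalized right-hand side $\vf$. The central observation carries over: subtracting the iteration $\vy\itn{k+1} = \alpha\mPsub\vy\itn{k} + \vf$ from the exact equation $\vy = \alpha\mPsub\vy + \vf$ yields the error recurrence
\[
\vy - \vy\itn{k+1} = \alpha \mPsub (\vy - \vy\itn{k}),
\qquad \text{hence} \qquad
\vy - \vy\itn{k} = (\alpha \mPsub)^k (\vy - \vy\itn{0}).
\]
Because $\mPsub$ is column sub-stochastic, $\ve^T \mPsub \le \ve^T$, so its induced $1$-norm is at most $1$, and the basic estimate $\normof[1]{\vy - \vy\itn{k}} \le \alpha^k \normof[1]{\vy - \vy\itn{0}}$ drives both cases.

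For part~(1), I would pass through the rescaled iterate $\vz\itn{k} = (1-\alpha)\vy\itn{k}$, which satisfies $\vz\itn{k+1} = \alpha \mPsub \vz\itn{k} + (1-\alpha)\vf$ with $\vz\itn{0} = \vf$. This has exactly the shape of the PageRank iteration of Theorem~\ref{thm:pagerank-error} (with $\vv$ replaced by $\vf$), so the same argument gives $\normof[1]{\vz - \vz\itn{k}} \le \alpha^k \normof[1]{\vz - \vf}$; dividing by $1-\alpha$ and using $\vz = (1-\alpha)\vy$ together with $\vy - \vf = \alpha\mPsub\vy$ converts this into the claimed bound in terms of $\normof[1]{\vy - \vf}$. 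To finish the chain of inequalities I would apply the triangle inequality $\normof[1]{\vy - \vf} \le \normof[1]{\vy} + \normof[1]{\vf}$, use $\normof[1]{\vf} = \ve^T \vf$ (since $\vf \ge 0$), and bound $\normof[1]{\vy} \le \ve^T\vf/(1-\alpha)$ via the Neumann expansion $\vy = \sum_{j\ge 0}\alpha^j \mPsub^j \vf$ together with $\ve^T \mPsub^j \vf \le \ve^T \vf$. The arithmetic $1 + 1/(1-\alpha) \le 2/(1-\alpha)$ produces the stated constant.

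For part~(2) with $\vy\itn{0} = 0$, I would first establish non-negativity of the error by induction: the base case $\vy - \vy\itn{0} = \vy \ge 0$ holds because every term in the Neumann series is non-negative, and if $\vy - \vy\itn{k} \ge 0$ then $\vy - \vy\itn{k+1} = \alpha \mPsub (\vy - \vy\itn{k}) \ge 0$ because $\mPsub \ge 0$. Non-negativity automatically gives $\normof[1]{\vy - \vy\itn{k}} = \ve^T(\vy - \vy\itn{k})$. Applying $\ve^T$ to the closed-form error $\vy - \vy\itn{k} = (\alpha \mPsub)^k \vy$ and iterating $\ve^T \mPsub \le \ve^T$ to obtain $\ve^T \mPsub^k \le \ve^T$ yields $\ve^T(\vy - \vy\itn{k}) \le \alpha^k \ve^T \vy$, which delivers the claimed $\alpha^k$ decay.

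The main obstacle I anticipate is keeping the constants in part~(1) aligned with the literal statement of the bound. A direct substitution of $\vy\itn{0} = \vf/(1-\alpha)$ into the abstract error formula gives $\normof[1]{\vy - \vy\itn{k}} \le \alpha^k \normof[1]{\vy - \vf/(1-\alpha)}$, which differs cosmetically from the target $\alpha^k \normof[1]{\vy - \vf}$; the rescaling $\vz = (1-\alpha)\vy$ is precisely what reconciles the two and allows the PageRank-style proof to go through with essentially no modification. Everything else---verifying the monotonicity invariant, pushing $\ve^T$ through powers of $\mPsub$, and combining the triangle inequality with the Neumann-series bound on $\normof[1]{\vy}$---is a routine adaptation of the proof of Theorem~\ref{thm:pagerank-error}.
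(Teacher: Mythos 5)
Your overall strategy---subtracting the fixed-point equation $\vy = \alpha\mPsub\vy + \vf$ from the iteration to get $\vy - \vy\itn{k} = (\alpha\mPsub)^k(\vy - \vy\itn{0})$ and then using $\ve^T\mPsub \le \ve^T$---is exactly the recurrence argument the paper uses for Theorem~\ref{thm:pagerank-error}, and it correctly delivers the outer bound $\tfrac{2\ve^T\vf}{1-\alpha}\alpha^k$ and the non-negativity claim in part~(2). But the step you flag as the ``main obstacle'' in part~(1) is not resolved by your rescaling. Setting $\vz = (1-\alpha)\vy$ and dividing back out returns you to $\normof[1]{\vy - \vy\itn{k}} \le \alpha^k \normof[1]{\vy - \tfrac{1}{1-\alpha}\vf}$, and the identity $\vy - \vf = \alpha\mPsub\vy$ cannot convert $\normof[1]{\vy - \tfrac{1}{1-\alpha}\vf}$ into $\normof[1]{\vy - \vf}$: the former equals $\tfrac{\alpha}{1-\alpha}\normof[1]{(\mI - \mPsub)\vy}$ while the latter equals $\alpha\normof[1]{\mPsub\vy}$, and the first can exceed the second. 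Indeed the intermediate inequality is false as written: take $\alpha = 1/2$, $\mPsub = \sbmat{0 & 0 \\ 1 & 0}$, $\vf = \sbmat{1 \\ 0}$, so that $\vy = \sbmat{1 \\ 1/2}$, $\vy\itn{0} = \sbmat{2\\0}$, $\vy\itn{1} = \sbmat{1\\1}$; then $\normof[1]{\vy - \vy\itn{1}} = 1/2$ but $\alpha\normof[1]{\vy - \vf} = 1/4$. The quantity that actually appears is $\normof[1]{\vy - \vy\itn{0}}$, which does satisfy the final bound via $\normof[1]{\vy} \le \ve^T\vf/(1-\alpha)$; a correct writeup should prove the chain with $\normof[1]{\vy - \vy\itn{0}}$ in the middle (reading the statement's $\normof[1]{\vy - \vf}$ as a slip for this) rather than assert a conversion that fails.

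Part~(2) has a smaller but real gap at the end: your chain stops at $\ve^T(\vy - \vy\itn{k}) \le \alpha^k\, \ve^T\vy$ and you declare this ``the claimed $\alpha^k$ decay,'' but the statement asserts the constant is $1$. Since $\vf$ is only assumed non-negative, $\ve^T\vy$ can be arbitrarily large (already at $k=0$ the claim forces $\ve^T\vy \le 1$), so you must either invoke the normalization $\ve^T\vf \le 1-\alpha$ (as when $\vf = (1-\alpha)\vv$) together with $\ve^T\vy \le \ve^T\vf/(1-\alpha) \le 1$, or weaken the conclusion to $\alpha^k\,\ve^T\vy$. You have all the needed ingredients---the Neumann series and $\ve^T\mPsub^k \le \ve^T$---but the constant must be accounted for explicitly.
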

\begin{aside}
The error progression proceeds at the same rate for both PageRank and pseudo-PageRank. This can be improved for pseudo-PageRank if the vector $\vc^T = \ve^T - \ve^T \mPsub > 0$ (element-wise). In such cases, then we can derive an equivalent system with a smaller value of $\alpha$ and a suitably rescaled matrix $\mPsub$. 
\end{aside}

These formal results represent the mathematical foundations of all of the PageRank systems that arise in the literature (with a few technical exceptions that we will study in Section~\ref{sec:generalizations}). The results depend only on the construction of a stochastic matrix or sub-stochastic matrix, a teleportation distribution, and a parameter $\alpha$. Thus, they apply generally and have no intrinsic relationship back to the original motivation of PageRank for the web.  Each type of PageRank problem has a unique solution that always exists, and the two convergence theorems justify that simple algorithms for PageRank converge to the unique solutions quickly. These are two of the most attractive features of PageRank.

One final set of mathematical results is important to understand the behavior of localized PageRank; however, the precise statement of these results requires a lengthy and complicated diversion into graph partitioning, graph cuts, and spectral graph theory. Instead, we'll state this a bit informally. Suppose that we solve a localized PageRank problem in a large graph, but the nodes we select for teleportation lie in a region that is somehow isolated, yet connected to the rest of the graph. Then the final PageRank vector is large only in this isolated region and has small values on the remainder of the graph. This behavior is exactly what most uses of localized PageRank want: they want to find out what is nearby the selected nodes and far from the rest of the graph.  Proving this result involves spectral graph theory, Cheeger inequalities, and localized random walks -- see \citet{andersen2006-local} for more detail. Instead, we illustrate this theory with Figure~\ref{fig:localized}. 

\begin{figure}
\hfil\includegraphics{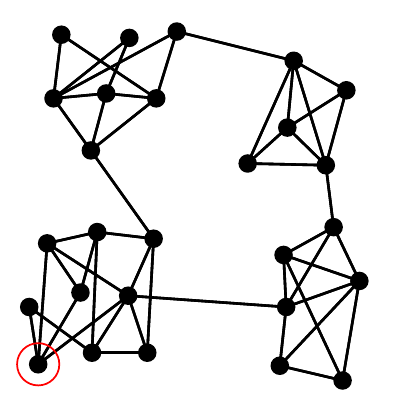}%
\hfil\includegraphics{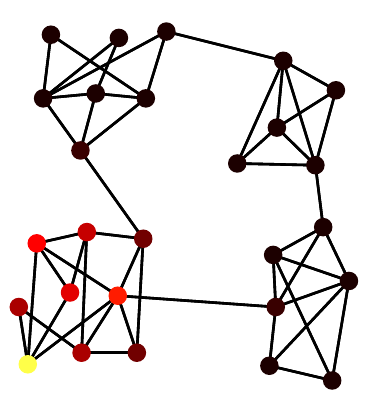}%
\hfill\par%

\caption{An illustration of the empirical properties of localized PageRank vectors with teleportation to a single node in an isolated region. In the graph at left, the teleportation vector is the single circled node. The PageRank vector is shown as the node color in the right figure. PageRank values remain high within this region and are nearly zero in the rest of the graph. Theory from \citet{andersen2006-local} explains when this property occurs.}
\label{fig:localized}
\end{figure}

Next, we will see some of the common constructions of the matrices $\mP$ and $\mPsub$ that arise when computing PageRank on a graph. These justify that PageRank is also a simple construction.

\section{PageRank constructions}
\label{sec:constructions}

When a PageRank method is used within an application, there are two common motivations. In the centrality case, the input is a graph representing relationships or flows between a set of things -- they may be documents, people, genes, proteins, roads, or pieces of software -- and the goal is to determine the expected importance of each piece in light of the full set of relationships and the teleporting behavior. This motivation was Google's original goal in crafting PageRank. In the localized case, the input is also the same type of graph, but the goal is to determine the importance relative to a small  subset of the objects. In either case, we need to build a stochastic or sub-stochastic matrix from a graph. In this section, we review some of the common constructions that produce a PageRank or pseudo-PageRank system. For a visual overview of some of the possibilities, see Figures~\ref{fig:pagerank-theory}~and~\ref{fig:pagerank-constructions}.

\begin{figure}[ht]
\includegraphics[width=\linewidth]{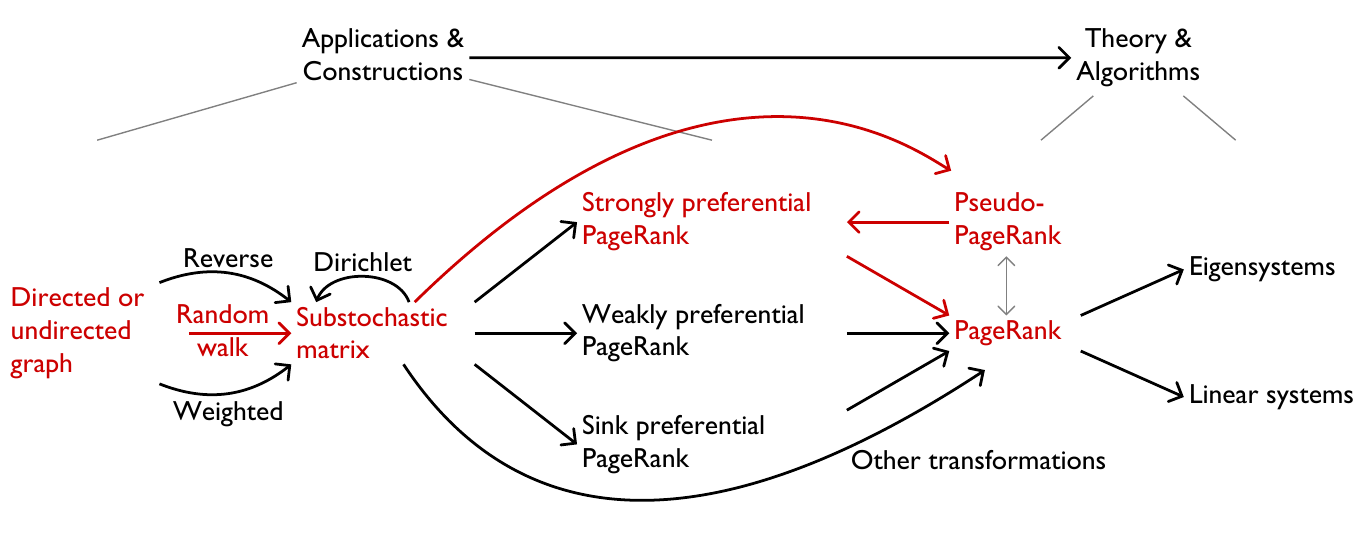}
\caption{An overview of PageRank constructions and how they relate.  The vast majority of PageRank applications fall somewhere on the red path.}
\label{fig:pagerank-theory}
\end{figure}

\paragraph{Notation for graphs and matrices}
Let $\mA$ be the adjacency matrix for a graph where we assume that the vertex set is $V = \{ 1, \ldots, n \}$.  The graph could be directed, in which case $\mA$ is non-symmetric, or undirected, in which case $\mA$ is symmetric. The graph could also be weighted, in which case $A_{i,j}$ gives the positive weight of edge $(i,j)$. Edges with zero weight are assumed to be irrelevant and equivalent to edges that are not present. For such a graph, let $\vd$ be the vector of node out-degrees, or equivalently, the vector of row-sums: $\vd = \mA \ve$. The matrix $\mD$ is simply the diagonal matrix with $\vd$ on the diagonal. Weighted graphs are extremely common in applications when the weights reflect a measure of the \emph{strength} of the relationships between two nodes. 

\subsection{The standard random walk} \label{sec:standard-rw}
In the standard construction of PageRank, the matrix $\mP$ represents a uniform random walk operation on the graph $\mA$. When the graph is weighted, the simple generalization is to model a non-uniform walk that chooses subsequent nodes with probability proportional to the connecting edge's weight. The elements of $\mPsub$ are rather similar between the two cases: 
\[ \bar{P}_{j,i} = \frac{A_{i,j}}{\sum_{k} A_{i,k}}  = \frac{A_{i,j}}{d_i} = \begin{array}{l} \text{probability of taking the transition} \\ \text{from $i$ to $j$ via a random walk step}. \end{array} \]
Notice two features of this construction. First, we transpose between $j,i$ and $i,j$. This is because $A_{i,j}$ indicates an edge from node $i$ to node $j$, whereas the probability transition matrix element $i,j$ indicates that node $i$ \emph{can be reached} via node $j$. Second, we have written $\mPbar$ and $\bar{P}_{j,i}$ here because there may be nodes of the graph with \emph{no outlinks}. These nodes are called \emph{dangling nodes}.  Dangling nodes complicate the construction of stochastic matrices $\mP$ in a few ways because we must specify a behavior for the random walk at these nodes in order to fully specify the stochastic matrix. 

As a matrix formula, the standard random walk construction is:
\[ \mPsub = \mA^T \mD^{+}. \]
Here, we have used the \emph{pseudo-inverse} of the degree matrix to ``invert'' the diagonal matrix  in light of the dangling nodes with $0$ out-degrees.
Let $\vc^T$ be the sub-stochastic correction vector. For the standard random walk construction, $\vc^T$ is just an indicator vector for the dangling nodes: 
\[ c_i = 1 - \sum_{k} \bar{P}_{k,i} = \begin{cases} 1 & \text{node $i$ is dangling} \\ 0 & \text{otherwise.} \end{cases} \]

We shall now see a few ideas that turn these sub-stochastic matrices into fully stochastic PageRank problems.

\paragraph{Strongly Preferential PageRank}
Given a directed graph with dangling nodes, the standard random walk construction produces the sub-stochastic matrix $\mPsub$ described above. If we had just used this matrix to solve a pseudo-PageRank problem with a stochastic teleportation vector $\vf = (1-\alpha) \vv$, then, by Theorem~\ref{thm:pagerank-equiv}, the result is equivalent up to normalization to computing PageRank on the matrix: 
\[ \mP = \mPsub + \vc \vv^T. \]
This construction models a random walk that transitions according to the distribution $\vv$ when visiting a dangling node. This behavior reinforces the effect of the teleportation vector $\vv$, or preference vector as it is sometimes called. Because of this reinforcement,   \citet{boldi2007-traps} called the construction $\mP = \mPsub + \vc \vv^T$ a \emph{strongly preferential PageRank} problem. Again, many authors are not careful to \emph{explicitly} choose a correction to turn the sub-stochastic matrix into a stochastic matrix. Their lack of choice, then, \emph{implicitly} chooses the strongly preferential PageRank system. 


\paragraph{Weakly Preferential PageRank \famp\ Sink Preferential PageRank} 
\Citet{boldi2007-traps} also proposed the \emph{weakly preferential PageRank} system. In this case, the behavior of the random walk at dangling nodes is adjusted \emph{independently} of the choice of teleportation vector. For instance, \citet{langville04-deeper} advocates transitioning uniformly from dangling nodes. In such a case, let $\vu = \ve / n$ be the uniform distribution vector, then a weakly preferential PageRank system is: 
\[ \mP = \mPsub + \vc \vu^T. \]
We note that another choice of behavior is for the random walk to remain at dangling nodes until it moves away via a teleportation step: \[ \mP = \mPsub + \diag(\vc). \]
We call this final method \emph{sink preferential PageRank}. These systems are less common.  These choices should be used when the matrix $\mP$ models some type of information or material flow that must be decoupled from the teleporting behavior.

\subsection{Reverse PageRank} \label{sec:reverse-pagerank}
In reverse PageRank, we compute PageRank on the transposed graph $\mA^T$. This corresponds to reversing the direction of each edge $(i,j)$ to be an edge $(j,i)$. Reverse PageRank is often used to determine \emph{why} a particular node is important rather than \emph{which} nodes are important~\cite{Fogaras-2003-where,gyongyi2004-trustrank,Bar-Yossef2008-reverse-PageRank}. Intuitively speaking, in reverse PageRank, we model a random surfer that follows in-links instead of out-links. Thus, large reverse PageRank values suggest nodes that can reach many nodes in the graph. When these are localized, they then provide evidence for why a node has large PageRank. 

\subsection{Dirichlet PageRank}
Consider a PageRank problem where we wish to fix the importance score of a subset of nodes~\cite{Chung-2011-Dirichlet}. Let $S$ be a subset of nodes such that $i \in S$ implies than $v_i = 0$. A Dirichlet PageRank problem seeks a solution of PageRank where each node $i$ in $S$ is fixed to a \emph{boundary} value $b_i$. Formally, the goal is to find $\vx$:
\[ (\mI - \alpha \mP) \vx = (1-\alpha) \vv \qquad \text{ where } \qquad x_i = b_i \text{ for } i \in S. \]
These problems reduce to solving a pseudo-PageRank system. Consider a block partitioning of $\mP$ based on the set $S$ and the complement set of vertices $\bar{S}$: 
\[ \mP = \bmat{ \mP_{S,S} & \mP_{S,\bar{S}} \\ \mP_{\bar{S},S} & \mP_{\bar{S},\bar{S}} }. \]
Then the Dirichlet PageRank problem is 
\[ \bmat{\mI & 0 \\ -\alpha \mP_{\bar{S},S} & \mI - \alpha \mP_{\bar{S},\bar{S}} } \bmat{ \vb  \\ \vx_{\bar{S}} } = (1-\alpha) \bmat{ 0 \\ \vv_{\bar{S}} }. \]
This system is equivalent to a pseudo-PageRank problem with $\mPsub = \mP_{\bar{S},\bar{S}}$ and $\vf = (1-\alpha) \vv_{\bar{S}} + \alpha \mP_{\bar{S},S} \vb$.

\subsection{Weighted PageRank} \label{sec:weighted}
In the standard random walk construction for PageRank on an unweighted graph, the probability of transitioning from node $i$ to any of it's neighbors $j$ is the same: $1/d_i$. Weighted PageRank~\cite{Xing-2004-weighted-pagerank,Jiang-2009-ranking-spaces} alters this assumption such that the walk preferentially visits high-degree nodes. Thus, the probability of transitioning from node $i$ to node $j$ depends on the degree of $j$ relative to the total sum of degrees of all $i$'s neighbors. In our notation, if the input is adjacency matrix $\mA$ with degree matrix $\mD$, then the sub-stochastic matrix $\mPsub$ is given by the non-uniform random walk construction on the weighted graph with adjacency matrix $\mW = \mA \mD$, that is, $\mPsub = \mD \mA^T \diag(\mA \mD \ve)^{+}$. More generally, let $\mD_W$ be a non-negative weighting matrix. It could be derived from the graph itself based on the out-degree, in-degree, or total-degree (the sum of in- and out-degree), or from some external source.  
Then $\mPsub = \mD_W \mA^T \diag(\mA \mD_W \ve)^{-1}$.
\emph{Let us note that weighted PageRank uses a specific choice of weights for the prior importance of each node; the setting here already adapts seamlessly to edge-weighted graphs.}

\subsection{PageRank on an undirected graph} \label{sec:undir-pr}
One final construction is to use PageRank on an undirected graph. Those familiar with
Markov chain theory often find this idea puzzling at first. A uniform random walk on a
connected, undirected graph has a well-known, unique stationary distribution~\cite[][is a good numerical treatment of such issues]{Stewart-1994-markov}: 
\[ \underbrace{ \mA^T \mD^{-1} }_{\mP} \vx = \vx \text{ is solved by } \vx = \mD \ve / (\ve^T \vd). \]
This works because both the row and column sums of $\mA$ and $\mA^T$ are identical, and the resulting construction is a \emph{reversible Markov chain}~\cite[][is a good reference on this topic]{Aldous-2014-reversible}. If $\alpha < 1$, then the PageRank Markov chain \emph{is not} a reversible Markov chain even on an undirected graph, and hence, has no simple stationary distribution. PageRank vectors of undirected graphs, when combined with carefully constructed teleportation vectors $\vv$, yield important information about the presence of small isolated regions in the graph~\cite{andersen2006-local,Gleich-2014-alg-anti-diff}; formally these results involve graph cuts and small conductance sets. These vectors are most useful when the teleportation vector is far away from the uniform distribution, such as the case in Figure~\ref{fig:localized} where the graph is undirected.

\begin{aside} Of course, if the teleportation distribution $\vv = \mD \ve / (\ve^T \vd)$, then the resulting chain is reversible. The PageRank vector is then equal to $\vv$ itself. There are also specialized PageRank-style constructions that preserve reversibility with more interesting stationary distributions~\cite{Avrachenkov-2010-uniform}.
\end{aside}

\begin{figure}
\footnotesize
\begin{minipage}[m]{0.3\linewidth}
\centering
\includegraphics{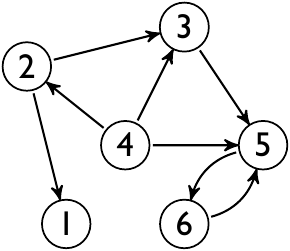}
\end{minipage}%
\begin{minipage}[m]{0.7\linewidth}
\centering
$ \mA = 
\bmat{  
0 & 0 & 0 & 0 & 0 & 0 \\
1 & 0 & 1 & 0 & 0 & 0 \\ 
0 & 0 & 0 & 0 & 1 & 0 \\
0 & 1 & 1 & 0 & 1 & 0 \\
0 & 0 & 0 & 0 & 0 & 1 \\
0 & 0 & 0 & 0 & 1 & 0} 
\quad
\vd = \bmat{
0 \\ 2 \\ 1 \\ 3 \\ 1 \\ 1}
\;
\vc = \bmat{
1 \\ 0 \\ 0 \\ 0 \\ 0 \\ 0}$
\end{minipage} \medskip 

\begin{minipage}[m]{0.3\linewidth}
\centering
A directed graph
\end{minipage}%
\begin{minipage}[m]{0.7\linewidth}
\centering
The adjacency matrix, degree vector, and correction vector
\end{minipage}
\bigskip 

\begin{tabularx}{\linewidth}{XXX}
\emph{\normalsize Random walk} & \emph{\normalsize Strongly preferential} & \emph{\normalsize Weakly preferential} \\ \addlinespace
$\mPsub = \sbmat{
\vphantom{1/6} \; 0 \; & 1/2 & \; 0 \;  & \; 0 \;    &  \; 0 \;  &  \; 0 \;  \\ 
\vphantom{1/6} 0 & 0   & 0 & 1/3 & 0 & 0 \\ 
\vphantom{1/6} 0 & 1/2 & 0 & 1/3 & 0 & 0 \\ 
\vphantom{1/6} 0 & 0   & 0 & 0   & 0 & 0 \\ 
\vphantom{1/6} 0 & 0   & 1 & 1/3 & 0 & 1 \\ 
\vphantom{1/6} 0 & 0   & 0 & 0   & 1 & 0 }$ 
& $\mP = \sbmat{
\vphantom{1/6} \; 0 \;  & 1/2 & \; 0 \;  & \; 0 \;    & \; 0 \;  & \; 0 \;  \\ 
\vphantom{1/6} 0 & 0   & 0 & 1/3 & 0 & 0 \\ 
\vphantom{1/6} 1/3 & 1/2 & 0 & 1/3 & 0 & 0 \\ 
\vphantom{1/6} 1/3 & 0   & 0 & 0   & 0 & 0 \\ 
\vphantom{1/6} 1/3 & 0   & 1 & 1/3 & 0 & 1 \\ 
\vphantom{1/6} 0 & 0   & 0 & 0   & 1 & 0 }$
& $\mP = \sbmat{
 1/6  & 1/2 & \; 0 \;  & \; 0 \;    & \; 0 \;  & \; 0 \;  \\ 
 1/6 & 0   & 0 & 1/3 & 0 & 0 \\ 
 1/6 & 1/2 & 0 & 1/3 & 0 & 0 \\ 
 1/6 & 0   & 0 & 0   & 0 & 0 \\ 
 1/6 & 0   & 1 & 1/3 & 0 & 1 \\ 
 1/6 & 0   & 0 & 0   & 1 & 0 }$ 
 \\ \addlinespace 
 $\mPsub = \mA^T \mD^{+}$ & $\mP = \mPsub + \vv \vc^T$  & $\mP = \mPsub + \vu \vc^T$ \\
& & $\qquad \vu \not= \vv$ \\ \addlinespace \addlinespace 

\emph{\normalsize Reverse} & \emph{\normalsize Dirichlet }& \emph{\normalsize Weighted} \\ \addlinespace
$\mPsub = \sbmat{
\vphantom{1/6} \; 0 \; & \; 0 \; & \; 0 \; & \; 0 \; & \; 0 \; & \; 0 \; \\
\vphantom{1/6} 1 & 0 & 1/2 & 0 & 0 & 0 \\ 
\vphantom{1/6} 0 & 0 & 0 & 0 & 1/3 & 0 \\
\vphantom{1/6} 0 & 1 & 1/2 & 0 & 1/3 & 0 \\
\vphantom{1/6} 0 & 0 & 0 & 0 & 0 & 1 \\
\vphantom{1/6} 0 & 0 & 0 & 0 & 1/3 & 0}  $ 
&
$\mPsub = \sbmat{
\vphantom{1/6}  \; 0 \;  & \; 0 \; & 1/3 & \; 0 \; & \; 0\; \\ 
\vphantom{1/6} 1/2 & 0 & 1/3 & 0 & 0 \\ 
\vphantom{1/6} 0   & 0 & 0   & 0 & 0 \\ 
\vphantom{1/6} 0   & 1 & 1/3 & 0 & 1 \\ 
\vphantom{1/6} 0   & 0 & 0   & 1 & 0 }$ \newline $~\qquad S = \{ 2, 3, 4, 5, 6 \}$
& 
$\mPsub = \sbmat{
\vphantom{1/6} \; 0 \;  & 1/4 & \; 0 \;  & \; 0 \;    & \; 0 \;  & \; 0 \;  \\ 
\vphantom{1/6} 0 & 0   & 0 & 3/10 & 0 & 0 \\ 
\vphantom{1/6} 0 & 3/4 & 0 & 3/10 & 0 & 0 \\ 
\vphantom{1/6} 0 & 0   & 0 & 0   & 0 & 0 \\ 
\vphantom{1/6} 0 & 0   & 1 & 4/10 & 0 & 1 \\ 
\vphantom{1/6} 0 & 0   & 0 & 0   & 1 & 0 }$
\\ \addlinespace
$\mPsub = \mA \diag( \mA^T \ve )^{+}$ & $\mPsub = \mPsub_{\bar{S},\bar{S}}$ & $\mPsub = (\mD_W \mA^T) \diag(\mA \mD_W \ve)^{+}$ \\
  & $\qquad S \subset V$ & $\mD_W$ is a diagonal weighting matrix, e.g.~total degree here\\

\end{tabularx}
\caption{A directed graph and some of the different PageRank constructions on that graph. For the stochastic constructions, we have $\vv^T = [ \, 0 \; 0 \; \tfrac{1}{3} \; \tfrac{1}{3} \; \tfrac{1}{3} \; 0 \, ]$ and $\vu = \ve/n$. Note that node $4$ is dangling in the reverse PageRank construction. For the weighted construction, the total degrees are $[ \, 1 \; 3 \; 3 \; 3 \; 4 \; 2 \, ]$.
}
\label{fig:pagerank-constructions}
\end{figure}


\section{PageRank applications}

When PageRank is used within applications, it tends to acquire a new name. We will see:
\begin{multicols}{4}
\parindent=18pt
\parskip=3pt
\footnotesize

GeneRank

ProteinRank

IsoRank

MonitorRank

BookRank 

TimedPageRank

CiteRank 

AuthorRank

PopRank

FactRank

ObjectRank

FolkRank

ItemRank

BuddyRank

TwitterRank

HostRank

DirRank

TrustRank

BadRank

VisualRank

\end{multicols}

The remainder of this section explores the uses of PageRank within different domains. It is devoted to the most interesting and diverse uses and should not, necessarily, be read linearly.  Our intention is not to cover the full details, but to survey the diversity of applications of PageRank. We recommend returning to the primary sources for additional detail. 
\begin{multicols}{2}
\parindent=0pt
\parskip=3pt
\footnotesize
\centering 

Chemistry $\cdot$ \S\ref{sec:chemistry}

Biology $\cdot$ \S\ref{sec:biology}

Neuroscience $\cdot$ \S\ref{sec:neuroscience}

Engineered systems $\cdot$ \S\ref{sec:engineered}

Mathematical systems $\cdot$ \S\ref{sec:mathematics}

Sports $\cdot$ \S\ref{sec:sports}

Literature $\cdot$ \S\ref{sec:literature}

Bibliometrics $\cdot$ \S\ref{sec:bibliometrics}

Databases \famp\ Knowledge systems $\cdot$ \S\ref{sec:knowledge}

Recommender systems $\cdot$ \S\ref{sec:recommender}

Social networks $\cdot$ \S\ref{sec:social}

The web, redux $\cdot$ \S\ref{sec:web}
\end{multicols}

\subsection{PageRank in chemistry} \label{sec:chemistry}
The term ``graph'' arose from ``chemico-graph'' or a picture of a chemical structure~\cite{Sylvester-1878-Graph}. Much of this chemical terminology remains with us today. For instance, the valence of a molecule is the number of potential bonds it can make. The valence of a vertex is synonymous with its degree, or the number  of connections it makes in the graph. It is fitting, then, that recent work by \citet{Mooney-2012-pagerank-chemistry} uses PageRank to study molecules in chemistry. In particular, they use PageRank to assess the change in a network of molecules linked by hydrogen bonds among water molecules. Given the output of a molecular dynamics simulation that provides geometric locations for a solute in water, the graph contains edges between the water molecules if they have a potential hydrogen bond to a solute molecule. The goal is to assess the hydrogen bond potential of a solvent. The PageRank centrality scores using uniform  teleportation with $\alpha = 0.85$ are strongly correlated with the degree of the node -- which is expected -- but the deviance of the PageRank score from the degree identifies important outlier molecules with smaller degree than many in their local regions. The authors compare the networks based the PageRank values with and without a solute to find structural differences.

\subsection{PageRank in biology \famp\ bioinformatics: GeneRank, ProteinRank, IsoRank} \label{sec:biology}

Biology and bioinformatics are currently awash in network data. Some of the most interesting applications of PageRank arise when it is used to study these networks. Most of these applications use PageRank to reveal localized information about the graph based on some form of external data.

\paragraph{GeneRank}
Microarray experiments are a measurement of whether or not a gene's expression is promoted or repressed in an experimental condition. Microarrays estimate the outcomes for thousands of genes simultaneously in a few experimental conditions. The results are extremely noisy.  GeneRank~\cite{morrison2005-generank} is a PageRank-inspired idea to help to denoise them. The essence of the idea is to use a graph of known relationships between genes to find genes that are highly related to those promoted or repressed in the experiment, but were not themselves promoted or repressed. Thus, they use the microarray expression results as the teleportation distribution vector for a PageRank problem on a network of known relationships between genes. The network of relationships between genes is undirected, unweighted with a few thousand nodes. This problem uses a localized teleportation behaviour and, experimentally, the best choice of $\alpha$ ranges between $0.75$ and $0.85$. Teleporting is used to focus the search.

\paragraph{Finding correlated genes} This same idea of using a network of known relationships in concert with an experiment encapsulates many of the other uses of PageRank in biology. \Citet{Jiang-2009-generank} use a combination of PageRank and BlockRank~\cite{kamvar2003-blockrank,Kamvar-2010-personalized} on tissue-specific protein-protein interaction networks in order to find genes related to type 2 diabetes. The teleportation is provided by 34 proteins known to be related to that disease with $\alpha = 0.92$. 

\Citet{Winter-2012-CancerRank} use PageRank to study pancreatic ductal adenocarcinoma, a type of cancer responsible for 130,000 deaths each year, with a particularly poor prognosis (2\% mortality after five years). They identified seven genes that better predicted patient survival than all existing tools, and validated this in a clinical trial. One curious feature is that their teleportation parameter was small, $\alpha = 0.3$. This was chosen based on a cross-validation strategy in a statistically rigorous way. The particular type of teleportation they used was based on the correlation between the expression level of a gene and the survival time of the patient. 

\paragraph{ProteinRank}
The goal of ProteinRank~\cite{freschi2007-proteinrank} is similar, in spirit, to GeneRank. Given an undirected network of protein-protein interactions and human-curated functional annotations about what these proteins do, the goal is to find proteins that may share a functional annotation. Thus, the PageRank problem is, again, a localized use. The teleportation distribution is given by a random choice of nodes with a specific functional annotation. The PageRank vector reveals proteins that are highly related to those with this function, but do not themselves have that function labeled. 

\paragraph{Protein distance} Recall that the solution of a PageRank problem for a given teleportation vector $\vv$ involves solving $(\eye - \alpha \mP) \vx = (1-\alpha) \vv$. The resolvent matrix $\mX = (1-\alpha) (\eye - \alpha \mP)^{-1}$ corresponds to computing PageRank vectors that teleport to every individual node. The entry $X_{i,j}$ is the value of the $i$th node when the PageRank problem is localized on node $j$. One interpretation for this score is the PageRank that node $j$ contributes to node $i$, which has the flavor of a similarity score between node $i$ and $j$. \Citet{Voevodski-2009-spectral} base an affinity measure between proteins on this idea. Formally, consider an undirected, unweighted protein-protein interaction network. Compute the matrix $\mX$ for $\alpha = 0.85$, and the affinity matrix $\mS = \min(\mX, \mX^T)$. (For an undirected graph, a quick calculation shows that $\mX^T = \mD^{-1} \mX \mD$.) For each vertex $i$ in the graph, form links to the $k$ vertices with the largest values in  row of $i$ of $\mS$. These PageRank affinity scores show a much larger correlation with known protein relationships than do other affinity or similarity metrics between vertices. 

\paragraph{IsoRank}
Consider the problem of deciding if the vertices of two networks can be mapped to each other. The relationship between this problem and PageRank is surprising and unexpected; although precursor literature existed~\cite{jeh2002-simrank,blondel2004-graph-similarity}. \Citet{singh2007-matching-topology} proposes a PageRank problem to estimate how much of a match the two nodes are in a diffusion sense. They call it IsoRank based on the idea of ranking graph isomorphisms. Let $\mP$ be the Markov chain for one network and let $\mQ$ be the Markov chain for the second network. Then IsoRank solves a PageRank problem on $\mQ \kron \mP$. The solution vector $\vx$ is a vectorized form of a matrix $\mX$ where $X_{ij}$ indicates a likelihood that vertex $i$ in the network underlying $\mP$ will match to vertex $j$ in the network  underlying $\mQ$. See Figure~\ref{fig:isorank} for an example. If we have an apriori measure of similarity between the vertices of the two networks, we can add this as a teleportation distribution term. IsoRank problems are some of the largest  PageRank problems around due to the Kronecker product (e.g.~\citet{gleich2010-inner-outer} has a problem with 4 billion nodes and 100 billion edges). But there are quite a few good algorithmic approaches to tackle them by using properties of the Kronecker product~\cite{Bayati-2013-netalign} and low-rank matrices~\cite{Kollias-2011-netalign}.

The IsoRank authors consider the problem of matching protein-protein interaction networks between distinct species. The goal is to leverage insight about the proteins from a species such as a mouse in concert with a matching between mouse proteins and human proteins, based on their interactions, in order to hypothesize about possible functions for proteins in a human. For these problems, each protein is coded by a gene sequence. The authors construct a teleportation distribution by comparing the gene sequences of each protein using a tool called BLAST. They found that using $\alpha$ around $0.9$ gave the highest structural similarity between the two networks. 

\begin{figure} 
\centering
\begin{minipage}[m]{0.35\linewidth}
\vspace{0pt}\centering\footnotesize\itshape
\includegraphics{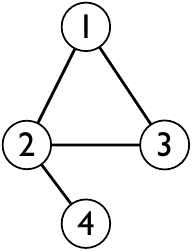}
\includegraphics{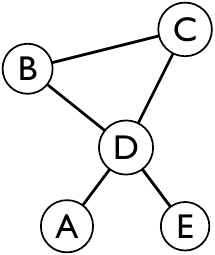}
\end{minipage}%
\begin{minipage}[m]{0.3\linewidth}
\centering
$\mP = \sbmat{
\vphantom{1/6} \; 0 \; & 1/3 & 1/2  & \; 0 \; \\ 
\vphantom{1/6} 1/2 & 0   & 1/2 & 1 \\ 
\vphantom{1/6} 1/2 & 1/3 & 0 & 0  \\ 
\vphantom{1/6} 0 &  1/3   & 0 & 0   }$

$\mQ = \sbmat{
\vphantom{1/6} \; 0 \; & \; 0 \; & \; 0 \; & 1/4 & \; 0 \; \\ 
\vphantom{1/6} 0 & 0   & 1/2 & 1/4 & 0 \\ 
\vphantom{1/6} 0 & 1/2 & 0 & 1/4 & 0  \\ 
\vphantom{1/6} 1 &  1/2   & 1/2 & 0 & 1 \\
\vphantom{1/6} 0 & 0 & 0 & 1/4 & 0  }$
\end{minipage}%
\begin{minipage}[m]{0.35\linewidth}
\centering\footnotesize\itshape
$\begin{smallmatrix}
& \arraycolsep=3pt\hspace{-3pt}\begin{matrix} \;\;\; A \;\; & \;\; B \;\; & \;\; C \;\; & \;\; D \;\; & \;\; E \;\;\; \end{matrix} \\
\begin{matrix} \vphantom{1/6}
1 \\ \vphantom{1/6}
2 \\ \vphantom{1/6}
3 \\ \vphantom{1/6}
4\end{matrix} &
\arraycolsep=3pt
\bmat{ \vphantom{1/6}
0.03 & 0.05 & 0.05 & 0.09 & 0.03 \\ \vphantom{1/6} 
0.04 & 0.07 & 0.07 & 0.15 & 0.04 \\ \vphantom{1/6} 
0.03 & 0.05 & 0.05 & 0.09 & 0.03 \\ \vphantom{1/6} 
0.02 & 0.03 & 0.03 & 0.05 & 0.02 }  
\end{smallmatrix}$
\end{minipage}

\begin{minipage}[m]{0.35\linewidth}
\centering\footnotesize\itshape
(1) Two graphs
\end{minipage}%
\begin{minipage}[m]{0.3\linewidth}
\centering\footnotesize\itshape
(b) Their stochastic matrices
\end{minipage}%
\begin{minipage}[m]{0.35\linewidth}
\centering\footnotesize\itshape
(c) The IsoRank solution
\end{minipage}
\caption{An illustration of the IsoRank problem. The solution, written here as a matrix, gives the similarity between pairs of nodes of the graph. For instance, node $2$ is most similar to node $D$. Removing this match, then nodes $1$ and $3$ are indistinguishable from $B$ and $C$. Removing these leaves node $4$ equally similar to $A$ and $E$. In this example we solved $(\eye - \alpha \mQ \kron \mP) \vx = (1-\alpha) \ve / 20$ with $\alpha = 0.85$.}
\label{fig:isorank}
\end{figure}

\subsection{PageRank in neuroscience} \label{sec:neuroscience}

The human brain connectome is one of the most important networks, about which we understand surprisingly little. Applied network theory is one of a variety of tools currently used to study it~\cite{Sporns-2011-book}. Thus, it is likely not surprising that PageRank has been used to study the properties of networks related to the connectome~\cite{Zuo-2011-centrality}.  Most recently, PageRank helped evaluate the importance of brain regions given observed correlations of brain activity. In the resulting graph, two voxels of an MRI scan are connected if the correlation between their functional MRI time-series is high. Edges with weak correlation are deleted and the remainder are retained with either binary weights or the correlation weights. The resulting graph is also undirected, and they use PageRank, combined with community detection and known brain regions, in order to understand changes in brain structure across a population of 1000 individuals that correlate with age.

Connectome networks are widely hypothesized to be hierarchically organized. Given a directed network that should express a hierarchical structure, how can we recover the order of the nodes that minimizes the discrepancy with a hierarchical hypothesis? \citet{Crofts-2011-googling} consider PageRank for this application on networks of neural connections from C.~Elegans.  They find that this gives poor results compared with other network metrics such as the Katz score~\cite{katz1953-status}, and communicability~\cite{Estrada-2008-multipartite}. In their discussion, the authors note that this result may have been a mismatch of models, and conjecture that the flow of influence in PageRank was incorrect. Literature involving Reverse PageRank (Section~\ref{sec:reverse-pagerank}) strengthens this conjecture. Let us reiterate that although PageRank models are easy to apply, they must be employed with some care in order to get the best results. 

\subsection{PageRank in complex engineered systems: MonitorRank} \label{sec:engineered}

The applications of PageRank to networks in chemistry, biology, and neuroscience are part of the process of investigating and analyzing something we do not fully understand. PageRank methods are also used to study systems that we explicitly engineered. As these engineered systems grow, they become increasingly complex, with networks and submodules interacting in unpredictable, nonlinear ways. Network analysis methods like PageRank, then, help reveal these details.  We'll see two examples: software systems and city systems. 

\paragraph{MonitorRank} 
Diagnosing root causes of issues in a modern distributed system is painstaking work. It involves repeatedly searching through error logs and tracing debugging information. MonitorRank~\cite{Kim-2013-monitorrank} is a system to provide guidance to a systems administrator or developer as they perform these activities. It returns a ranked list of systems based on the likelihood that they contributed to, or participated in, an anomalous situation. Consider the systems underlying the LinkedIn website: each service provides one or more APIs that allow other services to utilize its resources. For instance, the web-page generator uses the database and photo store. The photo store in turn uses the database, and so on. Each combination of a service and a programming interface becomes a node in the MonitorRank graph. Edges are directed and indicate the direction of function calls -- e.g. web-page to photo store. Given that an anomaly was detected in a system, MonitorRank solves a personalized PageRank problem on a weighted, augmented version of the call graph, where the weights and augmentation depend on the anomaly detected. (The construction is interesting, albeit tangential, and we refer readers to that paper for the details.) The localized PageRank scores help determine the anomaly. The graphs involved are fairly small: a few hundred to a few thousand nodes.

\paragraph{PageRank of the Linux kernel}
The Linux kernel is the foundation for an open source operating system. It has evolved over the past 20 years with contributions from nearly 2000 individuals in an effort with an estimated value of \$3 billion. As of July 2013, the Linux kernel comprised 15.8 million lines of code containing around 300,000 functions. The kernel call graph is a network that represents dependencies between functions and both PageRank and reverse PageRank, as centrality scores, produce an ordering of the most important functions in Linux~\cite{Chepelianskii-2010-SoftwareRank}. The graphs were directed with a few million edges. Teleportation was typical: $\alpha = 0.85$ with a global, uniform $\vv = \ve/n$. They find that utility functions such as \texttt{printk}, which prints messages from the kernel, and \texttt{memset}, a routine that initializes a region of memory, have the highest PageRank, whereas routines that initialize the system such as \texttt{start\textunderscore kernel} have the highest reverse PageRank.
\Citet{Chepelianskii-2010-SoftwareRank} further uses the distribution of PageRank and reverse PageRank scores to characterize the properties of a software system. (This same idea is later used for Wikipedia too,~\citealt{Zhirov-2010-wikipedia}, Section~\ref{sec:web}.)

\paragraph{Roads and Urban Spaces}
Another surprising use of PageRank is with road and urban space networks. PageRank helps to predict both traffic flow and human movement in these systems. The natural road construction employed is an interesting graph. A natural road is more or less what it means: it's a continuous path, built from road segments by joining adjacent segments together if the angle is sufficiently small and there isn't a better alternative. (For help visualizing this idea, consider traffic directions that state: ``Continue straight from High street onto Main street.'' This would mean that there is one natural road joining High street and Main street.) Using PageRank with $\alpha = 0.95$, \citet{Jiang-2008-predicting} finds that PageRank  is the best network measure in terms of predicting  traffic on the individual roads. These graphs have around 15,000 nodes and around 50,000 edges. Another group used PageRank to study Markov chain models based on the line-graph of roads~\cite{Schlote-2012-road-rank}. That is, given a graph of intersections (nodes) and roads (edges), the line graph, or dual graph, changes the role of roads to the nodes and intersections to the edges. In this context, PageRank's teleportation mirrors the behavior of starting or ending a journey on each street. This produces a different value of $\alpha$ for each node that reflects the tendency of individuals to park, or end their journey, on each street. Note that this is slightly different setup where each node has a separate teleportation parameter $\alpha$, rather than a different entry in the teleportation vector. Assuming that each street has some probability of a journey ending there, then this system is equivalent to a more general PageRank construction (Section~\ref{sec:dummy-node}). These Markov chains are used to study road planning and optimal routing in light of new constraints imposed by electric vehicles.

An urban space is the largest space of a city observable from a single vantage point. For instance, the Mission district of San Francisco is too large, but the area surrounding Dolores Park is sufficiently small to be appreciated as a whole. For the study by \citet{Jiang-2009-ranking-spaces}, an urban space is best considered as a city neighborhood or block.  The urban space network connects adjacent spaces, or blocks, if they are physically adjacent. The networks of urban spaces in London, for instance, have up to 20,000 nodes and 100,000 links. In these networks, weighted PageRank (Section~\ref{sec:weighted}) best predicts human mobility in a case study of movement within London. It outperforms PageRank, and in fact, they find that weighted PageRank with $\alpha = 1$ accounts for up to 60\% of the observed movement. Both using weighted PageRank and $\alpha = 1$ make sense for these problems -- individuals and businesses are likely to co-locate places with high connectivity, and individuals cannot teleport over the short time-frames used for the human mobility measurements. Based on the evidence here, we would hypothesize that using $\alpha < 1$ would better generalize over longer time-spans. 

\subsection{PageRank in mathematical systems} \label{sec:mathematics}

Graphs and networks arise in mathematics to abstract the properties of systems of equations and processes to relationships between simple sets.  We present one example of what PageRank reveals about a dynamical system by abstracting the phase-space to a discrete set of points and modeling transitions among them. Curiously, PageRank and its localization properties has not yet been used to study properties of Cayley graphs from large, finite groups, although closely related structures have been examined~\cite{Frahm-2012-IntRank}.

\paragraph{PageRank of symbolic images and Ulam networks}
Let $f$ be a discrete-time dynamical system on a compact state space $M$. For instance, $M$ will be the subset of $\RR^2$ formed by $[ 0, 2 \pi ] \times [ 0, 2 \pi ]$ for our example below. Consider a covering of $M$ by cells $C$. In our forthcoming example, this covering will just be a set of non-overlapping cells that form a regular, discrete partition into cells of size $2 \pi /N \times 2 \pi /N$. The symbolic image~\cite{Osipenko-2007-dynamical-systems-graphs} of $f$ with respect to $C$ is a graph where the vertices are the cells and $C_i \in C$ links to $C_j \in C$ if $x \in C_i$ and $f(x) \in C_j$. The Ulam network is a weighted approximation to this graph that is constructed by simulating $s$ starting points within cell $C_i$ and forming weighted links to their destinations $C_j$~\cite{Shepelyansky2010-Ulam}. The example studied by those authors, and the example we will consider here, is the Chirikov typical map.  
\[ \begin{aligned} 
  y_{t+1} & = \eta y_t + k \sin(x_t + \theta_t) \\
  x_{t+1} & = x_t + y_{t+1}. \end{aligned} \]
It models a kicked oscillator. We generate $T$ random phases $\theta_t$ and look at the map: 
\[ f(x,y) = (x_{T+1}, y_{T+1}) \text{ mod } 2 \pi \quad \text{ where } \quad x_1 = x, y_1 = y. \]
That is, we iterate the map for $T$ steps for each of the $T$ random phase shifts $\theta_1, \ldots, \theta_T$.
Applying the construction above with $s = 1000$ random samples from each cell yields a directed weighted graph $G$ with $N^2$ nodes and at most $N^2 s$ edges. PageRank on this graph, with uniform teleportation, yields beautiful pictures of the transient behaviors of this chaotic dynamical system; these are easy to highlight with modest teleportation parameters such as $\alpha = 0.85$ because this regime inhibits the dynamical system from converging to its stable attractors. This application is particularly useful for modeling the effects of different PageRank constructions as we illustrate in Figure~\ref{fig:chirikov}. For that figure, the graph has $262,144$ nodes and $4,106,079$ edges, $\eta = 0.99, k = 0.22, T = 10$.

\begin{figure}

\includegraphics[width=0.33\linewidth]{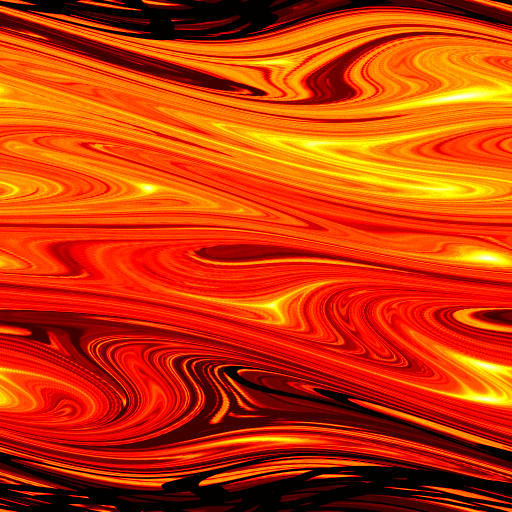}%
\includegraphics[width=0.33\linewidth]{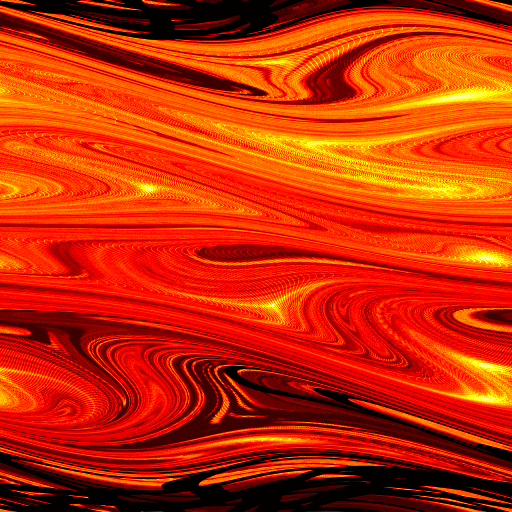}%
\includegraphics[width=0.33\linewidth]{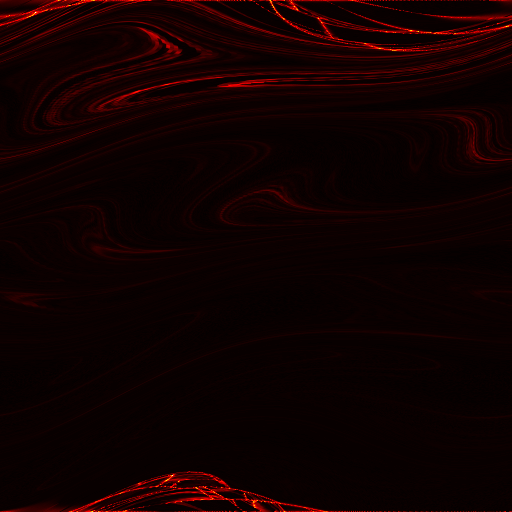}

\caption{PageRank vectors of the symbolic image, or Ulam network, of the Chirikov typical map with $\alpha = 0.9$ and uniform teleportation. From left to right, we show the standard PageRank vector, the weighted PageRank vector using the unweighted cell in-degree count as the weighting term, and the reverse PageRank vector. Each node in the graph is a point $(x,y)$, and it links to all other points $(x,y)$ reachable via the map $f$ (see the text). The graph is weighted by the likelihood of the transition. PageRank, itself, highlights both the attractors (the bright regions), and the contours of the transient manifold that leads to the attractor. The weighted vector looks almost identical, but it exhibits an interesting stippling effect. The reverse PageRank highlights regions of the phase-space that are exited quickly, and thus, these regions are dark or black in the PageRank vector.  The solution vectors were scaled by the cube-root for visualization purposes.  These figures are incredibly beautiful and show important transient regions of these dynamical systems.}
\label{fig:chirikov}
\end{figure}

\subsection{PageRank in sports} \label{sec:sports}
Stochastic matrices and eigenvector ranking methods are nothing new in the realm of sports ranking~\cite{keener1993-ranking,Callaghan-2007-random,Langville-2012-book}. One of the natural network constructions for sports is the winner network. Each team is a node in the network, and node $i$ points to node $j$ if $j$ won in the match between $i$ and $j$. These networks are often weighted by the score by which team $j$ beat team $i$. \Citet{govan2008-pagerank-football} used the centrality sense of PageRank with uniform teleportation and $\alpha = 0.85$ to rank football teams with these winner networks. The intuitive idea underlying these rankings is that of a random fan that follows a team until another team beats them, at which point they pick up the new team, and periodically restarts with an arbitrary team. In the \citet{govan2008-pagerank-football} construction, they corrected dangling nodes using a strongly preferential modification, although, we note that a sink preferential modification may have been more appropriate given the intuitive idea of a random fan. \citet{Radicchi-2011-tennisrank} used PageRank on a network of tennis players with the same construction. Again, this was a weighted network. PageRank with $\alpha = 0.85$ and uniform teleportation on the tennis network placed Jimmy Conors in the best player position. 

\subsection{PageRank in literature: BookRank} \label{sec:literature}
PageRank methods help with three problems in literature. What are the most important books? Which story paths in hypertextual literature are most likely? And what should I read next?

For the first question, \citet{Jockers-2012-authorrank} defines a complicated distance metric between books using topic modeling ideas from latent Dirichlet allocation~\cite{Blei-2003-LDA}. Using PageRank as a centrality measure on this graph, in concert with other graph analytic tools, allows Jockers to argue that Jane Austin and Walter Scott are the most original authors of the 19th century. 

Hypertextual literature contains multiple possible story paths for a single novel. Among American children of similar age to me, the most familiar would be the \emph{Choose your own adventure} series. Each of these books consists of a set of storylets; at the conclusion of a storylet, the story either ends, or presents a set of possibilities for the next story. \citet{Kontopoulou-2012-storyrank} argue that the random surfer model for PageRank maps perfectly to how users read these books. Thus, they look for the most probable storylets in a book. For this problem, the graphs are directed and acyclic, the stochastic matrix is normalized by outdegree, and we have a standard PageRank problem. They are careful to model a weakly preferential PageRank system that deterministically transitions from a terminal (or dangling) storylet back to the start of the book. Teleporting is uniform in their experiments. They find that both PageRank and a ranking system they derive give useful information about the properties of these stories.

\paragraph{Books \famp\ tags: BookRank}
Traditional library catalogs use a carefully curated set of index terms to indicate the contents of books. These enabled content-based search prior to the existence of fast full-text search engines. Social cataloging sites such as LibraryThing and Shelfari allow their users to curate their own set of index terms for books that they read, and easily share this information among the user sites. The data on these websites consists of \emph{books} and \emph{tags} that indicate the topics of books. BookRank, which is localized PageRank on the bipartite book-tag graph~\cite{Meng-2009-bookrank}, produces eerily accurate suggestions for what to read next. For instance, if we use teleportation to localize on Golub and van Loan's text ``Matrix Computations'', Boyd and Vandenberghe's ``Convex Optimization'', and Hastie, Tibshirani, and Friedman's ``Elements of Statistical Learning'', then the top suggestion is a book on Combinatorial Optimization by Papadimitriou and Steiglitz.  A similar idea underlies the general FolkRank system~\cite{Hotho2006-folksonomies} that we'll see shortly (Section~\ref{sec:knowledge}).

\subsection{PageRank in bibliometrics: TimedPageRank, CiteRank, AuthorRank} \label{sec:bibliometrics}

The field of bibliometrics is another big producer and consumer of network ranking methods, starting with seminal work by Garfield on aggregating data into a citation network between journals~\cite{Garfield-1955-citation,Garfield-1963-citation} and proceeding through \citet{Pinski-1976-influence}, who defined a close analogue of PageRank. In almost all of these usages, PageRank is used as a centrality measure to reveal the most important journals, papers, and authors.

\paragraph{Citations among journals} The citation network Garfield originally collected and analyzed is the journal-journal citation network. It is a weighted network where each node is a journal and each edge is the number of citations between articles of the journals. \textsc{ISI}'s impact factor is a more refined analysis of these citation patterns. \citet{Bollen-2006-journal-status} takes \textsc{ISI}'s methods a step further and finds that a combination of the impact factor with the PageRank value in the journal citation produces a ranked list of journals that better correlates with experts' judgements.  PageRank is used as a centrality measure here with uniform teleportation and weights that correspond to the weighted citation network. The graph had around 6000 journals. The Eigenfactor system~\cite{West-2010-eigenfactor} uses a PageRank vector on the journal co-citation network with uniform teleportation and $\alpha = 0.85$ to measure the influence of a journals. It also shows these rankings on easy-to-browse website.

\paragraph{Citations among papers: TimedPageRank, CiteRank}
Moving beyond individual journals, we can also study the citation network among individual papers using PageRank. In a paper citation network, each node is an individual article and the edges are directed based on the citation. Modern bibliographic and citation databases such as arXiv and DBLP make these networks easy to construct. They tend to have hundreds of thousands of nodes and a few million edges. TimedPageRank is an idea to weight the edges of the stochastic matrix in PageRank such that more recent citations are more important. Formally, it is the solution of 
\[ (\eye - \alpha \mA^T \mD^{-1} \mW) \vx = (1-\alpha) \ve \]
where $\mW$ is a diagonal matrix with weights between $0$ and $1$ that reflects the age of the paper (1 is recent and 0 is old). The matrix $\mA^T \mD^{-1} \mW$ is column sub-stochastic and so this is a pseudo-PageRank problem. CiteRank is a subsequent idea that uses the teleportation in PageRank to increase the rank of recent articles~\cite{Walker-2007-citerank}. Thus, $v_i$ is smaller if paper $i$ is older and $v_i$ is larger if paper $i$ is more recent. The goal of both methods is to produce temporally relevant orderings that remove the bias of older articles to acquire citations.

While the previous two papers focused on how to make article importance more accurate, \citet{chen2007-finding-gems} attempts to use PageRank in concert with the number of citations to find \emph{hidden gems}. One notable contribution is the study of $\alpha$ in citation analysis: based on a heuristic argument about how we build references for an article, they recommend $\alpha = 0.5$. Moreover, they find papers with higher PageRank scores than would be expected given their citation count. These are the hidden gems of the literature. \Citet{Ma-2008-PageRank-citation} uses the same idea in a larger study and find a similar effect.  

\paragraph{Citations among authors: AuthorRank}
Another type of bibliographic network is the co-authorship graph. For each paper, insert edges among all co-authors. Thus, each paper becomes a clique in the co-authorship network. The weights on each edge are either uniform (and set to $1$), based on the number of papers co-authored, or based on another weighting construction defined in that paper. All of these constructions produce an undirected network. PageRank on this network gives a practical ranking of the most important authors~\cite{Liu-2005-co-authorship-networks}. The teleportation is uniform with $\alpha = 0.85$, or can be focused on a subset of authors to generate an area-specific ranking. Their data have a few thousand authors.  These graphs are constructions based on an underlying bipartite matrix $\mB$ that relates authors and papers. More specifically, the weighted co-authorship network is the matrix $\mB \mB^T$. Many such constructions can be related back to the matrix $\smash{\sbmat{0 & \mB \\ \mB^T & 0}}$~\cite{Dhillon-2002-spectral-co-clustering}.  We are not aware of any analysis that makes a relationship between PageRank in the bipartite graph $\smash{\sbmat{0 & \mB \\ \mB^T & 0}}$ and the weighted matrix $\mB \mB^T$.

\paragraph{Author, paper, citation networks}
Citation analysis and co-authorship analysis can, of course, be combined, and that is exactly what \citet{Fiala-2008-PageRank-bibliographic} and \citet{Jezek-2008-citation} do.
Whereas \citet{Liu-2005-co-authorship-networks} study the co-authorship network, here, they study a particular construction that joins the bipartite author-paper network to the citation network to produce an author-citation network. This is a network where author $i$ links to author $j$ if $i$ has a paper that cites $j$ where $j$ is not a co-author on that particular paper. Using $\alpha = 0.9$ and uniform teleportation produces another helpful list of the most important authors. 
In the notation of the previous paragraph, a related construction is the network with adjacency matrix \[ \mA = \bmat{ 0 & \mB \\
                \mB^T & \mC }, \] where $\mB$ is the bipartite author-paper matrix and $\mC$ is the citation matrix among papers. PageRank on these networks takes into account both the co-authorship and directed citation information, and it rewards authors that have many, highly cited papers.  The graphs studied have a few hundred thousand authors and author-author citations.

\subsection{PageRank in databases and knowledge information systems: PopRank, FactRank, ObjectRank, FolkRank} \label{sec:knowledge} 

Knowledge information systems store codified forms of information, typically as a relational database. For instance, a knowledge system about movies consists of relationships between actors, characters, movies, directors, and so own. Contemporary information systems also often contain large bodies of user-generated content through tags, ratings, and such. Ratings are a sufficiently special case that we review them in a forthcoming section (Section~\ref{sec:recommender}), but we will study PageRank and tags here. PageRank serves important roles as both a centrality measure and localized measure in networks derived from a knowledge system. We'll also present slightly more detail on four interesting applications.

\paragraph{Centrality scores: PopRank, FactRank} PageRank's role as a centrality measure in a knowledge information system is akin to its role on the web as an importance measure. For instance, the authors of PopRank~\cite{Nie-2005-PopRank} consider searching through large databases of objects -- think of academic papers -- that have their own internal set of relationships within the knowledge system -- think of co-author relationships. But these papers are also linked to by websites. PopRank uses web-importance as a teleportation vector for a PageRank vector defined on the set of object relationships. The result is a measure of object popularity biased by its web popularity. One of the challenges in using such a system is that collecting good databases of relational information is hard. FactRank helps with this process~\cite{Jain-2010-FactRank}. It is a measure designed to evaluate the importance and accuracy of a fact network. A fact is just a sentence that connects two objects, such as ``David-Gleich \emph{wrote} the-paper PageRank-Beyond-The-Web.'' These sentences come from textual analysis of large web crawls. In a fact network, facts are connected if they involve the same set of objects. Variations on PageRank with uniform teleportation provide lists of important facts. The authors of FactRank found that weighting relationships between facts and using  PageRank scores of this weighted network gave higher performance than both a baseline and standard PageRank method in the task of finding correct facts. The fact networks are undirected and have a few million nodes.

\paragraph{Localized scores: Random-walk with restart, Semi-Supervised Learning}
Prediction tasks akin to the bioinformatics usages of PageRank are standard within knowledge information systems: networks contain noisy relationships, and the task is inferring, or predicting, missing data based on these relationships. \Citet{Zhou-2003-semi-supervised} used a localized PageRank computation to infer the identity of handwritten digits from only a few examples. These problems were called \emph{semi-supervised learning on graphs} because they model the case of finding a vector over vertices (or learning a function) based on a few values of the function (supervised). It differs from the standard supervised learning problem because the graph setup implies that only predictions on the vertices are required, instead of the general prediction problem with arbitrary future inputs. In the particular study, the graph among these images is based on a radial basis function construction. For this task $\alpha = 0.99$ in the pseudo-PageRank system $(\mI - \alpha \mP) \mat{\tilde{Y}} = \mS$, where $\mS$ is a binary matrix indicating \emph{known samples} $S_{ij} = 1$ if image $i$ is known to be digit $j$. The largest value in each row of $\mY = \mD \mat{\tilde{Y}}$ gives the predicted digit for any unknown image. While these graphs were undirected, later work ~\cite{zhou2005-directed-learning} showed how to use PageRank with global teleportation, in concert with symmetric Laplacian structure defined on a directed graph~\cite{Chung2005-directed-laplacian}, to enable the same methodology on a general directed graph. 

\Citet{pan2004-cross-modal-discovery} define a random walk with restart, which is exactly a personalized PageRank system, to infer captions for a database of images. Given a set of images labeled by captions, define a graph where each image is connected to its regions, each region is connected to other regions via a similarity function, and each image is connected to the terms in its caption. A query image is a distribution over regions, and we find terms by solving a PageRank problem with this as the teleportation vector. These graphs are weighted, undirected graphs. Curiously, the authors chose $\alpha$ based on experimentation and found that $\alpha = 0.1$ or $\alpha = 0.2$ works best. They attribute the difference to the incredibly small diameter of their network. Subsequent work in the same vein showed some of the relationships with the normalized Laplacian matrix of a graph~\cite{tong2006-random-walk-restart} and returned to a larger value of $\alpha$ around $0.9$.

\paragraph{Application 1 -- Database queries: ObjectRank}
ObjectRank is an interesting type of database query~\cite{Balmin-2004-ObjectRank}. A typical query to a database will retrieve all of the rows of a specified set of tables matching a precise criteria, such as, ``find all students with a GPA of 3.5 that were born in Minnesota.'' These tables often have internal relationships -- the database schema -- that would help determine which are the most important returned results. In the ObjectRank model, a user queries the database with a textual term. The authors describe a means to turn the database objects and schema into a sub-stochastic transition matrix and define ObjectRank as the query-dependent solution of the PageRank linear system where the teleportation vector  reflects textual matches.  They suggest a great deal of flexibility with defining the weights of this matrix. For instance, there may be no natural direction for many of these links and the authors suggest differently weighting forward edges and backward edges -- their intuition is that a paper cited by many important papers is itself important, but that citing important paper papers does not transfer any importance.  They use $\alpha = 0.85$ and the graphs have a few million edges.

\paragraph{Application 2 -- Folksonomy search: FolkRank}
A more specific situation is folksonomy search. A folksonomy is a collection of objects, users, and tags. Each entry is a triplet of these three items. A user such as myself may have tagged a picture on the flickr network with the term ``sunset'' if it contained a sunset, thus creating the triplet (picture,user,``sunset''). FolkRank scores~\cite{Hotho2006-folksonomies} are designed to measure the importance of an object, tag, or user with respect to a small set of objects, tags, or users that define a topic. (This idea is akin to topic-sensitive PageRank,~\citealt{haveliwala2002-topic-pagerank}.) These scores then help reveal important objects \emph{related} to a given search, as well as the tags that relate them. The scores are based on localized PageRank scores from an undirected, tripartite weighted network. There is a wrinkle, however. The FolkRank scores are taken as the \emph{difference} between a PageRank vector computed with $\alpha = 1$ and $\alpha = 1/2$. The graph is undirected, so the solution with $\alpha = 1$ is just the weighted degree distribution. Thus, FolkRank downweights items that are important for \emph{everyone}. 

\paragraph{Application 3 -- Semantic relatedness}
The Open Directory Project, or \textsc{odp}, is a hierarchical, categorical index of web-pages that organizes them into related groups. \Citet{Bar-Yossef2008-reverse-PageRank} suggests a way of defining the relatedness of two categories on \textsc{odp} using their localized PageRank scores. The goal is to generalize the idea of the least-common ancestor to random walks to give a different sense of the distance between categories. To do so, create a graph from the directed hierarchy in the \textsc{odp}. Let $\vx$ be the reverse PageRank vector that teleports back to a single category, and let $\vy$ be the reverse PageRank vector that teleports back to another (single) category. Then the relatedness of these categories is the cosine of the angle between $\vx$ and $\vy$. Let $\vx$ be the localized PageRank vector (Note the use of reverse PageRank here so that edges go from child to parent.) They show evidence that this is a useful measure of relationship in ODP.

\paragraph{Application 4 -- Logic programming} A fundamental challenge with scaling logic programming systems like Prolog is that there is an exponential explosion of potential combinations and rules to evaluate and, unless the system is extremely well-designed, these cannot be pruned away. This limits applications to almost trivial problems. Internally, Prolog-type systems resolve, or prove, logical statements using a search procedure over an implicitly defined graph that may be infinite. At each node of the graph, the proof system generates all potential neighbors of the node by applying a rule set given by the logic system. Thus, given one node in the graph, the search procedure eventually visits all nodes. Localized PageRank provides a natural way to restrict the search space to only ``short'' and ``likely'' proofs~\cite{Wang-2013-ProPPR}. Formally, they use PageRank's random teleportation to control the expansion of the search procedure. However, there is an intuitive explanation for the random restarts in such a problem: periodically we all abandon our current line of attack in a proof and start out fresh. Their system with localized PageRank allows them to realize this behavior in a rigorous way.

\subsection{PageRank in recommender systems: ItemRank} \label{sec:recommender}

A recommender system attempts to predict what its users will do based on their past behavior. Netflix and Amazon have some of the most famous recommendation systems that predict movies and products, respectively, their users will enjoy. Localized PageRank helps to score potential predictions in many research studies on recommender systems.

\paragraph{Query reformulation}
A key component of modern web-search systems is predicting future queries. \Citet{Boldi-2008-QueryFlow} run localized PageRank on a query reformulation-graph that
describes how users rewrite queries with $\alpha = 0.85$. Two queries, $q_1$ and $q_2$, are connected in this graph if a user searched for $q_1$ before $q_2$ within a close time-frame and both $q_1$ and $q_2$ have some non-trivial textual relationships. This graph is directed and weighted. The teleportation vector is localized on the current query, or a small set of previously used terms. PageRank has since had great success for many tasks related to query suggestion and often performs among the best methods evaluated~\cite{Song-2012-Query-suggestion}.

\paragraph{Item recommendation: ItemRank}
Both Netflix and Amazon's recommender systems are called \emph{item recommendation} problems. Users rate items -- typically with a 5-star scale -- and we wish to recommend items that a user will rate highly. The ratings matrix is an items-by-users matrix where $R_{ij}$ is the numeric rating given to item $i$ by user $j$. These ratings form a bipartite network between the two groups and we collapse this to a graph over items as follows. Let $G$ be a weighted graph where the weights on an edge $(i,j)$ are the number of users that rated both items $i$ and $j$. (These weights are equivalent to the number of paths of length 2 between each pair of items in terms of the bipartite graph.) Let $\mP$ be the standard weighted random walk construction on $G$. Then the ItemRank scores~\cite{Gori2007-itemrank} are the solutions of: 
\[ (\eye - \alpha \mP) \mS = (1-\alpha) \mR \mD_{\mR}^{-1} \]
where $\mD_{\mR}$ are column sums of the rating matrix. Each column of $\mS$ is a set of recommendations for user $j$, and $S_{ij}$ is a proxy for the interest of user $j$ in item $i$. Note that any construction of the transition matrix $\mP$ based on correlations between items based on user ratings would work in this application as well.

\paragraph{Link prediction}
Given the current state of a network, link prediction tries to predict which edges will come into existence in the future. \Citet{nowell2006-link-prediction} evaluated the localized PageRank score of an unknown edge in terms of its predictive power. These PageRank values were entries in the matrix $(\mI - \alpha \mP)^{-1}$ for edges that currently do not exist in the graph. PageRank with $\alpha$ between 0.5 and 0.99 was not one of their best predictors, but the Katz matrix $(\mI - \alpha \mA)^{-1}$ was one of the best with $\alpha = 0.0005$. Note that Katz's matrix is, implicitly, a pseudo-PageRank problem if $\alpha < \frac{1}{d_{\max}}$ where $d_{\max}$ is the largest degree in the graph. The co-authorship graphs tested seem to have had degrees less than $2000$, making this hidden pseudo-PageRank problem one of the best predictors of future co-authorship. More recent work using PageRank for predicting links on the Facebook social network includes a training phase to estimate weights of the matrix $\mP$ to achieve higher prediction \cite{Backstrom-2011-Supervised-Random-Walks}. Localized PageRank is believed to be part of Twitter's follower suggestion scheme too~\cite{Bahmani-2010-PageRank}.

\subsection{PageRank in social networks: BuddyRank, TwitterRank} \label{sec:social}

PageRank serves three purposes in a social network, where the nodes are people and the edges are some type of social relationship. First, as we discussed in the previous section, it can help solve link prediction problems to find individuals that will become friends soon. Second, it serves a classic role in evaluating the centrality of the people involved to estimate their social status and power. Third, it helps evaluate the potential influence of a node on the opinions of the network. 

\paragraph{Centrality: BuddyRank}
Centrality methods have a long history in social networks  -- see \citet{katz1953-status} and \citet{Vigna-2009-spectral} for a good discussion. 
The following claim is difficult to verify, but we suspect that the first use of PageRank in a large-scale social network was the  BuddyRank measure employed by BuddyZoo in 2003.\footnote{\url{http://web.archive.org/web/20050724231459/http://buddyzoo.com/}}
BuddyZoo collected contact lists from users of the AOL Instant Messenger service and assembled them into one of the first large-scale social networks studied via graph theoretic methods. 
Since then, PageRank has been used to rank individuals in the Twitter network by their importance~\cite{Java2007-Twitter-blog} and to help characterize properties of the Twitter social network by the PageRank values of their users~\cite{Kwak2010-Twitter}. These are standard applications of PageRank with global teleportation and $\alpha \approx 0.85$.

\paragraph{Influence} Finding influential individuals is one of the important questions in social network analysis. This amounts to finding nodes that can spread their influence widely. More formalizations of this question result in NP-hard optimization problems~\cite{Kempe-2003-social-influence} and thus, heuristics and approximation algorithms abound~\cite{Kempe-2003-social-influence,Kempe-2005-influence}. Using Reverse PageRank with global teleportation as a heuristic outperforms out-degree for this task, as shown by \citet{Java-2006-spread} for web-blog influence and \citet{Bar-Yossef2008-reverse-PageRank} for the social network LiveJournal. Reverse PageRank, instead of traditional PageRank, is the correct model to understand the \emph{origins} of influence -- the distinction is much like the treatment of hubs and authorities in other  ranking models on networks~\cite{kleinberg1999-hits,blondel2004-graph-similarity}. These ideas also extend to finding topical authorities in social networks by using the teleportation vector and topic-specific transition probabilities to localize the PageRank vector in TwitterRank~\cite{Weng2010-TwitterRank}.

\subsection{PageRank in the web, redux: HostRank, DirRank, TrustRank, BadRank, VisualRank} \label{sec:web}

At the conclusion of our survey of applications, we return to uses of PageRank on the web itself. Before we begin, let us address the elephant in the room, so to speak. Does Google still use PageRank? Google reportedly uses a basket of ranking metrics to determine the final order that results are returned. These evolve continuously and vary depending on where and when you are searching. It is unclear to what extent PageRank, or more generally, link analysis measures play a role in Google's search ordering, and this is a closely guarded secret unlikely to be known outside of an inner-circle at Google. One the one hand, in perhaps the only large-scale published study on PageRank's effectiveness in a search engine, \citet{najork2007-hits} found that it underperformed in-degree. On the other hand, PageRank is still widely believed to still play some role based on statements from Google. For instance, Matt Cutts, a Google engineer, wrote about how Google uses PageRank to determine crawling behavior~\cite{cutts2006-pagerank-crawling}, and later wrote about how Google moved to a full substochastic matrix in terms of their PageRank vector~\cite{cutts2009-pagerank-scuplting}. The latter case was designed to handle a new class of link on the web called \texttt{rel=nofollow}. This was an optional HTML parameter that would tell a crawler that the following link is not useful for relevance judgements. All the major web companies created this parameter to combat links created in the comment sections of extremely high quality pages such as the Washington Post. These links are created by users of the Washington Post, not the staff themselves, and shouldn't constitute an endorsement on a page. Cutts described how Google's new PageRank equation would count these \texttt{rel=nofollow} links in the \emph{degree} of a node when it was computing a stochastic normalization, but would remove the links when computing relevance.  For instance, if my page had three true links and two \texttt{rel=nofollow} links, then my true links would have probabilities $1/5$ instead of $1/3$, and the sum of my outgoing probability would be $3/5$ instead of 1. Thus, Google's PageRank computation is a pseudo-PageRank problem now. 

Outside of Google's usage, PageRank is also used to evaluate the web at coarser levels of granularity through HostRank and DirRank. Reverse PageRank provides a good measure of a page's similarity to a hub, according to 
both \citet{Fogaras-2003-where} and \citet{Bar-Yossef2008-reverse-PageRank}. PageRank and reverse PageRank also provide information on the ``spaminess'' of particular pages through metrics such as TrustRank and BadRank. PageRank-based information also helped to identify spam directly in a study by \citet{becchetti2008-spam}. Finally, PageRank helps identify canonical images to place on a web-search result (VisualRank).

\paragraph{Coarse PageRank: HostRank, DirRank}
\Citet{arasu2002-pagerank} was an important early paper that defined HostRank, where the web is aggregated at the level of hostnames. In this case, all links to and from a hostname, such as \texttt{www.cs.purdue.edu}, become equivalent. This particular construction models a random surfer that, when visiting a page, makes a censored, or silent, transition within all pages on the same host, and then follows a random link. The HostRank scores are the sums of these modified PageRank scores on the pages within each host~\cite{gleich2007-approximate}. Later work included BlockRank~\cite{kamvar2003-blockrank}, which used HostRank to initialize PageRank, and DirRank~\cite{eiron2004-ranking}, which forms an aggregation at the level of \emph{directories} of websites. 

\paragraph{Trust, Reputation, \famp\ Spam: TrustRank, BadRank}
PageRank typically provides authority scores to estimate the importance of a page on the web. As the commercial value of websites grew, it became highly profitable to create \emph{spam} sites that contain no new information content but attempt to capture Google search results by appearing to contain information. BadRank~\cite{Sobek-2002-PR0} and TrustRank~\cite{gyongyi2004-trustrank} emerged as new, link analysis tools to combat the problem. Essentially, these ideas solve localized, reverse PageRank problems. The results are either used directly, or as a ``safe teleportation'' vector for PageRank, as in TrustRank, or in concert with other techniques, as likely done in BadRank. \Citet{Kolda-2009-BadRank} generalizes these models and includes the idea of adding self-links to fix the dangling nodes, like in sink preferential PageRank, but they add them everywhere, not just at dangling nodes. For spam-link applications, this way of handling dangling nodes is superior -- in a modeling sense -- to the alternatives.

\paragraph{Wikipedia} 
Wikipedia is often used as a subset of the web for studying ranking. It is easy to download the data for the entire website, which makes building the web-graph convenient. (A crawl from a few years ago is in the sparse matrix repository,~\citealt{Davis2011-matrix}, as the matrix \texttt{Gleich/wikipedia-20070206}.) Current graphs of the English language pages have around 100,000,000 links and 10,000,000 articles. The nature of the pages on Wikipedia also makes it easy to evaluate results anecdotally. For instance, we would all raise an eyebrow and demand explanation if ``Gene Golub'' was the page with highest global PageRank in Wikipedia. On the other hand, this result might be expected if we solve a localized PageRank problem around the Wikipedia article  for ``numerical linear algebra.'' \Citet{wissner-gross2006-wikipedia-reading} used Wikipedia as a test set to build reading lists using a combination of localized and global PageRank scores. Later, \citet{Zhirov-2010-wikipedia} computed a 2d ranking on Wikpedia by combining global PageRank and reverse PageRank. Finally, this 2d ranking showed that Frank Sinatra was one of the most important people~\cite{Eom-2014-wikipedia}.

\paragraph{Image search: VisualRank}
PageRank also helps to identify ``canonical'' images to display as a visual summary of a larger set of images returned from an image search engine. In the VisualRank system, \citet{Jing-2008-VisualRank} compute PageRank of an image similarity graph generated from an image search result. The graphs are small -- around 1000 nodes -- which reflects the standard textual query results, and they are also symmetric and weighted. They solve a global PageRank problem with uniform teleportation or high-result biased teleportation. The highest ranked images are canonical images of Mona Lisa amid a diverse collection of views.

\section{PageRank generalizations}
\label{sec:generalizations}
Beyond the applications discussed so far, there is an extremely wide set of PageRank-like models that do not fit into the canonical definition and constructions from Section~\ref{sec:constructions}. These support a wide range of additional applications with mathematics that differs slightly, and some of them are formal mathematical generalizations of the PageRank vectors.   For instance, in prior work, we studied PageRank with a random teleportation parameter~\cite{constantine2010-rapr}. The standard deviation of these vectors resulted in increased accuracy in detecting spam pages on the web. We now survey some of these formal generalizations.

\subsection{Diffusions, damped sums, \famp\ heat kernels}\label{sec:damped}
Recall that the pseudo-PageRank vector is the solution of 
\eqref{eq:pseudo-pr}, 
\[ (\eye - \alpha \mPbar) \vy = \vf. \]
Since all of the eigenvalues of $\mPbar$ are bounded by $1$ in magnitude, the solution $\vy$ has an expansion in terms of the Neumann series: 
\[ \vy = \sum_{k = 0}^\infty \alpha^k \mPbar^k \vf. \]
This expressions gives the pseudo-PageRank vector as a damped sum of powers of $\mPsub$  where each power, $\mPbar^k$, has the geometrically decaying weight $\alpha^k$. These are often called \emph{damped diffusions} because this equation models how the quantities in $\vf$ probabilistically diffuse through the graph where the probability of a path of length $k$ is damped by $\alpha^k$. Many other sequences serve the same purpose as pointed out by a variety of authors.

\paragraph{Generalized damping} Perhaps the most general setting for these ideas is the generalized damped PageRank vector:  
\begin{equation} \label{eq:damped-pr}
 \vz = \sum_{k=0}^\infty \gamma_k \mPbar^k \vf
\end{equation}
where $\gamma_k$ is a non-negative $\ell_1$-sequence (that is, $\sum_k \gamma_k < \infty$ and $\gamma_k \ge 0$). This reduces to PageRank if $\gamma_k = \alpha^k$.
\Citet{huberman1998-surfing-behavior} suggested using such a construction where $\gamma_k$ arises from real-world path following behaviors on the web, which they found to resemble inverse Gaussian functions. Later results from \citet{baeza2006-generalizing} proposed essentially the same formula in~\eqref{eq:damped-pr}. They suggested a variety of interesting functions $\gamma_k$, including some with only a finite number of non-zero terms. These authors drew their motivation from the earlier work of TotalRank~\cite{boldi2005-totalrank}, which suggested $\gamma_k = \frac{1}{k+1} - \frac{1}{k+2}$ in order to evaluate the TotalRank vector: 
\[ \vz = \int_{0}^1 (\eye - \alpha \mPbar)^{-1} (1-\alpha) \vv \, d\alpha. \]
This integrates over all possible values of $\alpha$. (As an aside, this integral is well defined because a unique limiting PageRank value exists at $\alpha = 1$, see Section~\ref{sec:limit}. This sidesteps a technical issue with the singular matrix at $\alpha=1$.) Our work with making the value of $\alpha$ in PageRank a random variable is really a further generalization~\cite{constantine2010-rapr}. Let $\vx(\alpha)$ be a parameterized form for the PageRank vector for a fixed graph and teleportation vector. Let $A$ be a random variable supported on $[0,1]$ with an infinite number of finite moments, that is, $E[A^k] < \infty$ for all $k$. Intuitively, $A$ is the probability that a random user of the web follows a link. Our idea was to use the expected value of PageRank $E[\vx(A)]$ to produce a ranking that reflected the distribution of path-following behaviors in the random surfers. We showed: 
\[ E[\vx(A)] = \sum_{k=0}^\infty (E[A^k] - E[A^{k+1}]) \mP^k \vv. \] This results in a family of sequences of $\gamma_k$ that depend on the random variable $A$. Recent work by \citet{Kollias-2013-multidamping} shows how to evaluate these generalized damped vectors as a polynomial combination of PageRank vectors in the sense of \eqref{eq:pr}.

\paragraph{Heat kernels \famp\ matrix exponentials} Another specific case of generalized damping arises from the matrix exponential, or heat kernel: 
\[ \vz = e^{\beta \mPsub} \vf = \sum_{k=0}^{\infty} \frac{\beta^k}{k!} \mPsub^k \vf. \]
Such functions arose in a wide variety of domains that would be tangential to review here~\cite{Estrada-2000-index,miller2001-expohits,Kondor-2002-diffusion,farahat2006-exphits,chung2007-pagerank-heat,Kunegis-2009-learning-spectral,Estrada-2010-matrix-functions}. In terms of a specific relationship with PageRank, \citet{Yang-2007-DiffusionRank} noted that the pseudo-PageRank vector itself was a single-term approximation to these heat kernel diffusions. Consider 
\[ \vz = e^{\beta \mPsub} \vf \quad \Leftrightarrow \quad  e^{-\beta \mPsub} \vz = \vf \quad  \Leftrightarrow  \quad (\mI - \beta \mPsub + \ldots) \vz = \vf. \]
If we truncate the heat kernel expansion after just the first two terms ($\mI - \beta \mP$), then we arise at the pseudo-PageRank vector. (A similar result holds for the formal PageRank vector too.)

\subsection{PageRank limits \famp\ eigenvector centrality} \label{sec:limit} 
In the definition of PageRank used in this paper, we assume that $\alpha < 1$. PageRank, however, has a unique well-defined limit as $\alpha \to 1$~\cite{serra2005-jcf-pagerank,boldi2005-damping,boldi2009-functional}. This is easy to prove using the Jordan canonical form for the case of PageRank~\eqref{eq:pr}, but extensions to pseudo-PageRank are slightly more nuanced. As in the previous section, let $\vx(\alpha)$ be the PageRank vector as a function of $\alpha$ for a fixed stochastic $\mP$: $(\mI - \alpha \mP) \vx(\alpha) = (1-\alpha) \vv$. Let $\mX \mJ \mX^{-1}$ be the Jordan canonical form of $\mP$. Because $\mP$ is stochastic, it's eigenvalues on the unit circle are all semi-simple~\cite[page 696]{meyer2000-book}. Thus: 
\[ \mJ = \sbmat{\mI \\ & \mD_1 \\ & & \mJ_2}, \]
where $\mD_1$ is a diagonal matrix of the eigenvalues on the unit circle and $\mJ_2$ is a Jordan block for all eigenvalues with $|\lambda| < 1$. We now substitute this into the PageRank equation: 
\[ (\eye - \alpha \mP) \vx(\alpha) = (1-\alpha) \vv \Leftrightarrow  (\eye - \alpha \mJ)^{-1} \underbrace{\vec{\hat{x}}(\alpha)}_{\mX^{-1} \vx(\alpha)} = (1-\alpha) \underbrace{\vec{\hat{v}}}_{\mX^{-1} \vv}. \]
Using the structure of $\mJ$ decouples these equations: 
\[ \left( \sbmat{\mI \\ & \mI \\ & & \mI} - \alpha \sbmat{\mI \\ & \mD_1 \\ & & \mJ_2} \right) \sbmat{ \vec{\hat{x}}(\alpha)_0 \\ \vec{\hat{x}}(\alpha)_1 \\ \vec{\hat{x}}(\alpha)_2 } = (1-\alpha) \sbmat{ \vec{\hat{v}}_0 \\ \vec{\hat{v}}_1 \\ \vec{\hat{v}}_2 }. \]
As $\alpha \to 1$, both $\vec{\hat{x}}(\alpha)_1$ and $\vec{\hat{x}}(\alpha)_2$ go to $0$ because these linear systems remain non-singular. Also, note that $\vec{\hat{x}}(\alpha)_0 = \vec{\hat{v}}_1 $ for all $\alpha \not=1$, so this point is a removable singularity. Thus, $\vec{\hat{x}}$ can be uniquely defined at $\alpha = 1$, and hence, so can $\vx$. \Citet{vigna2005-trurank} uses the structure of this limit to argue that taking $\alpha \to 1$ in practical applications is not useful unless the underlying graph is strongly connected, and they propose a new PageRank construction to ensure this property. Subsequent work by \citet{Vigna-2009-spectral} does a nice job of showing how limiting cases of PageRank vectors converge to traditional eigenvector centrality measures from bibliometrics~\cite{Pinski-1976-influence} and social network analysis~\cite{katz1953-status}.  

The pseudo-PageRank problem does not have nice limiting properties in our formulation. Let $\vy(\alpha)$ be a parametric form for the solution of the pseudo-PageRank system $(\eye - \alpha \mPsub) \vy = \vf$. As $\alpha \to 1$, then $\vy \to \infty$, unless the non-zero support of $\vf$ lies outside of a recurrent class, in which case $\vy \to 0$. \Citet{boldi2005-damping} defines the PseudoRank system as: 
\[ (\eye - \alpha \mPbar) \vy = (1-\alpha) \vf \]
instead. This system always has a non-infinite limit as $\alpha \to 1$. It could, however, have zero as a limit if $\mPbar$ has all eigenvalues less than 1. 

\subsection{Over-teleportation, negative teleportation, \famp\ the Fiedler vector}
\label{sec:negative}
The next generalization of PageRank is to values of $\alpha > 1$. These arose in our prior work to understand the convergence of quadrature formulas for approximating the expected value of PageRank with random teleportation parameters~\cite{constantine2010-rapr}. \Citet{Mahoney-2012-local} subsequently showed an amazing relationship among (i) the Fiedler vector of a graph~\cite{Fiedler1973-algebraic-connectivity,Anderson-1985-Laplacian,Pothen-1990-partitioning}, (ii) a particular generalization of the PageRank vector, which we call MOV, and (iii) values of $\alpha > 1$.

\paragraph{The Fiedler vector} {In contrast to the remainder of this paper, the constructions and statements in this section are specific to connected, undirected graphs with symmetric adjacency matrices.}  The conductance of a set of vertices in a graph is defined as the number of edges leaving that set, divided by the sum of the degrees of the vertices within the set. Conductance and its relatives are often used as numeric quality scores for graph partitioning in parallel processing~\cite{Pothen-1990-partitioning} and for community detection in graphs~\cite{Schaeffer-2007-clustering}. It is NP-hard to find the set of smallest conductance, but Fiedler's vector reveals information about the presence of small conductance sets in a graph through the Cheeger inequality~\cite{Chung-1992-book}. Let $G$ be a connected, undirected graph with symmetric adjacency matrix $\mA$ and diagonal degree matrix $\mD$. The Fiedler vector is the generalized eigenvector of $(\mD - \mA) \vq = \lambda_* \mD \vq$, with the smallest positive eigenvalue $\lambda_* > 0$. All of the generalized eigenvalues are non-negative, the smallest is $0$, and the largest is bounded above by 1.  Cheeger's inequality bounds the relationship between $\lambda_*$ and the set of smallest conductance in the graph. 
    
\paragraph{MOV} The MOV vector is defined as the pseudo-inverse solution $\vr$ in the consistent linear system of equations: 
\begin{equation}
[(\mD - \mA) - \gamma \mD] \vr = \rho(\gamma) \mD \vs,
\end{equation}
where $\gamma < \lambda_*$, $\vs$ is a ``seed'' vector such that $\vs^T \mD \ve = 0$, and $\rho(\gamma)$ is a scaling constant such that $\vr$ has a fixed norm. When $\gamma = 0$, this system is singular but consistent, and thus, we take the pseudo-inverse solution. Note that this is equivalent to the pseudo-PageRank problem: 
\[ (\mI - \alpha \mP) \vz = \alpha \rho(\gamma) \vfhat \]
where $\alpha = \frac{1}{1-\gamma}$, $\vz = \mD \vr$, and $\vfhat = \mD \vs$.  The properties of $\vs$ in MOV imply that $\vfhat^T \ve = 0$, and thus, $\vfhat$ must have negative elements, which generalizes the standard pseudo-PageRank. 

In a small surprise, allowing $\vf$ to take on negative values results in no additional modeling power in the case of symmetric $\mA$. To establish this result, we first observe that: 
\[ (\eye - \alpha \mA \mD^{-1}) \tfrac{\sigma}{1-\alpha} \vd = \sigma \vd. \] This preliminary fact shows that the pseudo-PageRank vector of an undirected graph with teleportation according to the degree vector $\vd$ simply results in a rescaling. We can use this property to shift any $\vf$ with negative values in a controlled manner: 
\[ (\eye - \alpha \mP) \underbrace{(\vz + \tfrac{\sigma}{1-\alpha} \vd)}_{\vy} = \underbrace{\alpha \vfhat + \sigma \vd}_{\vf}, \]
where $\sigma$ is chosen such that $\vf \ge 0$ element-wise. Solving these shifted pseudo-PageRank systems, then, effectively computes the solution $\vz$ with a well-understood bias term $\theta \vd$. This is easy to remove afterwards: $\vz = \vy - \frac{\sigma}{1-\alpha} \vd$, at which point we can normalize $\vz$ to account for $\rho(\gamma)$ if desired.

\paragraph{Values of $\alpha > 1$} While this generalization with negative entries in $\vf$ gives no additional mathematical power, it does permit a seamless limit from PageRank vectors to the Fiedler vector. Let $\alpha_* = \frac{1}{1-\lambda_*} > 1$. The formal result is that the limit $\lim_{\alpha \to \alpha_*} \frac{1}{\rho(\alpha)} \vz(\alpha) = \vq$, the Fiedler vector. Note that for the construction of $\mP = \mA \mD^{-1}$ on an undirected, connected graph, we have that $\mP^k \to \frac{1}{\ve^T \vd} \vd \ve^T$ as $k \to \infty$. Thus, when $\alpha = 1$, the MOV solution $\vz$ is equivalent to the solution of $(\eye - (\mP - \frac{1}{\ve^T \vd} \vd \ve^T) ) \vz = \vf$ because the right hand side of $\vf$ is orthogonal to the left eigenvector $\ve^T$.  As all of the eigenvalues of $(\mP - \frac{1}{\ve^T \vd} \vd \ve^T)$ are distinct from $1$, this is a non-singular system. And this fact allows the limit construction to pass through $\alpha = 1$ seamlessly. If we additionally assume that $\vf^T \vq \not= 0$, then 
\[ \lim_{\alpha \to \alpha_*} \frac{1}{\rho(\alpha)} \vz(\alpha) = \vq, \]
and the limiting value of PageRank with over-teleportation is the Fiedler vector. The analysis in \citet{Mahoney-2012-local}, then, interpolates many of the arguments in  \citet{Vigna-2009-spectral} beyond $\alpha = 1$ to yield important relationships between spectral graph theory and PageRank vectors.


\subsection{Complex-valued teleportation parameters and a time-dependent generalization} \label{sec:complex}
Again, let $\vx(\alpha)$ be the PageRank vector (in the sense of \eqref{eq:pr}) as a function of $\alpha$ for a fixed graph and teleportation vector. Mathematically, the PageRank vector is a rational function of $\alpha$. This simple insight produces a host of possibilities, one of which is evaluating the derivative of the PageRank vector~\cite{boldi2005-damping,greif2006-arnoldi,gleich2007-pagerank-deriv}. Another is that PageRank with complex-valued $\alpha$ is a reasonable mathematical generalization~\cite{horn2008-parametric-google}. Let $\alpha \in \CC$ with $|\alpha| < 1$, then $\vx(\alpha)$ has some interesting properties and usages. In \citet{constantine2010-rapr}, we needed to bound $\normof[1]{\vx(\alpha)}$ when $\alpha$ was complex. If $\alpha$ is real and $0 < \alpha < 1$, then $\normof[1]{\vx(\alpha)} = 1$ independent of the choice of $\alpha$. However, if $\alpha$ is complex we have: $\normof[1]{\vx} \le \frac{|1-\alpha|}{1-|\alpha|}.$ Later, in \citet{Gleich-2014-dynamic-pagerank}, we found that complex values of $\alpha$ arise in computing closed form solutions to PageRank dynamical systems where the teleportation vector is a function of time, but the graph remains fixed. Specifically, the PageRank vector with complex teleportation arises in the steady-state time-dependent solution of
\[ \vx'(t) = (1-\alpha) \vv(t) - (\eye - \alpha \mP) \vx(t), \] when $\vv(t)$ oscillates between a fixed set of vectors. Thus, PageRank with complex teleportation is both an interesting mathematical problem and has practical applications in a time-dependent generalization of PageRank.

\subsection{Censored node constructions} \label{sec:dummy-node} \label{sec:censored-node}
The final generalized PageRank construction we wish to discuss is, in fact, a PageRank system hiding inside a Markov chain construction with a different type of teleportation. In order to motivate the particular form of this construction, we first review an alternative derivation of the PageRank vector.

A censored node in a Markov chain is one that exhibits a virtual influence on the chain in the sense that walks proceed through it as if it were not present. Let us illustrate this idea by crafting teleportation behavior into a Markov chain in a different way and computing the PageRank vector itself by censoring that Markov chain. Suppose that we want to find the stationary distribution of a walk where, if a surfer wants to teleport, they first transition to a teleport state, and then move from the teleport state according to the teleportation distribution. The transition matrix of the Markov chain is: 
\[ \mP' = \bmat{ \alpha \mP & \vv \\ (1-\alpha) \ve^T & 0 }. \]
And the stationary distribution of this Markov chain is: \[\bmat{ \alpha \mP & \vv \\ (1-\alpha) \ve^T & 0 } \bmat{ \vx' \\ \gamma } = \bmat{ \vx' \\ \gamma}, \qquad \ve^T \vx' + \gamma = 1. \]
Censoring the final teleportation state amounts to modeling its influence on the stationary distribution, but leaving it with no final contribution. Put more formally, the stationary distribution of the censored chain is just $\vx'$ renormalized to be a probability distribution: $\vx = \vx' / \ve^T \vx'$. In other words, censoring that state models pretending that it wasn't there when determining the stationary distribution, but the transitions through it still took place; this is equivalent to the standard teleporting behavior. The vector $\vx$ is also the PageRank vector of $\alpha, \mP, \vv$, which follows from 
\[ \vx = \tfrac{1-\alpha}{\gamma} \vx' = \tfrac{1-\alpha}{\gamma} \left[ \alpha \mP \vx' + \gamma \vv \right] = \alpha \mP \vx + (1-\alpha) \vv. \]
\Citet{Tomlin-2003-traffic-rank}, \citet{eiron2004-ranking} and \citet[written in 2003]{lee2007-fast} were some of the first to observe this property in the context of PageRank; although censoring Markov chains goes back much further.

There is a more general class of PageRank-style methods that craft transitions akin to non-uniform teleportation through a censored node construction. Consider, for example, adding a teleportation node $c$ that connects to all nodes of a network as in Figure~\ref{fig:dummy}. This construction gives rise to an implicit PageRank problem with $\alpha = \frac{d_{max}}{d_{max}+1}$ as we now show. Let 
\[ \mA' = \bmat{\mA & \ve \\
                \vv^T & 0 } \]
be the adjacency matrix for the modified graph, where $\vv$ is the teleportation destination vector. A uniform random walk on this adjacency structure has a single recurrent class, and thus, a unique stationary distribution~\cite[Theorem 3.23]{berman1987-nonnegative}. The stationary distribution satisfies:
\[ \mP' \vx = \vx \quad \Leftrightarrow 
\bmat{ \mA^T (\mD + \mI)^{-1} & \vv / \ve^T \vv \\
                 \ve  (\mD + \mI)^{-1}  & 0 }
\bmat{ \vx' \\ \gamma } = \bmat{ \vx' \\ \gamma }. \]
Let $\mPsub' = \mA^T (\mD + \mI)^{-1}$. The censored distribution $\vx = \vx' / \ve^T \vx' $ is a normalized solution of the linear system: 
\begin{equation} \label{eq:dummy-node-pr}
(\mI - \mPsub') \vx = \vv. 
\end{equation}
Note that $\vc^T = \ve^T - \ve^T \mPsub' > 0$, and so all columns are substochastic. This means that all of the nodes ``leak probability'' in a semi-formal sense. Scaling $\mPsub'$ by $\frac{1}{1-c_{\max}} > 1$ adjusts the probabilities such that there is at least one column that is stochastic. Consequently, we can write $\mPsub' = \alpha \mPsub$ where $\alpha = (1-c_{\max})$ and $\mPsub = \frac{1}{1-c_{\max}} \mPsub'$. By substituting this form into~\eqref{eq:dummy-node-pr}, we have that $\vx$ is the normalized solution of a pseudo-PageRank problem where $\alpha = \frac{1}{1-c_{\max}}$. Assuming that $\mA$ is an unweighted graph, then $\alpha = \frac{d_{\max}}{d_{\max}+1}$.

\begin{figure}
\footnotesize
\begin{minipage}[m]{0.3\linewidth}
\centering
\includegraphics{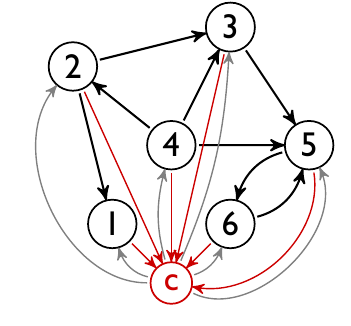}
\end{minipage}%
\begin{minipage}[m]{0.7\linewidth}
\centering
$\mPsub = \bmat{0 & 1/3 & 0   & 0   & 0 & 0 \\
                0 & 0   & 0   & 1/4 & 0 & 0 \\
                0 & 1/3 & 0   & 1/4 & 0 & 0 \\
                0 & 0   & 0   & 0   & 0 & 0 \\
                0 & 0   & 1/2 & 1/4 & 0 & 1/2 \\
                0 & 0   & 0   & 0   & 1/2 & 0}
\quad 
\vc = \bmat{ 1 \\ 1/3 \\ 1/2 \\ 1/4 \\ 1/2 \\ 1/2 }$
\end{minipage} \medskip 

\begin{minipage}[m]{0.3\linewidth}
\centering
(a) A directed graph with a censored node $c$
\end{minipage}%
\begin{minipage}[m]{0.7\linewidth}
\centering
(b) The substochastic matrix and correction vector for the Markov chain construction after node $c$ is censored.
\end{minipage}
\caption{In this teleportation construction we add a node $c$ to the original graph as in subfigure (a). The probability of transitioning to $c$, or teleporting after we censor node $c$, then depends on the degree of each node. A random surfer teleports from node $2$ with probability $1/3$ and from node $4$ with probability $1/4$. This construction yields a substochastic matrix $\mPsub$ where all the elements of the correction vector $c$ are positive. This means it's equivalent to a PageRank construction with $\alpha = 1-\min \vc$, or $\alpha = 3/4$ for this problem. }
\label{fig:dummy}
\end{figure}

This idea frequently reappears; for instance, \citet{Bini-2010-combined}, \citet{Lu-2011-LeaderRank} and \citet{Schlote-2012-road-rank} all use it in different contexts.

In a different context, this same type of analysis shows that the Colley matrix for ranking sports teams is a diagonally perturbed, generalized pseudo-PageRank system~\cite{Colley-2002-bias,Langville-2012-book}. Let the symmetric, weighted graph $G$ represent the network of times team $i$ played team $j$. And let $\vf$ be a vector of the accumulated scores differences over all of those games. It could have negative entries, rendering it outside of our traditional framework, however, as we saw in Section~\ref{sec:negative}, this is a technical detail that is avoidable. The vector of Colley scores $\vr$ is the solution of: 
\[ (\mD + 2\mI - \mA) \vr = \vf. \]
Let $\vy = (\mD+2\mI)^{-1} \vr$. Then, 
\[ (\mI - \alpha \mPsub) \vy = \vf \]
where $\alpha = \frac{d_{\max}}{d_{\max}+2}$.
This analysis establishes a formal relationship between Markov style ranking metrics~\cite{Langville-2012-book} and the least-squares style ranking metrics employed by Colley. It also enables us to use fast PageRank solvers for these Colley systems.

\section{Discussion \famp\ a positive outlook on PageRank's wide usage} \label{sec:conclusion}

PageRank has gone from being used to evaluate the importance of web pages to a much broader set of applications. The method is easy to understand, is robust and reliable, and converges quickly. Most applications solve PageRank problems of only a modest size, with fewer than 100,000,000 vertices; this regime permits a much wider variety of algorithmic possibilities than those that must only work on the web. 

We have avoided the discussion of PageRank algorithms entirely in this manuscript because, by and large, simple iterations suffice for fast convergence in this regime. Values of $\alpha$ tend to be less than $0.99$, which requires fewer than $2000$ iterations to converge to machine precision. Nevertheless, there is ample opportunity to accelerate PageRank computations in this regime as there are ideas that involve computing \emph{multiple} PageRank vectors for a single task. One example is PerturbationRank~\cite{Du-2008-PerturbationRank}, which uses the perturbation induced in a PageRank vector by removing a node to compute a new ranking of all nodes. Thus, innovations in PageRank algorithms are still relevant, but must be made within the context of these small-scale uses. 

There are also a great number of PageRank-like ideas outside of our specific canon. For instance, none of the following models fit our PageRank framework: 
\begin{description}
\item [BrowseRank] \Citet{liu2008-browserank} define a continuous time Markov chain to model a random surfer that \emph{remains} on a specified node for some period of time before transitioning away. This model handles sites like Facebook, where users spend significant time interacting within a single page. 
\item[Voting] \Citet{boldi2009-voting} and \citet{Boldi-2011-viscous} define a voting model on a social network inspired by computing Katz or PageRank on a random network where each node picks \emph{a single outlink}. 
\item[SimRank] This problem is another way to use PageRank-like ideas to evaluate similarity between the nodes of a graph (like the IsoRank problem)~\cite{jeh2002-simrank}. SimRank, however, involves solving a linear system on a \emph{row sub-stochastic matrix}.
\item[Food webs] The food web is a network where species are linked by the feeding relationships.  \Citet{allesina2009-pagerank-foodwebs} point out a few modifications to PageRank to make it more appropriate. First, they use teleportation to model a constant loss of nutrients from higher-level species and reinject these nutrients through primary producers (such as bacteria). Second, they note that the flow of importance ought to be reversed so that species $i$ points to species $j$ if $i$ is important for $j$'s survival. The result is an eigenvector computation on a fully stochastic matrix.
\item[Opinion dynamics] Models of opinion formation on social network posit strikingly similar dynamics to a PageRank iteration~\cite{Friedkin-1990-opinion,Friedkin-1999-opinion}. The essential difference is that a node's opinion is the average of its in-links, instead of propagating its value to its out-links. Like SimRank, this results in a row sub-stochastic iteration.
\end{description}
The details and implications of these models are fascinating, and this manuscript would double in size if we were to treat them.

In most successful studies, PageRank is used as a type of baseline measure. Its widespread success above extremely simple baselines suggests that its modified random walk is a generally useful alternative worth investigating. In this sense, it resembles a form of regularization. And this is how we feel that PageRank should be used. Note that studies must use care when determining the type of PageRank construction -- weighted, reverse, Dirichlet, etc. -- as this can make a large difference in the quality of the results. Consider, for instance, the use of weighted PageRank in \citet{Jiang-2009-ranking-spaces}. In their application, they wanted to model where people move, and it makes good sense that businesses would locate in places with many connections and therefore, that people would preferentially move to these same locations. Given the generality of the idea and its intuitive appeal, we anticipate continued widespread use of the PageRank idea over the next 20 years in new and exciting applications as network data continues to proliferate. 

\bibliographystyle{dgleichurlshortlinktitles}
\bibliography{../../../bibliography/all-bibliography}

\begin{thebibliography}{149}
\providecommand{\natexlab}[1]{#1}
\providecommand{\bibextraformatting}{\relax}
\bibextraformatting

\bibitem[\protect\citeauthoryear{Aldous and
  Fill}{2002}]{Aldous-2014-reversible}
D.~\textsc{Aldous} and J.~A. \textsc{Fill}.
\newblock
  \href{http://www.stat.berkeley.edu/~aldous/RWG/book.html}{\emph{Reversible
  {Markov} chains and random walks on graphs}}.
\newblock 2002.
\newblock Unfinished monograph, recompiled 2014, available at
  \url{http://www.stat.berkeley.edu/$\sim$aldous/RWG/book.html}.

\bibitem[\protect\citeauthoryear{Allesina and
  Pascual}{2009}]{allesina2009-pagerank-foodwebs}
S.~\textsc{Allesina} and M.~\textsc{Pascual}.
\newblock \href{http://dx.doi.org/10.1371/journal.pcbi.1000494}{\emph{Googling
  food webs: Can an eigenvector measure species' importance for
  coextinctions?}}
\newblock PLoS Comput Biol, 5~(9), p. e1000494, 2009.
\newblock \href {http://dx.doi.org/10.1371/journal.pcbi.1000494}
  {\normalcolor\path{doi:10.1371/journal.pcbi.1000494}}.

\bibitem[\protect\citeauthoryear{Andersen et~al.}{2006}]{andersen2006-local}
R.~\textsc{Andersen}, F.~\textsc{Chung}, and K.~\textsc{Lang}.
\newblock
  \href{http://www.math.ucsd.edu/~fan/wp/localpartition.pdf}{\emph{Local graph
  partitioning using {PageRank} vectors}}.
\newblock In \emph{Proceedings of the 47th Annual IEEE Symposium on Foundations
  of Computer Science}. 2006.

\bibitem[\protect\citeauthoryear{Anderson and
  Morley}{1985}]{Anderson-1985-Laplacian}
W.~N.~J. \textsc{Anderson} and T.~D. \textsc{Morley}.
\newblock \href{http://dx.doi.org/10.1080/03081088508817681}{\emph{Eigenvalues
  of the {Laplacian} of a graph}}.
\newblock Linear and Multilinear Algebra, 18~(2), pp. 141--145, 1985.
\newblock \href {http://dx.doi.org/10.1080/03081088508817681}
  {\normalcolor\path{doi:10.1080/03081088508817681}}.

\bibitem[\protect\citeauthoryear{Arasu et~al.}{2002}]{arasu2002-pagerank}
A.~\textsc{Arasu}, J.~\textsc{Novak}, A.~\textsc{Tomkins}, and
  J.~\textsc{Tomlin}.
\newblock \href{http://www2002.org/CDROM/poster/173.pdf}{\emph{{PageRank}
  computation and the structure of the web: Experiments and algorithms}}.
\newblock In \emph{Proceedings of the 11th international conference on the
  World Wide Web}. 2002.
\newblock Poster session.

\bibitem[\protect\citeauthoryear{Avrachenkov
  et~al.}{2010}]{Avrachenkov-2010-uniform}
K.~\textsc{Avrachenkov}, B.~\textsc{Ribeiro}, and D.~\textsc{Towsley}.
\newblock \href{http://dx.doi.org/10.1007/978-3-642-18009-5_10}{\emph{Improving
  random walk estimation accuracy with uniform restarts}}.
\newblock In \emph{Algorithms and Models for the Web-Graph}, pp. 98--109.
  Springer Berlin Heidelberg, 2010.
\newblock \href {http://dx.doi.org/10.1007/978-3-642-18009-5_10}
  {\normalcolor\path{doi:10.1007/978-3-642-18009-5_10}}.

\bibitem[\protect\citeauthoryear{Backstrom and
  Leskovec}{2011}]{Backstrom-2011-Supervised-Random-Walks}
L.~\textsc{Backstrom} and J.~\textsc{Leskovec}.
\newblock \href{http://dx.doi.org/10.1145/1935826.1935914}{\emph{Supervised
  random walks: Predicting and recommending links in social networks}}.
\newblock In \emph{Proceedings of the Fourth ACM International Conference on
  Web Search and Data Mining}, pp. 635--644. 2011.
\newblock \href {http://dx.doi.org/10.1145/1935826.1935914}
  {\normalcolor\path{doi:10.1145/1935826.1935914}}.

\bibitem[\protect\citeauthoryear{Baeza-Yates
  et~al.}{2006}]{baeza2006-generalizing}
R.~\textsc{Baeza-Yates}, P.~\textsc{Boldi}, and C.~\textsc{Castillo}.
\newblock \href{http://dx.doi.org/10.1145/1148170.1148225}{\emph{Generalizing
  {PageRank}: Damping functions for link-based ranking algorithms}}.
\newblock In \emph{Proceedings of the 29th annual international ACM SIGIR
  conference on Research and development in information retrieval
  ({SIGIR2006})}, pp. 308--315. 2006.
\newblock \href {http://dx.doi.org/10.1145/1148170.1148225}
  {\normalcolor\path{doi:10.1145/1148170.1148225}}.

\bibitem[\protect\citeauthoryear{Bahmani et~al.}{2010}]{Bahmani-2010-PageRank}
B.~\textsc{Bahmani}, A.~\textsc{Chowdhury}, and A.~\textsc{Goel}.
\newblock \href{http://dl.acm.org/citation.cfm?id=1929861.1929864}{\emph{Fast
  incremental and personalized pagerank}}.
\newblock Proc. VLDB Endow., 4~(3), pp. 173--184, 2010.

\bibitem[\protect\citeauthoryear{Balmin et~al.}{2004}]{Balmin-2004-ObjectRank}
A.~\textsc{Balmin}, V.~\textsc{Hristidis}, and Y.~\textsc{Papakonstantinou}.
\newblock
  \href{http://portal.acm.org/citation.cfm?id=1316689.1316739}{\emph{{ObjectRank}:
  authority-based keyword search in databases}}.
\newblock In \emph{Proceedings of the Thirtieth international conference on
  Very large data bases - Volume 30}, pp. 564--575. 2004.

\bibitem[\protect\citeauthoryear{Bar-Yossef and
  Mashiach}{2008}]{Bar-Yossef2008-reverse-PageRank}
Z.~\textsc{Bar-Yossef} and L.-T. \textsc{Mashiach}.
\newblock \href{http://dx.doi.org/10.1145/1458082.1458122}{\emph{Local
  approximation of {PageRank} and reverse {PageRank}}}.
\newblock In \emph{CIKM '08: Proceeding of the 17th ACM conference on
  Information and knowledge management}, pp. 279--288. 2008.
\newblock \href {http://dx.doi.org/10.1145/1458082.1458122}
  {\normalcolor\path{doi:10.1145/1458082.1458122}}.

\bibitem[\protect\citeauthoryear{Bayati et~al.}{2013}]{Bayati-2013-netalign}
M.~\textsc{Bayati}, D.~F. \textsc{Gleich}, A.~\textsc{Saberi}, and
  Y.~\textsc{Wang}.
\newblock
  \href{http://dx.doi.org/10.1145/2435209.2435212}{\emph{Message-passing
  algorithms for sparse network alignment}}.
\newblock ACM Trans. Knowl. Discov. Data, 7~(1), pp. 3:1--3:31, 2013.
\newblock \href {http://dx.doi.org/10.1145/2435209.2435212}
  {\normalcolor\path{doi:10.1145/2435209.2435212}}.

\bibitem[\protect\citeauthoryear{Becchetti et~al.}{2008}]{becchetti2008-spam}
L.~\textsc{Becchetti}, C.~\textsc{Castillo}, D.~\textsc{Donato},
  R.~\textsc{Baeza-Yates}, and S.~\textsc{Leonardi}.
\newblock \href{http://dx.doi.org/10.1145/1326561.1326563}{\emph{Link analysis
  for web spam detection}}.
\newblock ACM Trans. Web, 2~(1), pp. 1--42, 2008.
\newblock \href {http://dx.doi.org/10.1145/1326561.1326563}
  {\normalcolor\path{doi:10.1145/1326561.1326563}}.

\bibitem[\protect\citeauthoryear{Berkhin}{2005}]{berkhin2005-survey}
P.~\textsc{Berkhin}.
\newblock
  \href{http://www.internetmathematics.org/volumes/2/1/Berkhin.pdf}{\emph{A
  survey on {PageRank} computing}}.
\newblock Internet Mathematics, 2~(1), pp. 73--120, 2005.

\bibitem[\protect\citeauthoryear{Berman and
  Plemmons}{1994}]{berman1987-nonnegative}
A.~\textsc{Berman} and R.~J. \textsc{Plemmons}.
\newblock \emph{Nonnegative Matrices in the Mathematical Sciences}, SIAM, 1994.

\bibitem[\protect\citeauthoryear{Bini et~al.}{2010}]{Bini-2010-combined}
D.~A. \textsc{Bini}, G.~M.~D. \textsc{Corso}, and F.~\textsc{Romani}.
\newblock \href{http://dx.doi.org/10.1016/j.cam.2010.02.003}{\emph{A combined
  approach for evaluating papers, authors and scientific journals}}.
\newblock Journal of Computational and Applied Mathematics, 234~(11), pp. 3104
  -- 3121, 2010.
\newblock Numerical Linear Algebra, Internet and Large Scale Applications.
\newblock \href {http://dx.doi.org/10.1016/j.cam.2010.02.003}
  {\normalcolor\path{doi:10.1016/j.cam.2010.02.003}}.

\bibitem[\protect\citeauthoryear{Blei et~al.}{2003}]{Blei-2003-LDA}
D.~M. \textsc{Blei}, A.~Y. \textsc{Ng}, and M.~I. \textsc{Jordan}.
\newblock
  \href{http://jmlr.org/papers/volume3/blei03a/blei03a.pdf}{\emph{Latent
  {Dirichlet} allocation}}.
\newblock Journal of Machine Learning Research, 3, pp. 993--1022, 2003.

\bibitem[\protect\citeauthoryear{Blondel
  et~al.}{2004}]{blondel2004-graph-similarity}
V.~D. \textsc{Blondel}, A.~\textsc{Gajardo}, M.~\textsc{Heymans},
  P.~\textsc{Senellart}, and P.~V. \textsc{Dooren}.
\newblock \href{http://dx.doi.org/10.1137/S0036144502415960}{\emph{A measure of
  similarity between graph vertices: Applications to synonym extraction and web
  searching}}.
\newblock SIAM Review, 46~(4), pp. 647--666, 2004.
\newblock \href {http://dx.doi.org/10.1137/S0036144502415960}
  {\normalcolor\path{doi:10.1137/S0036144502415960}}.

\bibitem[\protect\citeauthoryear{Boldi}{2005}]{boldi2005-totalrank}
P.~\textsc{Boldi}.
\newblock \href{http://www2005.org/cdrom/docs/p898.pdf}{\emph{{TotalRank}:
  Ranking without damping}}.
\newblock In \emph{Poster Proceedings of the 14th international conference on
  the World Wide Web ({WWW2005})}, pp. 898--899. 2005.

\bibitem[\protect\citeauthoryear{Boldi et~al.}{2008}]{Boldi-2008-QueryFlow}
P.~\textsc{Boldi}, F.~\textsc{Bonchi}, C.~\textsc{Castillo},
  D.~\textsc{Donato}, A.~\textsc{Gionis}, and S.~\textsc{Vigna}.
\newblock \href{http://dx.doi.org/10.1145/1458082.1458163}{\emph{The query-flow
  graph: Model and applications}}.
\newblock In \emph{Proceedings of the 17th ACM Conference on Information and
  Knowledge Management}, pp. 609--618. 2008.
\newblock \href {http://dx.doi.org/10.1145/1458082.1458163}
  {\normalcolor\path{doi:10.1145/1458082.1458163}}.

\bibitem[\protect\citeauthoryear{Boldi
  et~al.}{2009{\natexlab{a}}}]{boldi2009-voting}
P.~\textsc{Boldi}, F.~\textsc{Bonchi}, C.~\textsc{Castillo}, and
  S.~\textsc{Vigna}.
\newblock \href{http://dx.doi.org/10.1145/1645953.1646052}{\emph{Voting in
  social networks}}.
\newblock In \emph{CIKM '09: Proceeding of the 18th ACM conference on
  Information and knowledge management}, pp. 777--786. 2009{\natexlab{a}}.
\newblock \href {http://dx.doi.org/10.1145/1645953.1646052}
  {\normalcolor\path{doi:10.1145/1645953.1646052}}.

\bibitem[\protect\citeauthoryear{Boldi et~al.}{2011}]{Boldi-2011-viscous}
---{}---{}---.
\newblock \href{http://dx.doi.org/10.1145/1953122.1953154}{\emph{Viscous
  democracy for social networks}}.
\newblock Commun. ACM, 54~(6), pp. 129--137, 2011.
\newblock \href {http://dx.doi.org/10.1145/1953122.1953154}
  {\normalcolor\path{doi:10.1145/1953122.1953154}}.

\bibitem[\protect\citeauthoryear{Boldi et~al.}{2007}]{boldi2007-traps}
P.~\textsc{Boldi}, R.~\textsc{Posenato}, M.~\textsc{Santini}, and
  S.~\textsc{Vigna}.
\newblock \href{http://dx.doi.org/10.1007/978-3-540-78808-9_10}{\emph{Traps and
  pitfalls of topic-biased {PageRank}}}.
\newblock In \emph{WAW2006, Fourth International Workshop on Algorithms and
  Models for the Web-Graph}, pp. 107--116. 2007.
\newblock \href {http://dx.doi.org/10.1007/978-3-540-78808-9_10}
  {\normalcolor\path{doi:10.1007/978-3-540-78808-9_10}}.

\bibitem[\protect\citeauthoryear{Boldi et~al.}{2005}]{boldi2005-damping}
P.~\textsc{Boldi}, M.~\textsc{Santini}, and S.~\textsc{Vigna}.
\newblock \href{http://dx.doi.org/10.1145/1060745.1060827}{\emph{{PageRank} as
  a function of the damping factor}}.
\newblock In \emph{Proceedings of the 14th international conference on the
  World Wide Web ({WWW2005})}. 2005.
\newblock \href {http://dx.doi.org/10.1145/1060745.1060827}
  {\normalcolor\path{doi:10.1145/1060745.1060827}}.

\bibitem[\protect\citeauthoryear{Boldi
  et~al.}{2009{\natexlab{b}}}]{boldi2009-functional}
---{}---{}---.
\newblock \href{http://dx.doi.org/10.1145/1629096.1629097}{\emph{{PageRank}:
  Functional dependencies}}.
\newblock ACM Trans. Inf. Syst., 27~(4), pp. 1--23, 2009{\natexlab{b}}.
\newblock \href {http://dx.doi.org/10.1145/1629096.1629097}
  {\normalcolor\path{doi:10.1145/1629096.1629097}}.

\bibitem[\protect\citeauthoryear{Bollen
  et~al.}{2006}]{Bollen-2006-journal-status}
J.~\textsc{Bollen}, M.~A. \textsc{Rodriquez}, and H.~\textsc{Van~de Sompel}.
\newblock \href{http://dx.doi.org/10.1007/s11192-006-0176-z}{\emph{Journal
  status}}.
\newblock Scientometrics, 69~(3), pp. 669--687, 2006.
\newblock \href {http://dx.doi.org/10.1007/s11192-006-0176-z}
  {\normalcolor\path{doi:10.1007/s11192-006-0176-z}}.

\bibitem[\protect\citeauthoryear{Brin and Page}{1998}]{brin1998-anatomy}
S.~\textsc{Brin} and L.~\textsc{Page}.
\newblock \href{http://dx.doi.org/10.1016/S0169-7552(98)00110-X}{\emph{The
  anatomy of a large-scale hypertextual web search engine}}.
\newblock Comput. Netw. ISDN Syst., 30~(1-7), pp. 107--117, 1998.
\newblock \href {http://dx.doi.org/10.1016/S0169-7552(98)00110-X}
  {\normalcolor\path{doi:10.1016/S0169-7552(98)00110-X}}.

\bibitem[\protect\citeauthoryear{Callaghan
  et~al.}{2007}]{Callaghan-2007-random}
T.~\textsc{Callaghan}, P.~J. \textsc{Mucha}, and M.~A. \textsc{Porter}.
\newblock \href{http://www.jstor.org/stable/27642330}{\emph{Random walker
  ranking for {NCAA} division {I-A} football}}.
\newblock The American Mathematical Monthly, 114~(9), pp. 761--777, 2007.

\bibitem[\protect\citeauthoryear{Chen et~al.}{2007}]{chen2007-finding-gems}
P.~\textsc{Chen}, H.~\textsc{Xie}, S.~\textsc{Maslov}, and S.~\textsc{Redner}.
\newblock \href{http://dx.doi.org/10.1016/j.joi.2006.06.001}{\emph{Finding
  scientific gems with {Google's} {PageRank} algorithm}}.
\newblock Journal of Informetrics, 1~(1), pp. 8--15, 2007.
\newblock \href {http://dx.doi.org/10.1016/j.joi.2006.06.001}
  {\normalcolor\path{doi:10.1016/j.joi.2006.06.001}}.

\bibitem[\protect\citeauthoryear{Chepelianskii}{2010}]{Chepelianskii-2010-SoftwareRank}
A.~\textsc{Chepelianskii}.
\newblock \href{http://arxiv.org/abs/1003.5455}{\emph{Towards physical laws for
  software architecture}}.
\newblock arXiv, cs.SE, p. 1003.5455, 2010.

\bibitem[\protect\citeauthoryear{Chung}{2005}]{Chung2005-directed-laplacian}
F.~\textsc{Chung}.
\newblock \href{http://dx.doi.org/10.1007/s00026-005-0237-z}{\emph{Laplacians
  and the {Cheeger} inequality for directed graphs}}.
\newblock Annals of Combinatorics, 9~(1), pp. 1--19, 2005.
\newblock 10.1007/s00026-005-0237-z.
\newblock \href {http://dx.doi.org/10.1007/s00026-005-0237-z}
  {\normalcolor\path{doi:10.1007/s00026-005-0237-z}}.

\bibitem[\protect\citeauthoryear{Chung}{2007}]{chung2007-pagerank-heat}
---{}---{}---.
\newblock \href{http://dx.doi.org/10.1073/pnas.0708838104}{\emph{The heat
  kernel as the {PageRank} of a graph}}.
\newblock Proceedings of the National Academy of Sciences, 104~(50), pp.
  19735--19740, 2007.
\newblock \href {http://dx.doi.org/10.1073/pnas.0708838104}
  {\normalcolor\path{doi:10.1073/pnas.0708838104}}.

\bibitem[\protect\citeauthoryear{Chung et~al.}{2011}]{Chung-2011-Dirichlet}
F.~\textsc{Chung}, A.~\textsc{Tsiatas}, and W.~\textsc{Xu}.
\newblock \href{http://dx.doi.org/10.1007/978-3-642-21286-4_9}{\emph{Dirichlet
  pagerank and trust-based ranking algorithms}}.
\newblock In \emph{Algorithms and Models for the Web Graph}, pp. 103--114.
  Springer Berlin Heidelberg, 2011.
\newblock \href {http://dx.doi.org/10.1007/978-3-642-21286-4_9}
  {\normalcolor\path{doi:10.1007/978-3-642-21286-4_9}}.

\bibitem[\protect\citeauthoryear{Chung}{1992}]{Chung-1992-book}
F.~R.~L. \textsc{Chung}.
\newblock \emph{Spectral Graph Theory}, American Mathematical Society, 1992.

\bibitem[\protect\citeauthoryear{Colley}{2002}]{Colley-2002-bias}
W.~N. \textsc{Colley}.
\newblock \emph{Colley's bias free college football ranking method: The
  {Colley} matrix explained}.
\newblock Technical report, Princeton University, 2002.

\bibitem[\protect\citeauthoryear{Constantine and
  Gleich}{2010}]{constantine2010-rapr}
P.~G. \textsc{Constantine} and D.~F. \textsc{Gleich}.
\newblock \href{http://dx.doi.org/10.1080/15427951.2009.10129185}{\emph{Random
  alpha {PageRank}}}.
\newblock Internet Mathematics, 6~(2), pp. 189--236, 2010.
\newblock \href {http://dx.doi.org/10.1080/15427951.2009.10129185}
  {\normalcolor\path{doi:10.1080/15427951.2009.10129185}}.

\bibitem[\protect\citeauthoryear{Crofts and
  Higham}{2011}]{Crofts-2011-googling}
J.~J. \textsc{Crofts} and D.~J. \textsc{Higham}.
\newblock \href{http://dx.doi.org/10.1080/15427951.2011.604284}{\emph{Googling
  the brain: Discovering hierarchical and asymmetric network structures, with
  applications in neuroscience}}.
\newblock Internet Mathematics, 7, pp. 233--254, 2011.
\newblock \href {http://dx.doi.org/10.1080/15427951.2011.604284}
  {\normalcolor\path{doi:10.1080/15427951.2011.604284}}.

\bibitem[\protect\citeauthoryear{Cutts}{2006}]{cutts2006-pagerank-crawling}
M.~\textsc{Cutts}.
\newblock
  \href{http://www.mattcutts.com/blog/q-a-thread-march-27-2006/}{\emph{{Q\&A}
  march 27, 2006}}.
\newblock Matt Cutts: Gadgets, Google, and SEO. Available online
  \url{http://www.mattcutts.com/blog/q-a-thread-march-27-2006/}, 2006.

\bibitem[\protect\citeauthoryear{Cutts}{2009}]{cutts2009-pagerank-scuplting}
---{}---{}---.
\newblock
  \href{http://www.mattcutts.com/blog/pagerank-sculpting/}{\emph{{PageRank}
  scupting}}.
\newblock Matt Cutts: Gadgets, Google, and SEO blog. Available online
  \url{http://www.mattcutts.com/blog/pagerank-sculpting/}, 2009.

\bibitem[\protect\citeauthoryear{Davis and Hu}{2010}]{Davis2011-matrix}
T.~A. \textsc{Davis} and Y.~\textsc{Hu}.
\newblock \href{http://www.cise.ufl.edu/research/sparse/matrices/}{\emph{The
  {University of Florida} sparse matrix collection}}.
\newblock ACM Transactions on Mathematical Software, 2010.
\newblock To appear.

\bibitem[\protect\citeauthoryear{{Del Corso}
  et~al.}{2004}]{DelCorso-2004-sparse}
G.~\textsc{{Del Corso}}, A.~\textsc{Gull\'{i}}, and F.~\textsc{Romani}.
\newblock \href{http://dx.doi.org/10.1007/978-3-540-30216-2_10}{\emph{Fast
  {PageRank} computation via a sparse linear system (extended abstract)}}.
\newblock In \emph{Algorithms and Models for the Web-Graph}, pp. 118--130.
  Springer Berlin Heidelberg, 2004.
\newblock \href {http://dx.doi.org/10.1007/978-3-540-30216-2_10}
  {\normalcolor\path{doi:10.1007/978-3-540-30216-2_10}}.

\bibitem[\protect\citeauthoryear{{Del Corso}
  et~al.}{2005}]{delcorso2005-sparse-system}
G.~M. \textsc{{Del Corso}}, A.~\textsc{Gull{\'i}}, and F.~\textsc{Romani}.
\newblock
  \href{http://www.internetmathematics.org/volumes/2/3/DelCorso.pdf}{\emph{Fast
  {PageRank} computation via a sparse linear system}}.
\newblock Internet Mathematics, 2~(3), pp. 251--273, 2005.

\bibitem[\protect\citeauthoryear{Dhillon}{2001}]{Dhillon-2002-spectral-co-clustering}
I.~S. \textsc{Dhillon}.
\newblock \href{http://dx.doi.org/10.1145/502512.502550}{\emph{Co-clustering
  documents and words using bipartite spectral graph partitioning}}.
\newblock In \emph{Proceedings of the seventh ACM SIGKDD international
  conference on Knowledge discovery and data mining}, pp. 269--274. 2001.
\newblock \href {http://dx.doi.org/10.1145/502512.502550}
  {\normalcolor\path{doi:10.1145/502512.502550}}.

\bibitem[\protect\citeauthoryear{Du et~al.}{2008}]{Du-2008-PerturbationRank}
Y.~\textsc{Du}, J.~\textsc{Leung}, and Y.~\textsc{Shi}.
\newblock
  \href{http://web.eecs.umich.edu/~shiyy/mypapers/DLS08.pdf}{\emph{{PerturbationRank}:
  A non-monotone ranking algorithm}}.
\newblock Technical report, University of Michigan, 2008.

\bibitem[\protect\citeauthoryear{Eiron et~al.}{2004}]{eiron2004-ranking}
N.~\textsc{Eiron}, K.~S. \textsc{McCurley}, and J.~A. \textsc{Tomlin}.
\newblock \href{http://dx.doi.org/10.1145/988672.988714}{\emph{Ranking the web
  frontier}}.
\newblock In \emph{Proceedings of the 13th international conference on the
  World Wide Web ({WWW2004})}, pp. 309--318. 2004.
\newblock \href {http://dx.doi.org/10.1145/988672.988714}
  {\normalcolor\path{doi:10.1145/988672.988714}}.

\bibitem[\protect\citeauthoryear{Eom et~al.}{2014}]{Eom-2014-wikipedia}
Y.-H. \textsc{Eom}, P.~\textsc{Arag\'{o}n}, D.~\textsc{Laniado},
  A.~\textsc{Kaltenbrunner}, S.~\textsc{Vigna}, and D.~L.
  \textsc{Shepelyansky}.
\newblock \href{http://arxiv.org/abs/1405.7183}{\emph{Interactions of cultures
  and top people of {Wikipedia} from ranking of 24 language editions}}.
\newblock arXiv, cs.SI, p. 1405.7183, 2014.

\bibitem[\protect\citeauthoryear{Estrada}{2000}]{Estrada-2000-index}
E.~\textsc{Estrada}.
\newblock
  \href{http://dx.doi.org/10.1016/S0009-2614(00)00158-5}{\emph{Characterization
  of 3d molecular structure}}.
\newblock Chemical Physics Letters, 319~(5-6), pp. 713--718, 2000.
\newblock \href {http://dx.doi.org/10.1016/S0009-2614(00)00158-5}
  {\normalcolor\path{doi:10.1016/S0009-2614(00)00158-5}}.

\bibitem[\protect\citeauthoryear{Estrada and
  Higham}{2010}]{Estrada-2010-matrix-functions}
E.~\textsc{Estrada} and D.~J. \textsc{Higham}.
\newblock \href{http://dx.doi.org/10.1137/090761070}{\emph{Network properties
  revealed through matrix functions}}.
\newblock SIAM Review, 52~(4), pp. 696--714, 2010.
\newblock \href {http://dx.doi.org/10.1137/090761070}
  {\normalcolor\path{doi:10.1137/090761070}}.

\bibitem[\protect\citeauthoryear{Estrada
  et~al.}{2008}]{Estrada-2008-multipartite}
E.~\textsc{Estrada}, D.~J. \textsc{Higham}, and N.~\textsc{Hatano}.
\newblock
  \href{http://dx.doi.org/10.1103/PhysRevE.78.026102}{\emph{Communicability and
  multipartite structures in complex networks at negative absolute
  temperatures}}.
\newblock Phys. Rev. E, 78, p. 026102, 2008.
\newblock \href {http://dx.doi.org/10.1103/PhysRevE.78.026102}
  {\normalcolor\path{doi:10.1103/PhysRevE.78.026102}}.

\bibitem[\protect\citeauthoryear{Farahat et~al.}{2006}]{farahat2006-exphits}
A.~\textsc{Farahat}, T.~\textsc{LoFaro}, J.~C. \textsc{Miller},
  G.~\textsc{Rae}, and L.~A. \textsc{Ward}.
\newblock \href{http://dx.doi.org/10.1137/S1064827502412875}{\emph{Authority
  rankings from {HITS}, {PageRank}, and {SALSA}: Existence, uniqueness, and
  effect of initialization}}.
\newblock SIAM Journal on Scientific Computing, 27~(4), pp. 1181--1201, 2006.
\newblock \href {http://dx.doi.org/10.1137/S1064827502412875}
  {\normalcolor\path{doi:10.1137/S1064827502412875}}.

\bibitem[\protect\citeauthoryear{Fiala
  et~al.}{2008}]{Fiala-2008-PageRank-bibliographic}
D.~\textsc{Fiala}, F.~\textsc{Rousselot}, and K.~\textsc{Ježek}.
\newblock \href{http://dx.doi.org/10.1007/s11192-007-1908-4}{\emph{{PageRank}
  for bibliographic networks}}.
\newblock Scientometrics, 76~(1), pp. 135--158, 2008.
\newblock \href {http://dx.doi.org/10.1007/s11192-007-1908-4}
  {\normalcolor\path{doi:10.1007/s11192-007-1908-4}}.

\bibitem[\protect\citeauthoryear{Fiedler}{1973}]{Fiedler1973-algebraic-connectivity}
M.~\textsc{Fiedler}.
\newblock \href{http://dml.cz/dmlcz/101168}{\emph{Algebraic connectivity of
  graphs}}.
\newblock Czechoslovak Mathematical Journal, 23~(98), pp. 298--305, 1973.

\bibitem[\protect\citeauthoryear{Fogaras}{2003}]{Fogaras-2003-where}
D.~\textsc{Fogaras}.
\newblock \href{http://dx.doi.org/10.1007/978-3-540-39884-4_6}{\emph{Where to
  start browsing the web?}}
\newblock In \emph{Innovative Internet Community Systems}, pp. 65--79. Springer
  Berlin Heidelberg, 2003.
\newblock \href {http://dx.doi.org/10.1007/978-3-540-39884-4_6}
  {\normalcolor\path{doi:10.1007/978-3-540-39884-4_6}}.

\bibitem[\protect\citeauthoryear{Frahm et~al.}{2012}]{Frahm-2012-IntRank}
K.~M. \textsc{Frahm}, A.~D. \textsc{Chepelianskii}, and D.~L.
  \textsc{Shepelyansky}.
\newblock
  \href{http://dx.doi.org/10.1088/1751-8113/45/40/405101}{\emph{{PageRank} of
  integers}}.
\newblock Journal of Physics A: Mathematical and Theoretical, 45~(40), p.
  405101, 2012.
\newblock \href {http://dx.doi.org/10.1088/1751-8113/45/40/405101}
  {\normalcolor\path{doi:10.1088/1751-8113/45/40/405101}}.

\bibitem[\protect\citeauthoryear{Freschi}{2007}]{freschi2007-proteinrank}
V.~\textsc{Freschi}.
\newblock \href{http://dx.doi.org/10.1109/BIBE.2007.4375543}{\emph{Protein
  function prediction from interaction networks using a random walk ranking
  algorithm}}.
\newblock In \emph{Proceedings of the 7th {IEEE} International Conference on
  Bioinformatics and Bioengineering ({BIBE} 2007)}, pp. 42--48. 2007.
\newblock \href {http://dx.doi.org/10.1109/BIBE.2007.4375543}
  {\normalcolor\path{doi:10.1109/BIBE.2007.4375543}}.

\bibitem[\protect\citeauthoryear{Friedkin and
  Johnsen}{1990}]{Friedkin-1990-opinion}
N.~E. \textsc{Friedkin} and E.~C. \textsc{Johnsen}.
\newblock \href{http://dx.doi.org/10.1080/0022250X.1990.9990069}{\emph{Social
  influence and opinions}}.
\newblock The Journal of Mathematical Sociology, 15~(3-4), pp. 193--206, 1990.
\newblock \href {http://dx.doi.org/10.1080/0022250X.1990.9990069}
  {\normalcolor\path{doi:10.1080/0022250X.1990.9990069}}.

\bibitem[\protect\citeauthoryear{Friedkin and
  Johnsen}{1999}]{Friedkin-1999-opinion}
---{}---{}---.
\newblock
  \href{http://www2.cs.siu.edu/~hexmoor/classes/CS539-F10/Friedkin.pdf}{\emph{Social
  influence networks and opinion change}}.
\newblock Advances in Group Processes, 16~(1), pp. 1--29, 1999.

\bibitem[\protect\citeauthoryear{Garfield}{1955}]{Garfield-1955-citation}
E.~\textsc{Garfield}.
\newblock \href{http://dx.doi.org/10.1126/science.122.3159.108}{\emph{Citation
  indexes for science: A new dimension in documentation through association of
  ideas}}.
\newblock Science, 122~(3159), pp. 108--111, 1955.
\newblock \href {http://dx.doi.org/10.1126/science.122.3159.108}
  {\normalcolor\path{doi:10.1126/science.122.3159.108}}.

\bibitem[\protect\citeauthoryear{Garfield and
  Sher}{1963}]{Garfield-1963-citation}
E.~\textsc{Garfield} and I.~H. \textsc{Sher}.
\newblock \href{http://dx.doi.org/10.1002/asi.5090140304}{\emph{New factors in
  the evaluation of scientific literature through citation indexing}}.
\newblock American Documentation, 14~(3), pp. 195--201, 1963.
\newblock \href {http://dx.doi.org/10.1002/asi.5090140304}
  {\normalcolor\path{doi:10.1002/asi.5090140304}}.

\bibitem[\protect\citeauthoryear{Gleich
  et~al.}{2007}]{gleich2007-pagerank-deriv}
D.~F. \textsc{Gleich}, P.~\textsc{Glynn}, G.~H. \textsc{Golub}, and
  C.~\textsc{Greif}.
\newblock \href{http://drops.dagstuhl.de/opus/volltexte/2007/1061}{\emph{Three
  results on the {PageRank} vector: eigenstructure, sensitivity, and the
  derivative}}.
\newblock In \emph{Web Information Retrieval and Linear Algebra Algorithms}.
  2007.

\bibitem[\protect\citeauthoryear{Gleich et~al.}{2010}]{gleich2010-inner-outer}
D.~F. \textsc{Gleich}, A.~P. \textsc{Gray}, C.~\textsc{Greif}, and
  T.~\textsc{Lau}.
\newblock \href{http://dx.doi.org/10.1137/080727397}{\emph{An inner-outer
  iteration for {PageRank}}}.
\newblock SIAM Journal of Scientific Computing, 32~(1), pp. 349--371, 2010.
\newblock \href {http://dx.doi.org/10.1137/080727397}
  {\normalcolor\path{doi:10.1137/080727397}}.

\bibitem[\protect\citeauthoryear{Gleich and
  Mahoney}{2014}]{Gleich-2014-alg-anti-diff}
D.~F. \textsc{Gleich} and M.~M. \textsc{Mahoney}.
\newblock \emph{Algorithmic anti-differentiation: A case study with min-cuts,
  spectral, and flow}.
\newblock In \emph{Proceedings of the International Conference on Machine
  Learning (ICML)}. 2014.

\bibitem[\protect\citeauthoryear{Gleich and
  Polito}{2007}]{gleich2007-approximate}
D.~F. \textsc{Gleich} and M.~\textsc{Polito}.
\newblock
  \href{http://dx.doi.org/10.1080/15427951.2006.10129128}{\emph{Approximating
  personalized {PageRank} with minimal use of webgraph data}}.
\newblock Internet Mathematics, 3~(3), pp. 257--294, 2007.
\newblock \href {http://dx.doi.org/10.1080/15427951.2006.10129128}
  {\normalcolor\path{doi:10.1080/15427951.2006.10129128}}.

\bibitem[\protect\citeauthoryear{Gleich and
  Rossi}{2014}]{Gleich-2014-dynamic-pagerank}
D.~F. \textsc{Gleich} and R.~A. \textsc{Rossi}.
\newblock \href{http://dx.doi.org/10.1080/15427951.2013.814092}{\emph{A
  dynamical system for {PageRank} with time-dependent teleportation}}.
\newblock Internet Mathematics, 10~(1--2), pp. 188--217, 2014.
\newblock \href {http://dx.doi.org/10.1080/15427951.2013.814092}
  {\normalcolor\path{doi:10.1080/15427951.2013.814092}}.

\bibitem[\protect\citeauthoryear{Golub and Greif}{2006}]{greif2006-arnoldi}
G.~\textsc{Golub} and C.~\textsc{Greif}.
\newblock \href{http://dx.doi.org/10.1007/s10543-006-0091-y}{\emph{An
  {Arnoldi}-type algorithm for computing {PageRank}}}.
\newblock BIT Numerical Mathematics, 46~(4), pp. 759--771, 2006.
\newblock \href {http://dx.doi.org/10.1007/s10543-006-0091-y}
  {\normalcolor\path{doi:10.1007/s10543-006-0091-y}}.

\bibitem[\protect\citeauthoryear{Gori and Pucci}{2007}]{Gori2007-itemrank}
M.~\textsc{Gori} and A.~\textsc{Pucci}.
\newblock
  \href{http://www.aaai.org/Papers/IJCAI/2007/IJCAI07-444.pdf}{\emph{{ItemRank}:
  a random-walk based scoring algorithm for recommender engines}}.
\newblock In \emph{IJCAI'07: Proceedings of the 20th international joint
  conference on Artifical intelligence}, pp. 2766--2771. 2007.

\bibitem[\protect\citeauthoryear{Govan
  et~al.}{2008}]{govan2008-pagerank-football}
A.~Y. \textsc{Govan}, C.~D. \textsc{Meyer}, and R.~\textsc{Albright}.
\newblock
  \href{http://meyer.math.ncsu.edu/Meyer/PS_Files/SASGF08RankingPaper.pdf}{\emph{Generalizing
  {Google's} {PageRank} to rank national football league teams}}.
\newblock In \emph{SAS Global Forum 2008}. 2008.

\bibitem[\protect\citeauthoryear{Gy{\"o}ngyi
  et~al.}{2004}]{gyongyi2004-trustrank}
Z.~\textsc{Gy{\"o}ngyi}, H.~\textsc{Garcia-Molina}, and J.~\textsc{Pedersen}.
\newblock
  \href{http://i.stanford.edu/~zoltan/publications/gyongyi2004combating.pdf}{\emph{Combating
  web spam with {TrustRank}}}.
\newblock In \emph{Proceedings of the 30th International Very Large Database
  Conference}. 2004.

\bibitem[\protect\citeauthoryear{Haveliwala}{2002}]{haveliwala2002-topic-pagerank}
T.~H. \textsc{Haveliwala}.
\newblock
  \href{http://citeseer.ist.psu.edu/haveliwala02topicsensitive.html}{\emph{Topic-sensitive
  {PageRank}}}.
\newblock In \emph{Proceedings of the 11th international conference on the
  World Wide Web}. 2002.

\bibitem[\protect\citeauthoryear{Higham}{2005}]{higham2005-pinball}
D.~J. \textsc{Higham}.
\newblock \href{http://dx.doi.org/10.1016/j.aml.2005.02.020}{\emph{{Google}
  {PageRank} as mean playing time for pinball on the reverse web}}.
\newblock Applied Mathematics Letters, 18~(12), pp. 1359 -- 1362, 2005.
\newblock \href {http://dx.doi.org/10.1016/j.aml.2005.02.020}
  {\normalcolor\path{doi:10.1016/j.aml.2005.02.020}}.

\bibitem[\protect\citeauthoryear{Horn and
  Serra-Capizzano}{2007}]{horn2008-parametric-google}
R.~A. \textsc{Horn} and S.~\textsc{Serra-Capizzano}.
\newblock \href{http://projecteuclid.org/euclid.im/1227025007}{\emph{A general
  setting for the parametric {Google} matrix}}.
\newblock Internet Mathematics, 3~(4), pp. 385--411, 2007.

\bibitem[\protect\citeauthoryear{Hotho et~al.}{2006}]{Hotho2006-folksonomies}
A.~\textsc{Hotho}, R.~\textsc{J\"{a}schke}, C.~\textsc{Schmitz}, and
  G.~\textsc{Stumme}.
\newblock \href{http://dx.doi.org/10.1007/11762256_31}{\emph{Information
  retrieval in folksonomies: Search and ranking}}.
\newblock In \emph{Proceedings of the 3rd European Semantic Web Conference},
  pp. 411--426. 2006.
\newblock \href {http://dx.doi.org/10.1007/11762256_31}
  {\normalcolor\path{doi:10.1007/11762256_31}}.

\bibitem[\protect\citeauthoryear{Huberman
  et~al.}{1998}]{huberman1998-surfing-behavior}
B.~A. \textsc{Huberman}, P.~L.~T. \textsc{Pirolli}, J.~E. \textsc{Pitkow}, and
  R.~M. \textsc{Lukose}.
\newblock \href{http://dx.doi.org/10.1126/science.280.5360.95}{\emph{Strong
  regularities in {W}orld {W}ide {W}eb surfing}}.
\newblock Science, 280~(5360), pp. 95--97, 1998.
\newblock \href {http://dx.doi.org/10.1126/science.280.5360.95}
  {\normalcolor\path{doi:10.1126/science.280.5360.95}}.

\bibitem[\protect\citeauthoryear{Jain and Pantel}{2010}]{Jain-2010-FactRank}
A.~\textsc{Jain} and P.~\textsc{Pantel}.
\newblock
  \href{http://dl.acm.org/citation.cfm?id=1873781.1873838}{\emph{{FactRank}:
  Random walks on a web of facts}}.
\newblock In \emph{Proceedings of the 23rd International Conference on
  Computational Linguistics}, pp. 501--509. 2010.

\bibitem[\protect\citeauthoryear{Java}{2007}]{Java2007-Twitter-blog}
A.~\textsc{Java}.
\newblock
  \href{http://ebiquity.umbc.edu/blogger/2007/04/19/twitter-social-network-analysis/}{\emph{Twitter
  social network analysis}}.
\newblock UMBC ebquity blog,
  \url{http://ebiquity.umbc.edu/blogger/2007/04/19/twitter-social-network-analysis/},
  2007.

\bibitem[\protect\citeauthoryear{Java et~al.}{2006}]{Java-2006-spread}
A.~\textsc{Java}, P.~\textsc{Kolari}, T.~\textsc{Finin}, , and
  T.~\textsc{Oates}.
\newblock
  \href{http://ebiquity.umbc.edu/paper/html/id/300/Modeling-the-Spread-of-Influence-on-the-Blogosphere}{\emph{Modeling
  the spread of influence on the blogosphere}}.
\newblock Technical Report UMBC TR-CS-06-03, University of Maryland, Baltimore,
  2006.

\bibitem[\protect\citeauthoryear{Jeh and Widom}{2002}]{jeh2002-simrank}
G.~\textsc{Jeh} and J.~\textsc{Widom}.
\newblock \href{http://dx.doi.org/10.1145/775047.775126}{\emph{{SimRank}: a
  measure of structural-context similarity}}.
\newblock In \emph{KDD '02: Proceedings of the eighth ACM SIGKDD international
  conference on Knowledge discovery and data mining}, pp. 538--543. 2002.
\newblock \href {http://dx.doi.org/10.1145/775047.775126}
  {\normalcolor\path{doi:10.1145/775047.775126}}.

\bibitem[\protect\citeauthoryear{Jezek et~al.}{2008}]{Jezek-2008-citation}
K.~\textsc{Jezek}, D.~\textsc{Fiala}, and J.~\textsc{Steinberger}.
\newblock
  \href{http://elpub.architexturez.net/doc/oai-elpub.id-351_elpub2008}{\emph{Exploration
  and evaluation of citation networks}}.
\newblock In \emph{Proceedings of the 12th International Conference on
  Electronic Publishing}, pp. 351--362. 2008.

\bibitem[\protect\citeauthoryear{Jiang}{2009}]{Jiang-2009-ranking-spaces}
B.~\textsc{Jiang}.
\newblock \href{http://dx.doi.org/10.1080/13658810802022822}{\emph{Ranking
  spaces for predicting human movement in an urban environment}}.
\newblock Int. J. Geogr. Inf. Sci., 23~(7), pp. 823--837, 2009.
\newblock \href {http://dx.doi.org/10.1080/13658810802022822}
  {\normalcolor\path{doi:10.1080/13658810802022822}}.

\bibitem[\protect\citeauthoryear{Jiang et~al.}{2008}]{Jiang-2008-predicting}
B.~\textsc{Jiang}, S.~\textsc{Zhao}, and J.~\textsc{Yin}.
\newblock
  \href{http://dx.doi.org/10.1088/1742-5468/2008/07/P07008}{\emph{Self-organized
  natural roads for predicting traffic flow: a sensitivity study}}.
\newblock Journal of Statistical Mechanics: Theory and Experiment, 2008~(07),
  p. P07008, 2008.
\newblock \href {http://dx.doi.org/10.1088/1742-5468/2008/07/P07008}
  {\normalcolor\path{doi:10.1088/1742-5468/2008/07/P07008}}.

\bibitem[\protect\citeauthoryear{Jiang et~al.}{2009}]{Jiang-2009-generank}
B.-B. \textsc{Jiang}, J.-G. \textsc{Wang}, J.-F. \textsc{Xiao}, and
  Y.~\textsc{Wang}.
\newblock \href{http://www.aporc.org/LNOR/11/OSB2009F42.pdf}{\emph{Gene
  prioritization for type 2 diabetes in tissue-speciﬁc protein interaction
  networks}}.
\newblock In \emph{Third International Symposium on Optimization and Systems
  Biology}. 2009.

\bibitem[\protect\citeauthoryear{Jing and Baluja}{2008}]{Jing-2008-VisualRank}
Y.~\textsc{Jing} and S.~\textsc{Baluja}.
\newblock \href{http://dx.doi.org/10.1109/TPAMI.2008.121}{\emph{{VisualRank}:
  Applying pagerank to large-scale image search}}.
\newblock Pattern Analysis and Machine Intelligence, IEEE Transactions on,
  30~(11), pp. 1877--1890, 2008.
\newblock \href {http://dx.doi.org/10.1109/TPAMI.2008.121}
  {\normalcolor\path{doi:10.1109/TPAMI.2008.121}}.

\bibitem[\protect\citeauthoryear{Jockers}{2012}]{Jockers-2012-authorrank}
M.~\textsc{Jockers}.
\newblock
  \href{http://www.dh2012.uni-hamburg.de/conference/programme/abstracts/computing-and-visualizing-the-19th-century-literary-genome/}{\emph{Computing
  and visualizing the 19th-century literary genome}}.
\newblock In \emph{Digital Humanities}. 2012.

\bibitem[\protect\citeauthoryear{Kamvar}{2010}]{Kamvar-2010-personalized}
S.~\textsc{Kamvar}.
\newblock \emph{Numerical Algorithms for Personalized Search in Self-organizing
  Information Networks}, Princeton University Press, 2010.

\bibitem[\protect\citeauthoryear{Kamvar et~al.}{2003}]{kamvar2003-blockrank}
S.~\textsc{Kamvar}, T.~\textsc{Haveliwala}, C.~\textsc{Manning}, and
  G.~\textsc{Golub}.
\newblock \href{http://ilpubs.stanford.edu:8090/579/}{\emph{Exploiting the
  block structure of the web for computing {PageRank}}}.
\newblock Technical Report 2003-17, Stanford InfoLab, 2003.

\bibitem[\protect\citeauthoryear{Katz}{1953}]{katz1953-status}
L.~\textsc{Katz}.
\newblock \href{http://dx.doi.org/10.1007/BF02289026}{\emph{A new status index
  derived from sociometric analysis}}.
\newblock Psychometrika, 18~(1), pp. 39--43, 1953.
\newblock \href {http://dx.doi.org/10.1007/BF02289026}
  {\normalcolor\path{doi:10.1007/BF02289026}}.

\bibitem[\protect\citeauthoryear{Keener}{1993}]{keener1993-ranking}
J.~P. \textsc{Keener}.
\newblock \href{http://dx.doi.org/10.1137/1035004}{\emph{The
  {Perron}-{Frobenius} theorem and the ranking of football teams}}.
\newblock SIAM Review, 35~(1), pp. 80--93, 1993.
\newblock \href {http://dx.doi.org/10.1137/1035004}
  {\normalcolor\path{doi:10.1137/1035004}}.

\bibitem[\protect\citeauthoryear{Kempe
  et~al.}{2003}]{Kempe-2003-social-influence}
D.~\textsc{Kempe}, J.~\textsc{Kleinberg}, and E.~\textsc{Tardos}.
\newblock \href{http://dx.doi.org/10.1145/956750.956769}{\emph{Maximizing the
  spread of influence through a social network}}.
\newblock In \emph{Proceedings of the ninth ACM SIGKDD international conference
  on Knowledge discovery and data mining}, pp. 137--146. 2003.
\newblock \href {http://dx.doi.org/10.1145/956750.956769}
  {\normalcolor\path{doi:10.1145/956750.956769}}.

\bibitem[\protect\citeauthoryear{Kempe et~al.}{2005}]{Kempe-2005-influence}
---{}---{}---.
\newblock \href{http://dx.doi.org/10.1007/11523468_91}{\emph{Influential nodes
  in a diffusion model for social networks}}.
\newblock In \emph{Proceedings of the 32nd International Conference on
  Automata, Languages and Programming}, pp. 1127--1138. 2005.
\newblock \href {http://dx.doi.org/10.1007/11523468_91}
  {\normalcolor\path{doi:10.1007/11523468_91}}.

\bibitem[\protect\citeauthoryear{Kim et~al.}{2013}]{Kim-2013-monitorrank}
M.~\textsc{Kim}, R.~\textsc{Sumbaly}, and S.~\textsc{Shah}.
\newblock \href{http://dx.doi.org/10.1145/2465529.2465753}{\emph{Root cause
  detection in a service-oriented architecture}}.
\newblock In \emph{Proceedings of the ACM SIGMETRICS/international conference
  on Measurement and modeling of computer systems}, pp. 93--104. 2013.
\newblock \href {http://dx.doi.org/10.1145/2465529.2465753}
  {\normalcolor\path{doi:10.1145/2465529.2465753}}.

\bibitem[\protect\citeauthoryear{Kleinberg}{1999}]{kleinberg1999-hits}
J.~M. \textsc{Kleinberg}.
\newblock
  \href{citeseer.ist.psu.edu/kleinberg99authoritative.html}{\emph{Authoritative
  sources in a hyperlinked environment}}.
\newblock Journal of the ACM, 46~(5), pp. 604--632, 1999.

\bibitem[\protect\citeauthoryear{Kolda and Procopio}{2009}]{Kolda-2009-BadRank}
T.~G. \textsc{Kolda} and M.~J. \textsc{Procopio}.
\newblock
  \href{http://citeseerx.ist.psu.edu/viewdoc/download?doi=10.1.1.158.628&rep=rep1&type=pdf}{\emph{Generalized
  {BadRank} with graduated trust}}.
\newblock Technical Report SAND2009-6670, Sandia National Laboratories, 2009.

\bibitem[\protect\citeauthoryear{Kollias
  et~al.}{2013}]{Kollias-2013-multidamping}
G.~\textsc{Kollias}, E.~\textsc{Gallopoulos}, and A.~\textsc{Grama}.
\newblock \href{http://dx.doi.org/10.1109/TKDE.2013.15}{\emph{Surfing the
  network for ranking by multidamping}}.
\newblock Knowledge and Data Engineering, IEEE Transactions on, PP~(99), pp.
  1--1, 2013.
\newblock \href {http://dx.doi.org/10.1109/TKDE.2013.15}
  {\normalcolor\path{doi:10.1109/TKDE.2013.15}}.

\bibitem[\protect\citeauthoryear{Kollias et~al.}{2011}]{Kollias-2011-netalign}
G.~\textsc{Kollias}, S.~\textsc{Mohammadi}, and A.~\textsc{Grama}.
\newblock \href{http://dx.doi.org/10.1109/TKDE.2011.174}{\emph{Network
  similarity decomposition ({NSD}): A fast and scalable approach to network
  alignment}}.
\newblock Knowledge and Data Engineering, IEEE Transactions on, PP~(99), p.~1,
  2011.
\newblock \href {http://dx.doi.org/10.1109/TKDE.2011.174}
  {\normalcolor\path{doi:10.1109/TKDE.2011.174}}.

\bibitem[\protect\citeauthoryear{Kondor and
  Lafferty}{2002}]{Kondor-2002-diffusion}
R.~I. \textsc{Kondor} and J.~D. \textsc{Lafferty}.
\newblock
  \href{http://people.cs.uchicago.edu/~risi/papers/diffusion-kernels.pdf}{\emph{Diffusion
  kernels on graphs and other discrete input spaces}}.
\newblock In \emph{Proceedings of the Nineteenth International Conference on
  Machine Learning}, pp. 315--322. 2002.

\bibitem[\protect\citeauthoryear{Kontopoulou
  et~al.}{2012}]{Kontopoulou-2012-storyrank}
E.-M. \textsc{Kontopoulou}, M.~\textsc{Predari}, T.~\textsc{Kostakis}, and
  E.~\textsc{Gallopoulos}.
\newblock \href{http://dx.doi.org/10.1145/2309996.2310018}{\emph{Graph and
  matrix metrics to analyze ergodic literature for children}}.
\newblock In \emph{Proceedings of the 23rd ACM conference on Hypertext and
  social media}, pp. 133--142. 2012.
\newblock \href {http://dx.doi.org/10.1145/2309996.2310018}
  {\normalcolor\path{doi:10.1145/2309996.2310018}}.

\bibitem[\protect\citeauthoryear{Kosch\"{u}tzki
  et~al.}{2005}]{koschutzki2005-centrality}
D.~\textsc{Kosch\"{u}tzki}, K.~A. \textsc{Lehmann}, L.~\textsc{Peeters},
  S.~\textsc{Richter}, D.~\textsc{Tenfelde-Podehl}, , and
  O.~\textsc{Zlotowski}.
\newblock \href{http://dx.doi.org/10.1007/b106453}{\emph{Centrality indicies}}.
\newblock In \emph{Network Analysis: Methodological Foundations}, chapter~3,
  pp. 16--61. Springer, 2005.
\newblock \href {http://dx.doi.org/10.1007/b106453}
  {\normalcolor\path{doi:10.1007/b106453}}.

\bibitem[\protect\citeauthoryear{Kunegis and
  Lommatzsch}{2009}]{Kunegis-2009-learning-spectral}
J.~\textsc{Kunegis} and A.~\textsc{Lommatzsch}.
\newblock \href{http://dx.doi.org/10.1145/1553374.1553447}{\emph{Learning
  spectral graph transformations for link prediction}}.
\newblock In \emph{Proceedings of the 26th Annual International Conference on
  Machine Learning}, pp. 561--568. 2009.
\newblock \href {http://dx.doi.org/10.1145/1553374.1553447}
  {\normalcolor\path{doi:10.1145/1553374.1553447}}.

\bibitem[\protect\citeauthoryear{Kwak et~al.}{2010}]{Kwak2010-Twitter}
H.~\textsc{Kwak}, C.~\textsc{Lee}, H.~\textsc{Park}, and S.~\textsc{Moon}.
\newblock \href{http://dx.doi.org/10.1145/1772690.1772751}{\emph{{W}hat is
  {T}witter, a social network or a news media?}}
\newblock In \emph{WWW '10: Proceedings of the 19th international conference on
  World wide web}, pp. 591--600. 2010.
\newblock \href {http://dx.doi.org/10.1145/1772690.1772751}
  {\normalcolor\path{doi:10.1145/1772690.1772751}}.

\bibitem[\protect\citeauthoryear{Langville and
  Meyer}{2004}]{langville04-deeper}
A.~N. \textsc{Langville} and C.~D. \textsc{Meyer}.
\newblock
  \href{http://www.ams.org/msnmain?fn=130\&form=fullsearch\&pg4=MR\&s4=2111012}{\emph{Deeper
  inside {PageRank}}}.
\newblock Internet Mathematics, 1~(3), pp. 335--380, 2004.

\bibitem[\protect\citeauthoryear{Langville and
  Meyer}{2006}]{langville2006-book}
---{}---{}---.
\newblock \emph{{Google}'s {PageRank} and Beyond: The Science of Search Engine
  Rankings}, Princeton University Press, 2006.

\bibitem[\protect\citeauthoryear{Langville and
  Meyer}{2012}]{Langville-2012-book}
---{}---{}---.
\newblock \emph{Who's \#1? The Science of Rating and Ranking}, Princeton
  University Press, 2012.

\bibitem[\protect\citeauthoryear{Lee et~al.}{2007}]{lee2007-fast}
C.~P. \textsc{Lee}, G.~H. \textsc{Golub}, and S.~A. \textsc{Zenios}.
\newblock \emph{A two-stage algorithm for computing {PageRank} and multistage
  generalizations}.
\newblock Internet Mathematics, 4~(4), pp. 299--327, 2007.

\bibitem[\protect\citeauthoryear{Liben-Nowell and
  Kleinberg}{2006}]{nowell2006-link-prediction}
D.~\textsc{Liben-Nowell} and J.~\textsc{Kleinberg}.
\newblock \href{http://dx.doi.org/10.1002/asi.20591}{\emph{The link-prediction
  problem for social networks}}.
\newblock Journal of the American Society of Information Science and
  Technology, 58~(7), pp. 1019--1031, 2006.
\newblock \href {http://dx.doi.org/10.1002/asi.20591}
  {\normalcolor\path{doi:10.1002/asi.20591}}.

\bibitem[\protect\citeauthoryear{Liu
  et~al.}{2005}]{Liu-2005-co-authorship-networks}
X.~\textsc{Liu}, J.~\textsc{Bollen}, M.~L. \textsc{Nelson}, and
  H.~\textsc{Van~de Sompel}.
\newblock
  \href{http://dx.doi.org/10.1016/j.ipm.2005.03.012}{\emph{Co-authorship
  networks in the digital library research community}}.
\newblock Inf. Process. Manage., 41~(6), pp. 1462--1480, 2005.
\newblock \href {http://dx.doi.org/10.1016/j.ipm.2005.03.012}
  {\normalcolor\path{doi:10.1016/j.ipm.2005.03.012}}.

\bibitem[\protect\citeauthoryear{Liu et~al.}{2008}]{liu2008-browserank}
Y.~\textsc{Liu}, B.~\textsc{Gao}, T.-Y. \textsc{Liu}, Y.~\textsc{Zhang},
  Z.~\textsc{Ma}, S.~\textsc{He}, and H.~\textsc{Li}.
\newblock \href{http://dx.doi.org/10.1145/1390334.1390412}{\emph{{BrowseRank}:
  letting web users vote for page importance}}.
\newblock In \emph{SIGIR '08: Proceedings of the 31st annual international ACM
  SIGIR conference on Research and development in information retrieval}, pp.
  451--458. 2008.
\newblock \href {http://dx.doi.org/10.1145/1390334.1390412}
  {\normalcolor\path{doi:10.1145/1390334.1390412}}.

\bibitem[\protect\citeauthoryear{L\"{u} et~al.}{2011}]{Lu-2011-LeaderRank}
L.~\textsc{L\"{u}}, Y.-C. \textsc{Zhang}, C.~H. \textsc{Yeung}, and
  T.~\textsc{Zhou}.
\newblock \href{http://dx.doi.org/10.1371/journal.pone.0021202}{\emph{Leaders
  in social networks, the {Delicious} case}}.
\newblock PLoS ONE, 6~(6), p. e21202, 2011.
\newblock \href {http://dx.doi.org/10.1371/journal.pone.0021202}
  {\normalcolor\path{doi:10.1371/journal.pone.0021202}}.

\bibitem[\protect\citeauthoryear{Ma et~al.}{2008}]{Ma-2008-PageRank-citation}
N.~\textsc{Ma}, J.~\textsc{Guan}, and Y.~\textsc{Zhao}.
\newblock
  \href{http://dx.doi.org/http://dx.doi.org/10.1016/j.ipm.2007.06.006}{\emph{Bringing
  {PageRank} to the citation analysis}}.
\newblock Information Processing \& Management, 44~(2), pp. 800--810, 2008.
\newblock <ce:title>Evaluating Exploratory Search Systems</ce:title>
  <ce:title>Digital Libraries in the Context of Users’ Broader
  Activities</ce:title>.
\newblock \href {http://dx.doi.org/http://dx.doi.org/10.1016/j.ipm.2007.06.006}
  {\normalcolor\path{doi:http://dx.doi.org/10.1016/j.ipm.2007.06.006}}.

\bibitem[\protect\citeauthoryear{Mahoney et~al.}{2012}]{Mahoney-2012-local}
M.~W. \textsc{Mahoney}, L.~\textsc{Orecchia}, and N.~K. \textsc{Vishnoi}.
\newblock
  \href{http://www.jmlr.org/papers/volume13/mahoney12a/mahoney12a.pdf}{\emph{A
  local spectral method for graphs: With applications to improving graph
  partitions and exploring data graphs locally}}.
\newblock Journal of Machine Learning Research, 13, p. 2339−2365, 2012.

\bibitem[\protect\citeauthoryear{Meng}{2009}]{Meng-2009-bookrank}
X.~\textsc{Meng}.
\newblock
  \href{http://cads.stanford.edu/projects/presentations/2009visit/bookrank.pdf}{\emph{Computing
  {BookRank} via social cataloging}}.
\newblock 2009.
\newblock
  \url{http://cads.stanford.edu/projects/presentations/2009visit/bookrank.pdf}.

\bibitem[\protect\citeauthoryear{Meyer}{2000}]{meyer2000-book}
C.~D. \textsc{Meyer}.
\newblock \emph{Matrix analysis and applied linear algebra}, Society for
  Industrial and Applied Mathematics, Philadelphia, PA, USA, 2000.

\bibitem[\protect\citeauthoryear{Miller et~al.}{2001}]{miller2001-expohits}
J.~C. \textsc{Miller}, G.~\textsc{Rae}, F.~\textsc{Schaefer}, L.~A.
  \textsc{Ward}, T.~\textsc{LoFaro}, and A.~\textsc{Farahat}.
\newblock \href{http://dx.doi.org/10.1145/383952.384086}{\emph{Modifications of
  {Kleinberg}'s {HITS} algorithm using matrix exponentiation and web log
  records}}.
\newblock In \emph{SIGIR '01: Proceedings of the 24th annual international ACM
  SIGIR conference on Research and development in information retrieval}, pp.
  444--445. 2001.
\newblock \href {http://dx.doi.org/10.1145/383952.384086}
  {\normalcolor\path{doi:10.1145/383952.384086}}.

\bibitem[\protect\citeauthoryear{Mooney
  et~al.}{2012}]{Mooney-2012-pagerank-chemistry}
B.~L. \textsc{Mooney}, L.~R. \textsc{Corrales}, and A.~E. \textsc{Clark}.
\newblock \href{http://dx.doi.org/10.1002/jcc.22917}{\emph{Molecularnetworks:
  An integrated graph theoretic and data mining tool to explore solvent
  organization in molecular simulation}}.
\newblock Journal of Computational Chemistry, 33~(8), pp. 853--860, 2012.
\newblock \href {http://dx.doi.org/10.1002/jcc.22917}
  {\normalcolor\path{doi:10.1002/jcc.22917}}.

\bibitem[\protect\citeauthoryear{Morrison et~al.}{2005}]{morrison2005-generank}
J.~L. \textsc{Morrison}, R.~\textsc{Breitling}, D.~J. \textsc{Higham}, and
  D.~R. \textsc{Gilbert}.
\newblock \href{http://dx.doi.org/10.1186/1471-2105-6-233}{\emph{{GeneRank}:
  using search engine technology for the analysis of microarray experiments}}.
\newblock BMC Bioinformatics, 6~(1), p. 233, 2005.
\newblock \href {http://dx.doi.org/10.1186/1471-2105-6-233}
  {\normalcolor\path{doi:10.1186/1471-2105-6-233}}.

\bibitem[\protect\citeauthoryear{Najork et~al.}{2007}]{najork2007-hits}
M.~A. \textsc{Najork}, H.~\textsc{Zaragoza}, and M.~J. \textsc{Taylor}.
\newblock \href{http://dx.doi.org/10.1145/1277741.1277823}{\emph{{HITS} on the
  web: {How} does it compare?}}
\newblock In \emph{Proceedings of the 30th annual international ACM SIGIR
  conference on Research and Development in information retrieval
  ({SIGIR2007})}, pp. 471--478. 2007.
\newblock \href {http://dx.doi.org/10.1145/1277741.1277823}
  {\normalcolor\path{doi:10.1145/1277741.1277823}}.

\bibitem[\protect\citeauthoryear{Nie et~al.}{2005}]{Nie-2005-PopRank}
Z.~\textsc{Nie}, Y.~\textsc{Zhang}, J.-R. \textsc{Wen}, and W.-Y. \textsc{Ma}.
\newblock \href{http://dx.doi.org/10.1145/1060745.1060828}{\emph{Object-level
  ranking: Bringing order to web objects}}.
\newblock In \emph{Proceedings of the 14th International Conference on World
  Wide Web}, pp. 567--574. 2005.
\newblock \href {http://dx.doi.org/10.1145/1060745.1060828}
  {\normalcolor\path{doi:10.1145/1060745.1060828}}.

\bibitem[\protect\citeauthoryear{Osipenko}{2007}]{Osipenko-2007-dynamical-systems-graphs}
G.~\textsc{Osipenko}.
\newblock \href{http://dx.doi.org/10.1007/3-540-35593-6}{\emph{Dynamical
  systems, graphs, and algorithms}}, Springer, 2007.
\newblock \href {http://dx.doi.org/10.1007/3-540-35593-6}
  {\normalcolor\path{doi:10.1007/3-540-35593-6}}.

\bibitem[\protect\citeauthoryear{Page et~al.}{1999}]{page1999-pagerank}
L.~\textsc{Page}, S.~\textsc{Brin}, R.~\textsc{Motwani}, and
  T.~\textsc{Winograd}.
\newblock \href{http://dbpubs.stanford.edu:8090/pub/1999-66}{\emph{The
  {PageRank} citation ranking: Bringing order to the web}}.
\newblock Technical Report 1999-66, Stanford University, 1999.

\bibitem[\protect\citeauthoryear{Pan
  et~al.}{2004}]{pan2004-cross-modal-discovery}
J.-Y. \textsc{Pan}, H.-J. \textsc{Yang}, C.~\textsc{Faloutsos}, and
  P.~\textsc{Duygulu}.
\newblock \href{http://dx.doi.org/10.1145/1014052.1014135}{\emph{Automatic
  multimedia cross-modal correlation discovery}}.
\newblock In \emph{KDD '04: Proceedings of the tenth ACM SIGKDD international
  conference on Knowledge discovery and data mining}, pp. 653--658. 2004.
\newblock \href {http://dx.doi.org/10.1145/1014052.1014135}
  {\normalcolor\path{doi:10.1145/1014052.1014135}}.

\bibitem[\protect\citeauthoryear{Pinski and
  Narin}{1976}]{Pinski-1976-influence}
G.~\textsc{Pinski} and F.~\textsc{Narin}.
\newblock \href{http://dx.doi.org/10.1016/0306-4573(76)90048-0}{\emph{Citation
  influence for journal aggregates of scientific publications: Theory, with
  application to the literature of physics}}.
\newblock Information Processing \& Management, 12~(5), pp. 297--312, 1976.
\newblock \href {http://dx.doi.org/10.1016/0306-4573(76)90048-0}
  {\normalcolor\path{doi:10.1016/0306-4573(76)90048-0}}.

\bibitem[\protect\citeauthoryear{Pothen
  et~al.}{1990}]{Pothen-1990-partitioning}
A.~\textsc{Pothen}, H.~D. \textsc{Simon}, and K.-P. \textsc{Liou}.
\newblock \href{http://dx.doi.org/10.1137/0611030}{\emph{Partitioning sparse
  matrices with eigenvectors of graphs}}.
\newblock SIAM J. Matrix Anal. Appl., 11, pp. 430--452, 1990.
\newblock \href {http://dx.doi.org/10.1137/0611030}
  {\normalcolor\path{doi:10.1137/0611030}}.

\bibitem[\protect\citeauthoryear{Radicchi}{2011}]{Radicchi-2011-tennisrank}
F.~\textsc{Radicchi}.
\newblock \href{http://dx.doi.org/10.1371/journal.pone.0017249}{\emph{Who is
  the best player ever? a complex network analysis of the history of
  professional tennis}}.
\newblock PLoS ONE, 6~(2), p. e17249, 2011.
\newblock \href {http://dx.doi.org/10.1371/journal.pone.0017249}
  {\normalcolor\path{doi:10.1371/journal.pone.0017249}}.

\bibitem[\protect\citeauthoryear{Schaeffer}{2007}]{Schaeffer-2007-clustering}
S.~E. \textsc{Schaeffer}.
\newblock \href{http://dx.doi.org/10.1016/j.cosrev.2007.05.001}{\emph{Graph
  clustering}}.
\newblock Computer Science Review, 1~(1), pp. 27--64, 2007.
\newblock \href {http://dx.doi.org/10.1016/j.cosrev.2007.05.001}
  {\normalcolor\path{doi:10.1016/j.cosrev.2007.05.001}}.

\bibitem[\protect\citeauthoryear{Schlote et~al.}{2012}]{Schlote-2012-road-rank}
A.~\textsc{Schlote}, E.~\textsc{Crisostomi}, S.~\textsc{Kirkland}, and
  R.~\textsc{Shorten}.
\newblock \href{http://dx.doi.org/10.1080/00207179.2012.668716}{\emph{Traffic
  modelling framework for electric vehicles}}.
\newblock International Journal of Control, 85~(7), pp. 880--897, 2012.
\newblock \href {http://dx.doi.org/10.1080/00207179.2012.668716}
  {\normalcolor\path{doi:10.1080/00207179.2012.668716}}.

\bibitem[\protect\citeauthoryear{Serra-Capizzano}{2005}]{serra2005-jcf-pagerank}
S.~\textsc{Serra-Capizzano}.
\newblock \href{http://dx.doi.org/10.1137/S0895479804441407}{\emph{Jordan
  canonical form of the {Google} matrix: A potential contribution to the
  {PageRank} computation}}.
\newblock SIAM Journal on Matrix Analysis and Applications, 27~(2), pp.
  305--312, 2005.
\newblock \href {http://dx.doi.org/10.1137/S0895479804441407}
  {\normalcolor\path{doi:10.1137/S0895479804441407}}.

\bibitem[\protect\citeauthoryear{Shepelyansky and
  Zhirov}{2010}]{Shepelyansky2010-Ulam}
D.~L. \textsc{Shepelyansky} and O.~V. \textsc{Zhirov}.
\newblock \href{http://dx.doi.org/10.1103/PhysRevE.81.036213}{\emph{Google
  matrix, dynamical attractors, and ulam networks}}.
\newblock Phys. Rev. E, 81~(3), p. 036213, 2010.
\newblock \href {http://dx.doi.org/10.1103/PhysRevE.81.036213}
  {\normalcolor\path{doi:10.1103/PhysRevE.81.036213}}.

\bibitem[\protect\citeauthoryear{Singh
  et~al.}{2007}]{singh2007-matching-topology}
R.~\textsc{Singh}, J.~\textsc{Xu}, and B.~\textsc{Berger}.
\newblock \href{http://dx.doi.org/10.1007/978-3-540-71681-5_2}{\emph{Pairwise
  global alignment of protein interaction networks by matching neighborhood
  topology}}.
\newblock In \emph{Proceedings of the 11th Annual International Conference on
  Research in Computational Molecular Biology (RECOMB)}, pp. 16--31. 2007.
\newblock \href {http://dx.doi.org/10.1007/978-3-540-71681-5_2}
  {\normalcolor\path{doi:10.1007/978-3-540-71681-5_2}}.

\bibitem[\protect\citeauthoryear{Sobek}{2003}]{Sobek-2002-PR0}
M.~\textsc{Sobek}.
\newblock \href{http://pr.efactory.de/e-pr0.shtml}{\emph{{PR0} - {Google's}
  {PageRank} 0 penalty}}.
\newblock Online., 2003.
\newblock \url{http://pr.efactory.de/e-pr0.shtml}. Accessed 2013-09-19.

\bibitem[\protect\citeauthoryear{Song
  et~al.}{2012}]{Song-2012-Query-suggestion}
Y.~\textsc{Song}, D.~\textsc{Zhou}, and L.-w. \textsc{He}.
\newblock \href{http://dx.doi.org/10.1145/2124295.2124339}{\emph{Query
  suggestion by constructing term-transition graphs}}.
\newblock In \emph{Proceedings of the Fifth ACM International Conference on Web
  Search and Data Mining}, pp. 353--362. 2012.
\newblock \href {http://dx.doi.org/10.1145/2124295.2124339}
  {\normalcolor\path{doi:10.1145/2124295.2124339}}.

\bibitem[\protect\citeauthoryear{Sporns}{2011}]{Sporns-2011-book}
O.~\textsc{Sporns}.
\newblock \emph{Networks of the Brain}, The MIT Press, 2011.

\bibitem[\protect\citeauthoryear{Stewart}{1994}]{Stewart-1994-markov}
W.~J. \textsc{Stewart}.
\newblock \emph{Introduction to the Numerical Solution of Markov Chains},
  Princeton University Press, 1994.

\bibitem[\protect\citeauthoryear{Sylvester}{1878}]{Sylvester-1878-Graph}
J.~J. \textsc{Sylvester}.
\newblock \href{http://dx.doi.org/10.1038/017284a0}{\emph{Chemistry and
  algebra}}.
\newblock Nature, 17, p. 284, 1878.
\newblock \href {http://dx.doi.org/10.1038/017284a0}
  {\normalcolor\path{doi:10.1038/017284a0}}.

\bibitem[\protect\citeauthoryear{Tomlin}{2003}]{Tomlin-2003-traffic-rank}
J.~A. \textsc{Tomlin}.
\newblock \href{http://dx.doi.org/10.1145/775152.775202}{\emph{A new paradigm
  for ranking pages on the world wide web}}.
\newblock In \emph{Proceedings of the 12th International Conference on World
  Wide Web}, pp. 350--355. 2003.
\newblock \href {http://dx.doi.org/10.1145/775152.775202}
  {\normalcolor\path{doi:10.1145/775152.775202}}.

\bibitem[\protect\citeauthoryear{Tong
  et~al.}{2006}]{tong2006-random-walk-restart}
H.~\textsc{Tong}, C.~\textsc{Faloutsos}, and J.-Y. \textsc{Pan}.
\newblock \href{http://dx.doi.org/10.1109/ICDM.2006.70}{\emph{Fast random walk
  with restart and its applications}}.
\newblock In \emph{ICDM '06: Proceedings of the Sixth International Conference
  on Data Mining}, pp. 613--622. 2006.
\newblock \href {http://dx.doi.org/10.1109/ICDM.2006.70}
  {\normalcolor\path{doi:10.1109/ICDM.2006.70}}.

\bibitem[\protect\citeauthoryear{Vigna}{2005}]{vigna2005-trurank}
S.~\textsc{Vigna}.
\newblock \href{http://dx.doi.org/10.1145/1062745.1062826}{\emph{{TruRank}:
  Taking {PageRank} to the limit}}.
\newblock In \emph{WWW '05: Special interest tracks and posters of the 14th
  international conference on World Wide Web}, pp. 976--977. 2005.
\newblock \href {http://dx.doi.org/10.1145/1062745.1062826}
  {\normalcolor\path{doi:10.1145/1062745.1062826}}.

\bibitem[\protect\citeauthoryear{Vigna}{2009}]{Vigna-2009-spectral}
---{}---{}---.
\newblock \href{http://arxiv.org/abs/0912.0238}{\emph{Spectral ranking}}.
\newblock arXiv, cs.IR, p. 0912.0238, 2009.

\bibitem[\protect\citeauthoryear{Voevodski
  et~al.}{2009}]{Voevodski-2009-spectral}
K.~\textsc{Voevodski}, S.-H. \textsc{Teng}, and Y.~\textsc{Xia}.
\newblock \href{http://dx.doi.org/10.1186/1752-0509-3-112}{\emph{Spectral
  affinity in protein networks}}.
\newblock BMC Systems Biology, 3~(1), p. 112, 2009.
\newblock \href {http://dx.doi.org/10.1186/1752-0509-3-112}
  {\normalcolor\path{doi:10.1186/1752-0509-3-112}}.

\bibitem[\protect\citeauthoryear{Walker et~al.}{2007}]{Walker-2007-citerank}
D.~\textsc{Walker}, H.~\textsc{Xie}, K.-K. \textsc{Yan}, and
  S.~\textsc{Maslov}.
\newblock
  \href{http://dx.doi.org/10.1088/1742-5468/2007/06/P06010}{\emph{Ranking
  scientific publications using a model of network traffic}}.
\newblock Journal of Statistical Mechanics: Theory and Experiment, 2007~(06),
  p. P06010, 2007.
\newblock \href {http://dx.doi.org/10.1088/1742-5468/2007/06/P06010}
  {\normalcolor\path{doi:10.1088/1742-5468/2007/06/P06010}}.

\bibitem[\protect\citeauthoryear{Wang et~al.}{2013}]{Wang-2013-ProPPR}
W.~Y. \textsc{Wang}, K.~\textsc{Mazaitis}, and W.~W. \textsc{Cohen}.
\newblock \href{http://dx.doi.org/10.1145/2505515.2505573}{\emph{Programming
  with personalized {PageRank}: A locally groundable first-order probabilistic
  logic}}.
\newblock In \emph{Proceedings of the 22Nd ACM International Conference on
  Conference on Information and Knowledge Management}, pp. 2129--2138. 2013.
\newblock \href {http://dx.doi.org/10.1145/2505515.2505573}
  {\normalcolor\path{doi:10.1145/2505515.2505573}}.

\bibitem[\protect\citeauthoryear{Weng et~al.}{2010}]{Weng2010-TwitterRank}
J.~\textsc{Weng}, E.-P. \textsc{Lim}, J.~\textsc{Jiang}, and Q.~\textsc{He}.
\newblock \href{http://dx.doi.org/10.1145/1718487.1718520}{\emph{{TwitterRank}:
  finding topic-sensitive influential twitterers}}.
\newblock In \emph{WSDM '10: Proceedings of the third ACM international
  conference on Web search and data mining}, pp. 261--270. 2010.
\newblock \href {http://dx.doi.org/10.1145/1718487.1718520}
  {\normalcolor\path{doi:10.1145/1718487.1718520}}.

\bibitem[\protect\citeauthoryear{West et~al.}{2010}]{West-2010-eigenfactor}
J.~D. \textsc{West}, T.~C. \textsc{Bergstrom}, and C.~T. \textsc{Bergstrom}.
\newblock \href{http://crl.acrl.org/content/71/3/236.abstract}{\emph{The
  {Eigenfactor} metrics: A network approach to assessing scholarly journals}}.
\newblock College \& Research Libraries, 71~(3), pp. 236--244, 2010.
\newblock \href
  {http://arxiv.org/abs/http://crl.acrl.org/content/71/3/236.full.pdf+html}
  {\normalcolor\path{arXiv:http://crl.acrl.org/content/71/3/236.full.pdf+html}}.

\bibitem[\protect\citeauthoryear{Winter et~al.}{2012}]{Winter-2012-CancerRank}
C.~\textsc{Winter}, G.~\textsc{Kristiansen}, S.~\textsc{Kersting},
  J.~\textsc{Roy}, D.~\textsc{Aust}, T.~\textsc{Kn\"{o}sel},
  P.~\textsc{Rümmele}, B.~\textsc{Jahnke}, V.~\textsc{Hentrich},
  F.~\textsc{R\"{u}ckert}, M.~\textsc{Niedergethmann}, W.~\textsc{Weichert},
  M.~\textsc{Bahra}, H.~J. \textsc{Schlitt}, U.~\textsc{Settmacher},
  H.~\textsc{Friess}, M.~\textsc{B\"{u}chler}, H.-D. \textsc{Saeger},
  M.~\textsc{Schroeder}, C.~\textsc{Pilarsky}, and R.~\textsc{Grützmann}.
\newblock \href{http://dx.doi.org/10.1371/journal.pcbi.1002511}{\emph{Google
  goes cancer: Improving outcome prediction for cancer patients by
  network-based ranking of marker genes}}.
\newblock PLoS Comput Biol, 8~(5), p. e1002511, 2012.
\newblock \href {http://dx.doi.org/10.1371/journal.pcbi.1002511}
  {\normalcolor\path{doi:10.1371/journal.pcbi.1002511}}.

\bibitem[\protect\citeauthoryear{Wissner-Gross}{2006}]{wissner-gross2006-wikipedia-reading}
A.~D. \textsc{Wissner-Gross}.
\newblock \emph{Preparation of topical reading lists from the link structure of
  {Wikipedia}}.
\newblock In \emph{ICALT '06: Proceedings of the Sixth IEEE International
  Conference on Advanced Learning Technologies}, pp. 825--829. 2006.

\bibitem[\protect\citeauthoryear{Xing and
  Ghorbani}{2004}]{Xing-2004-weighted-pagerank}
W.~\textsc{Xing} and A.~\textsc{Ghorbani}.
\newblock \href{http://dx.doi.org/10.1109/DNSR.2004.1344743}{\emph{Weighted
  {PageRank} algorithm}}.
\newblock In \emph{Communication Networks and Services Research, 2004.
  Proceedings. Second Annual Conference on}, pp. 305--314. 2004.
\newblock \href {http://dx.doi.org/10.1109/DNSR.2004.1344743}
  {\normalcolor\path{doi:10.1109/DNSR.2004.1344743}}.

\bibitem[\protect\citeauthoryear{Yang et~al.}{2007}]{Yang-2007-DiffusionRank}
H.~\textsc{Yang}, I.~\textsc{King}, and M.~R. \textsc{Lyu}.
\newblock
  \href{http://dx.doi.org/10.1145/1277741.1277815}{\emph{{DiffusionRank}: a
  possible penicillin for web spamming}}.
\newblock In \emph{Proceedings of the 30th annual international ACM SIGIR
  conference on Research and development in information retrieval}, pp.
  431--438. 2007.
\newblock \href {http://dx.doi.org/10.1145/1277741.1277815}
  {\normalcolor\path{doi:10.1145/1277741.1277815}}.

\bibitem[\protect\citeauthoryear{Zhirov et~al.}{2010}]{Zhirov-2010-wikipedia}
A.~O. \textsc{Zhirov}, O.~V. \textsc{Zhirov}, and D.~L. \textsc{Shepelyansky}.
\newblock
  \href{http://dx.doi.org/10.1140/epjb/e2010-10500-7}{\emph{Two-dimensional
  ranking of {Wikipedia} articles}}.
\newblock The European Physical Journal B, 77~(4), pp. 523--531, 2010.
\newblock \href {http://dx.doi.org/10.1140/epjb/e2010-10500-7}
  {\normalcolor\path{doi:10.1140/epjb/e2010-10500-7}}.

\bibitem[\protect\citeauthoryear{Zhou et~al.}{2003}]{Zhou-2003-semi-supervised}
D.~\textsc{Zhou}, O.~\textsc{Bousquet}, T.~N. \textsc{Lal}, J.~\textsc{Weston},
  and B.~\textsc{Sch{\"o}lkopf}.
\newblock
  \href{http://research.microsoft.com/en-us/um/people/denzho/papers/llgc.pdf}{\emph{Learning
  with local and global consistency}}.
\newblock In \emph{NIPS}. 2003.

\bibitem[\protect\citeauthoryear{Zhou
  et~al.}{2005}]{zhou2005-directed-learning}
D.~\textsc{Zhou}, J.~\textsc{Huang}, and B.~\textsc{Sch\"{o}lkopf}.
\newblock \href{http://dx.doi.org/10.1145/1102351.1102482}{\emph{Learning from
  labeled and unlabeled data on a directed graph}}.
\newblock In \emph{ICML '05: Proceedings of the 22nd International Conference
  on Machine Learning}, pp. 1036--1043. 2005.
\newblock \href {http://dx.doi.org/10.1145/1102351.1102482}
  {\normalcolor\path{doi:10.1145/1102351.1102482}}.

\bibitem[\protect\citeauthoryear{Zuo et~al.}{2011}]{Zuo-2011-centrality}
X.-N. \textsc{Zuo}, R.~\textsc{Ehmke}, M.~\textsc{Mennes},
  D.~\textsc{Imperati}, F.~X. \textsc{Castellanos}, O.~\textsc{Sporns}, and
  M.~P. \textsc{Milham}.
\newblock \href{http://dx.doi.org/10.1093/cercor/bhr269}{\emph{Network
  centrality in the human functional connectome}}.
\newblock Cerebral Cortex, 2011.
\newblock \href {http://dx.doi.org/10.1093/cercor/bhr269}
  {\normalcolor\path{doi:10.1093/cercor/bhr269}}.

\end{thebibliography}

\section*{Acknowledgments}
We acknowledge the following individuals for their discussions about this manuscript: Sebastiano Vigna, Amy Langville, Michael Saunders, Chen Greif, Des Higham, and Stratis Gallopoulos, as well as Kyle Kloster for carefully reading several early drafts. This work was supported in part by NSF CAREER award CCF-1149756.

\end{document}